%
%
%

\documentclass[submission, Phys]{SciPost}

\usepackage{graphicx}
\usepackage{dcolumn}
\usepackage{bm}
\usepackage{color}
\usepackage{physics}
\usepackage{epstopdf}
\usepackage{amsmath}
\usepackage{amsfonts}
\usepackage{amssymb}
\usepackage{mathrsfs}
\usepackage{amsthm}

\usepackage[caption=false,subrefformat=parens,labelformat=parens]{subfig}
\usepackage{stackengine}
\usepackage{float}
\usepackage{upgreek}

\usepackage{enumerate}
\newtheorem{theorem}{Theorem}
\newtheorem{lemma}{Lemma}

\newtheorem{proposition}{Proposition}

\begin{document}

\begin{center}{\Large \textbf{
Universal tradeoff relation between speed, uncertainty, and dissipation  in  nonequilibrium  stationary  states 
}}\end{center}

\begin{center}
Izaak Neri
\end{center}

\begin{center}
{\bf}  Department of Mathematics, King’s College London, Strand, London, WC2R 2LS, UK
\\
\end{center}

\begin{center}
\today
\end{center}


\section*{Abstract}
{\bf
We derive  universal thermodynamic  inequalities that bound  from below  the moments of   first-passage times of stochastic  currents in nonequilibrium stationary states of Markov jump processes  in the limit  where the two thresholds that define the  first-passage problem are large.    These     inequalities  describe  a  tradeoff between speed,   uncertainty, and dissipation in nonequilibrium processes, which are quantified, respectively, with the  moments of the  first-passage times of stochastic currents,  the splitting probability of the first-passage problem, and the mean entropy production rate.        Near equilibrium, the inequalities imply that mean first-passage times are lower bounded by the Van't Hoff-Arrhenius law, whereas far from thermal equilibrium the bounds describe a universal speed limit for rate processes.     When the    current is proportional to the  stochastic entropy production, then the bounds are equalities, a remarkable property that follows from the fact that the exponentiated negative entropy production is a martingale.     }

\vspace{10pt}
\noindent\rule{\textwidth}{1pt}
\tableofcontents\thispagestyle{fancy}
\noindent\rule{\textwidth}{1pt}
\vspace{10pt}

\section{Introduction}

  In thermal equilibrium transitions between  metastable states are activated  by    thermal fluctuations.  The  equilibrium transition rates satisfy the   Van't Hoff-Arrhenius law~\cite{hanggi1990reaction, mccann1999thermally}
\begin{equation}
k=  \frac{1}{\langle T\rangle} = \nu e^{-\frac{E_{\rm b}}{\mathsf{T}_{\rm env}}}, \label{eq:Arrh}
\end{equation} 
where the rate $k$ is the inverse of the mean first-passage time $\langle T\rangle$, $E_{\rm b}$ is the energy barrier that separates the two metastable states,  $\mathsf{T}_{\rm env}$ is the temperature of the environment, and $\nu$ is a prefactor that has been determined, among others, by  Kramers \cite{kramers1940brownian, hanggi1990reaction}.    

To speed up a process, an external agent  can drive a system out of equilibrium.  For example,  in Fig.~\ref{fig1Mx} we illustrate how external driving can increase the reaction rate in a nonequilibrium version of   Kramers' model \cite{kramers1940brownian}.     Other examples  are the reduced travel times of self-propelled particles~\cite{angelani2014first, malakar2018steady, dhar2019run, biswas2020first, walter2021first},    the activated escape of a particle from a metastable state \cite{PhysRevE.73.061109},  enhanced relaxation rates in biomolecular diffusion processes \cite{godec2016active},  and  enhanced  reaction rates in nonequilibrium chemical reactions \cite{loverdo2008enhanced, siggia2013decisions, desponds2020mechanism, biswas2021first}.  
       Since dissipation can increase the rate of a process, one may wonder whether there exists a generic  speed limit on processes that are driven away from thermal equilibrium.     

In the present paper, building on  Ref.~\cite{roldan2015decision},  we  show that rate processes  are governed by   a universal   tradeoff between dissipation, speed, and uncertainty.      We quantify this tradeoff with generic inequalities on the moments of the   first-passage times of stochastic currents with two thresholds.   The derived inequalities are  reminiscent of the thermodynamic uncertainty relations for first-passage times \cite{gringich2017bis}, but there exist also a couple of important distinctions.   First, the trade-off relations derived in this paper quantify the uncertainty in the outcome of the process with the splitting probability of the first-passage problem,   whereas the thermodynamic uncertainty relation quantifies uncertainty  with the variance of the first-passage time.    Second, the derived bounds are  equalities when the  current is the stochastic entropy production, and hence the derived first-passage inequalities are optimal  in this case.

 The paper is organised as follows: in Sec.~\ref{sec:main}, we state the main results of this paper.   In Sec.~\ref{sec:setup}, we discuss the  setup for which the main results are derived, viz.,  stochastic currents in Markov jump processes.   In Sec.~\ref{sec:fp}, we derive  the main results,   within the setup of Markov jump processes, by using recent results on large deviations and martingales in stochastic thermodynamics.   In Sec.~\ref{sec:seq} we provide an alternative derivation that is based on    the theory of sequential hypothesis testing and which  provides insights on  extensions of the main results beyond Markov jump processes.  
 In the following two Secs.~\ref{sec:prev} and \ref{sec:Arrh}, we relate the main results  of this paper to results previously published  in the literature and   to the  Van't Hoff-Arrhenius law, respectively.  In Sec.~\ref{sec:inf}, we illustrate with an example the tightness of the  first-passage time bounds when the stochastic current is proportional to the stochastic entropy production.       The paper ends with a discussion in Sec.~\ref{sec:discu} and after the discussion there are   several appendices that contain technical details on the mathematical derivations.

 \section{Main results} \label{sec:main}
 
 The paper contains two main results.  The first main result  is an inequality that holds  for the first-passage times of  stochastic currents in stationary Markov jump processes.  The second main  result is an equality that holds for first-passage times of stochastic currents that are proportional to the stochastic entropy production. 
 
 \subsection{Bounds on the moments of first-passage times of stochastic currents}
  
   Let $J(t)$ be a stochastic current in a nonequilibrium, stationary process $X(t)$ and let 
   \begin{equation}
T_J = {\rm inf}\left\{t>0: J(t)\notin (-\ell_-,\ell_+)\right\} \label{eq:def}
\end{equation}
be the first time when $J(t)$ leaves the open interval $(-\ell_-,\ell_+)$, where $t\geq0$ is an index that labels the time and where $\ell_-,\ell_+>0$ are  the threshold values of the first-passage problem.

\begin{figure}[t!]
\centering
\includegraphics[width=0.7\textwidth]{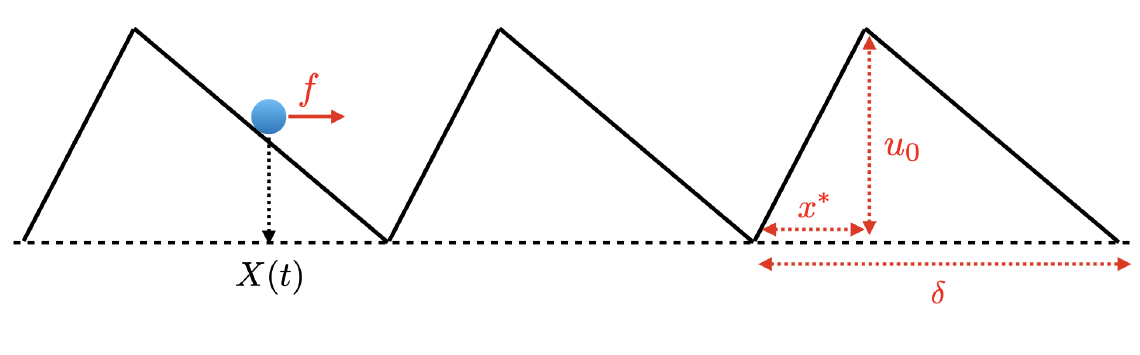}
\includegraphics[width=0.45\textwidth]{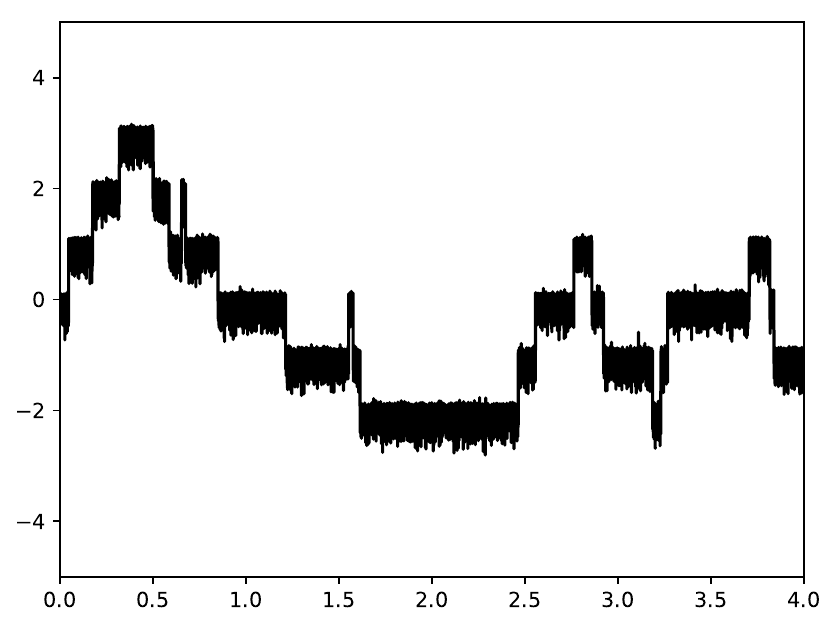}
\includegraphics[width=0.45\textwidth]{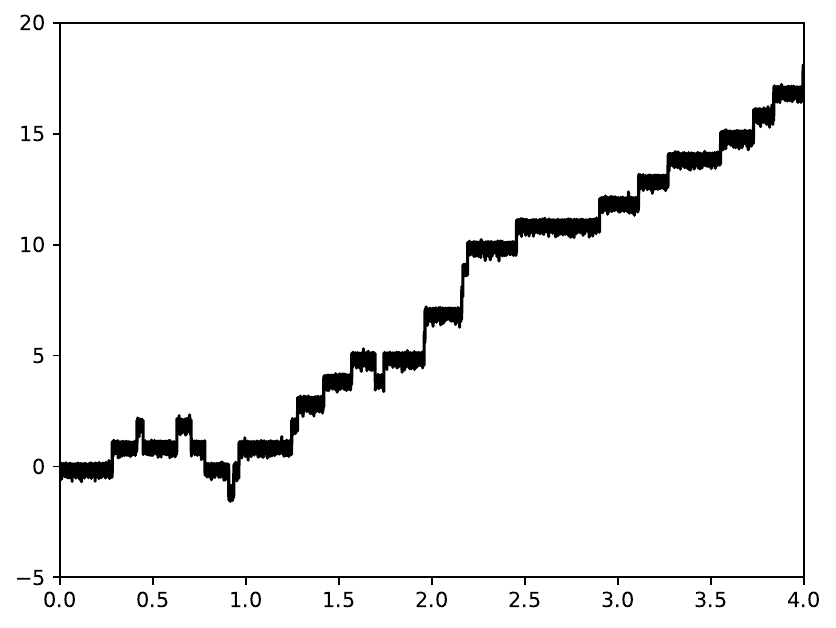}
 \put(-420,70){$\frac{X(t)}{\delta}$}
  \put(-308,-10){$t \: e^{-u_0/\mathsf{T}_{\rm env}}$} 
    \put(-120,-10){$t \: e^{-u_0/\mathsf{T}_{\rm env}}$} 
\caption{ {\it  Nonequilibrium version of Kramers' model exhibiting an increased  reaction rate due to nonequilibrium driving}.    Trajectories shown are for  a reaction coordinate $X$  that solves the Langevin equation 
$\partial_t X(t) = (f-\partial_x u(X(t)))/\gamma +  \sqrt{2  \mathsf{T}_{\rm env}/\gamma} \xi(t)$, where $\xi(t) = {\rm d}W(t)/{\rm d}t$ is a delta-correlated white Gaussian noise term, and where $u(x)$ is a triangular potential with period $\delta$, i.e. $u(x) = u(\pm \delta)$,  $u(x) = u_0\: x/x^\ast$ if $x\in[0,x^\ast]$, and  $u(x) = u_0 (\delta-x)/(\delta-x^\ast)$ if $x\in[x^\ast,\delta]$.  Left: equilibrium trajectory with $f=0$. Right: nonequilibrium trajectory with $f \delta/\mathsf{T}_{\rm env}=1$.  
   The remaining parameters   are   set to $\delta=5$, $\gamma=1$, $x^\ast=1$,  $u_0=10$, and $\mathsf{T}_{\rm env}=1$.      } \label{fig1Mx}
\end{figure}

\begin{figure}[t!]
\centering
\includegraphics[width=0.5\textwidth]{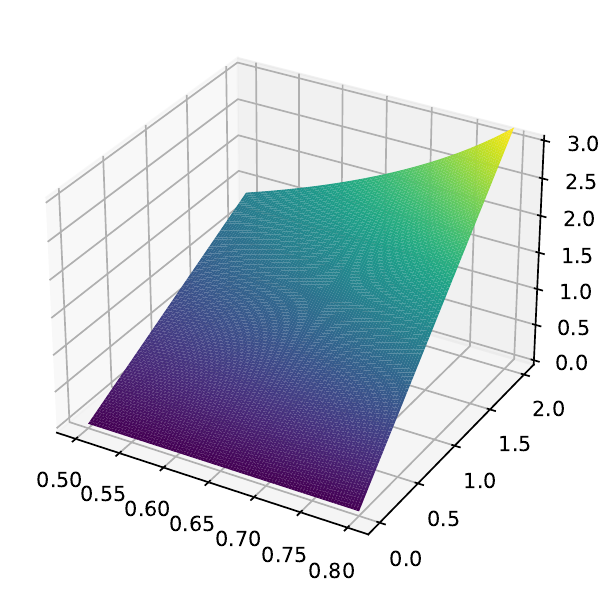}
 \put(-170,5){\Large$1-p_-$}
 \put(-35,30){\Large$1/\langle T_J\rangle$}
  \put(0,120){\Large$\dot{s}$}
     \put(-83,40){{\rotatebox{60}  {\color{red}  {\rm \bf non-permissible}}}}
\caption{ {\it  Universal tradeoff between speed, uncertainty, and dissipation in nonequilibrium processes}.   The three axes represent the speed ($1/\langle T_J\rangle$), uncertainty ($1-p_-$), and dissipation ($\dot{s}$) in a nonequilibrium process $X$.   The plotted surface is  $\dot{s} = |\log p_-|/\langle T_J\rangle$; in the present example the thresholds are symmetric, $\ell_-=\ell_+$.  Processes that are situated below the surface are physically nonpermissible as they violate the bound Eq.~(\ref{eq:2}).  } \label{figTradeoff}
\end{figure}  
In this paper we show that in  the limit of large thresholds $-\ell_-$ and $\ell_+$ it holds that 
 \begin{eqnarray}
\langle T^n_J\rangle  \geq  \left(\frac{\ell_+}{\ell_-}\frac{|    \log p_- |}{ \dot{s} }  \right)^n(1+o_{\ell_{\rm min}}(1)), \label{eq:2}
\end{eqnarray} 
where 
\begin{equation}
p_- = \mathbb{P}\left[J(T_J)\leq -\ell_-\right]
\end{equation}
denotes the  probability that the current $J$ goes below  the negative threshold $-\ell_-$ before exceeding for the first time the positive threshold $\ell_+$, where $\dot{s}$ is the entropy production rate, and where $n\in\mathbb{N}$.   The quantity  $p_-$ is  called the splitting probability.     The averages $\langle \cdot\rangle$ are taken over repeated realisations of the stationary process $X$.   We have used the little-o-notation    $o_{\ell_{\rm min}}(1)$ to denote a function that converges to zero when  $\ell_{\rm min} = {\rm min}\left\{\ell_-,\ell_+\right\}\rightarrow \infty$ while the ratio  $\ell_-/\ell_+$ is kept fixed.    Since we keep the ratio   $\ell_-/\ell_+$ fixed,   it holds that $o_{\ell_{\rm min}}(1) = o_{\ell_-}(1) = o_{\ell_+}(1)$.  Equation (\ref{eq:2})   holds for  $\langle J(t)\rangle>0$; if $\langle J(t)\rangle<0$, then  $p_-$ should be replaced by $p_+ = \mathbb{P}\left[J(T_J)\geq\ell_+\right]$,  $\ell_-$ with $\ell_+$, and vice versa. 

The inequality Eq.~(\ref{eq:2}) describes 
  a tradeoff between dissipation $\dot{s}$, speed $\langle T^n\rangle$, and the   uncertainty in the outcome of the process that is  quantified  by $p_-$.    
It states that processes that are fast, precise, and have a small entropy production rate are physically not permissible.   In Fig.~\ref{figTradeoff} we  illustrate this trade-off relation graphically by plotting a surface in a three-dimensional space delimiting the  parameter regime that is physically not permissible.

    Near equilibrium $\dot{s}\sim e^{-\frac{E_{\rm b}}{\mathsf{T}_{\rm env}}}$ and $p_-\approx \ell_+/(\ell_++\ell_-)$.  Consequently, Eq.~(\ref{eq:2}) implies that  $\langle T_J\rangle$ is lower bounded by the Van't Hoff-Arrhenius law, i.e., 
\begin{equation}
\langle T_J\rangle \geq \frac{1}{\nu} e^{\frac{E_{\rm b}}{\mathsf{T}_{\rm env}}}.   \label{eq:Arrh}
\end{equation}
On the other hand, far from thermal equilibrium   the right hand side of Eq.~(\ref{eq:2}) goes below $\frac{1}{\nu} e^{\frac{E_{\rm b}}{\mathsf{T}_{\rm env}}}$ implying that dissipation can increase the reaction  rate $k = 1/\langle T_J\rangle$, as we illustrate  in Fig.~\ref{fig1Mx} for a  nonequilibrium version of Kramer's model \cite{kramers1940brownian}.   

Taken together, the Eq.~(\ref{eq:2})  states  that we can speed up a process by driving it out of equilibrium,   but there exists a universal  speed limit that is determined by the rate of dissipation and the amount of  fluctuations  in the process.  

 \subsection{Equality for the moments of first-passage times of entropy production}

If $J(t)=S(t)$ with $S(t)$ the stochastic entropy production \cite{maes2003origin, jarzynski2011equalities, seifert2012stochastic}, then the equality sign in Eq.~(\ref{eq:2}) holds, viz.,   
 \begin{eqnarray}
\langle T^n_S\rangle  =  \left(\frac{\ell_+}{\ell_-}\frac{|    \log p_- |}{ \dot{s} }  \right)^n(1+o_{\ell_{\rm min}}(1)) .\label{eq:eq}
\end{eqnarray} 
This remarkable property  follows form the fact that  $e^{-S(t)}$ is a martingale \cite{chetrite2011two, neri2017statistics, pigolotti2017generic, neri2019integral}, which implies the formula $p_- = e^{-\ell_-}(1+o_{\ell_{\rm min}}(1))$ \cite{neri2017statistics, neri2019integral}.

Note that the definition (\ref{eq:def})    together (\ref{eq:eq})   implies that the equality sign in Eq.~(\ref{eq:2})  also holds when $J(t)=cS(t)$, with $c$ a constant that is independent of $\ell_-$ and $\ell_+$.

The   Eq.~(\ref{eq:eq}) implies that the bound Eq.~(\ref{eq:2})  is tight when the stochastic current is proportional to the stochastic  entropy production ($J=cS$), and this is one of the main advantages of the bound (\ref{eq:2})  with respect to other  tradeoff inequalities reported in the literature, such as, the thermodynamic uncertainty relation  for first-passage times  that quantifies uncertainty in terms of the variance of the first-passage time~\cite{gringich2017bis}.

\section{System setup} \label{sec:setup}

We consider a stationary Markov jump process $X(t)$ defined on a discrete set $\mathcal{X}\ni X(t)$ and in continuous time $t\geq0$.   The dynamics of $X(t)$ consists of a sequence of jumps with  rates that are determined by a Markov transition rate matrix   $w_{x\rightarrow y}$ with $x,y\in\mathcal{X}$ \cite{schnakenberg1976network}.      We assume that $X(t)$ has a unique stationary probability distribution $p_{\rm ss}(x)$ that satisfies $p_{\rm ss}(x)>0$ for all $x\in\mathcal{X}$, and we assume that the process is reversible in the sense that $w_{x\rightarrow y}>0$ if and only if $w_{y\rightarrow x}>0$.

Stochastic   currents $J(t) = J(X^t_0)$  are real-valued  functionals defined on the set of trajectories $X^t_0$ with the following two properties: 
\begin{enumerate}[(i)]
\item  $J$ is time extensive, i.e., 
\begin{equation}
\langle J(t) \rangle = \overline{j} \: t 
\end{equation} 
where $\overline{j}$ is a nonzero current rate.    Without loss of generality we can assume that $\overline{j}>0$.

\item $J$ is odd under time-reversal, i.e., 
\begin{equation}
J(\Theta_t(X^t_0))  = -J(X^t_0),
\end{equation}
where the time-reversal operation $\Theta_t$ maps trajectories $X^t_0$ on their time-reversed trajectory $(X^\dagger)^t_0$ with entries $X^\dagger(\tau) = X(t-\tau)$.
   Note that this implies $J(0) = 0$.   
\end{enumerate}    In a Markov jump process, stochastic currents take the form 
\begin{equation}
J(t)  = \sum_{x,y\in \mathcal{X}}c_{x,y}J_{x\rightarrow y}(t), \label{eq:JMJ}
\end{equation}
with coefficients $c_{x,y}\in \mathbb{R}$ and  with $c_{x,x}=0$.   The edge currents
\begin{equation}
J_{x\rightarrow y}(t) = N_{x\rightarrow y}(t) - N_{y\rightarrow x}(t) \label{eq:edge}
\end{equation}
denote  the difference between the number of times $N_{x\rightarrow y}(t)$  the process  has jumped from the $x$-th state to the $y$-th state in the trajectory $X^t_0$ and the  number of reverse jumps $N_{y\rightarrow x}(t)$ from the $y$-th to the $x$-th state in the same trajectory.       

The stochastic entropy production $S$ is defined by the ratio~\cite{maes2003origin, seifert2012stochastic}
\begin{equation}
S(t) = \log \frac{p(X^t_0)}{p(\Theta_t(X^t_0)) } \label{eq:probRatio}
\end{equation}  
 between the probability distributions of the trajectory $X^t_0$ in the forward and backward dynamics, better known as the Radon-Nikodym derivative \cite{maes2000definition, neri2017statistics, yang2020unified}.     For a stationary process, the index $t$ in the map $\Theta_t$ of Eq.~\ref{eq:probRatio}  is immaterial, and we can replace $\Theta_t$ by $\Theta$.    Notice that we   use natural units for which the Boltzmann constant is set equal to one. 
It is possible to write  the  stochastic entropy production in the form Eq.~(\ref{eq:JMJ}), viz., 
\begin{equation}
S(t)  = \frac{1}{2} \sum_{x,y\in \mathcal{X}} \log \frac{p_{\rm ss}(x)w_{x\rightarrow y}}{p_{\rm ss}(y)w_{y\rightarrow x}} J_{x\rightarrow y}(t),
\end{equation}
where  $p_{\rm ss}(x)$ is the probability distribution of $X(t)$ in the stationary state.     In the definition of the entropy production we require that the  process is reversible, i.e., if $w_{x\rightarrow y}>0$ then also $w_{y\rightarrow x}>0$.    A useful property that we will use repeatedly is that the  exponentiated negative entropy production $e^{-S(t)}$ is a martingale, see \cite{chetrite2011two, neri2017statistics, pigolotti2017generic, neri2019integral}.

Since the process is stationary, the entropy production rate $\dot{s}$ is given by 
\begin{equation}
\langle S(t)\rangle = \dot{s} \: t .  \label{eq:sdotDef}
\end{equation}     
For systems that are weakly coupled to an environment in thermal equilibrium, the entropy production rate equals the dissipation rate \cite{maes2003origin, seifert2012stochastic, peliti2021stochastic}, which clarifies the physical significance of the process $S(t)$.     In the literature, the latter property is often referred to as  the principle of local detailed balance~\cite{maes2020local, hartich2021violation}.

\section{First-passage time bounds from large deviation theory}\label{sec:fp}  
We derive the main results of this paper,  given by Eqs.~(\ref{eq:2}) and (\ref{eq:eq}), with large deviation theory.

  Stochastic currents $J(t)$ in Markov jump processes satisfy  a large deviation principle.     This means that  for large enough times $t$, the probability distribution of $J/t$ takes the form  \cite{touchette2009large, barato2015formal}
\begin{equation}
p_{J/t}(z) =  e^{- t\mathcal{J}(z) (1+o_t(1))},  \label{eq:pjtz}
\end{equation} 
where $o_t(1)$ is a function that converges to zero when $t$ is large enough, and where $\mathcal{J}(z)$ is the large deviation function of the current.  In Eq.~(\ref{eq:pjtz}), the normalisation constant is contained  in the  $o_{t}(1)$  term that appears in the argument of the exponential.   The large deviation function $\mathcal{J}(z)\geq0$ is a convex function that  takes its minimum value when  $J/t = \overline{j}$, i.e., $\mathcal{J}(\overline{j}) = 0$.   

An immediate consequence of Eq.~(\ref{eq:pjtz}) is that
\begin{equation}
\langle T^n_J\rangle = \left( \frac{\ell_+}{\overline{j}}\right)^n  (1+o_{\ell_{\rm min}}(1)).    \label{eq:Tna}
\end{equation}      
Indeed, since $J(t)$ satisfies the large-deviation principle Eq.~(\ref{eq:pjtz}),  $J(t)$ converges with probability one to $\overline{j}t$, viz., 
\begin{equation}
\frac{J(t)}{t} = \overline{j}(1+o_t(1)).
\end{equation}  
 Consequently,   the first-passage time  given by Eq.~(\ref{eq:def}) is deterministic for large values of $\ell_{\rm min}$, and since  $\overline{j}>0$  we obtain 
\begin{equation}
T_J = \frac{\ell_+}{\overline{j}} (1+o_{\ell_{\rm min}}(1)),  \label{eq:T}
\end{equation}
which implies  Eq.~(\ref{eq:Tna}), as long as for finite threshold values $\ell_{\rm min}$ the distribution of $T_J$ has fast enough decaying tails.

To complete the derivation of the main results, we  derive in Sec.~\ref{sec:bounds} a lower bound for the splitting probability $p_-$, in particular, we show that 
 \begin{eqnarray}
p_- &\geq & \exp\left(- \frac{\ell_- \dot{s}}{\overline{j}} (1+o_{\ell_{\rm min}}(1)) \right), \label{eq:pminusBounda}
\end{eqnarray}   
which together with  (\ref{eq:Tna}) implies Eq.~(\ref{eq:2}).   

 In Sec.~\ref{sec:equal} we show that for $J=S$ the  inequality (\ref{eq:pminusBounda}) becomes an equality, leading to~(\ref{eq:eq}). 
\subsection{Bound on the splitting probability $p_-$}  \label{sec:bounds}

We derive the bound Eq.~(\ref{eq:pminusBounda})  for the  probability $p_-$ that $J$ hits the negative boundary first, 
which together with  (\ref{eq:Tna}) readily implies  the main result Eq.~(\ref{eq:2}).

For stationary Markov jump processes, it was shown that $\mathcal{J}(z)$   is bounded from above by \cite{pietzonka2016universalx, gingrich2016dissipation, pietzonka2016affinity}
\begin{equation}
 \mathcal{J}(z) \leq  \frac{\dot{s}}{4}(z/\overline{j}-1)^2 \label{eq:JBoundx}.  
\end{equation}
In what follows, we show that the inequality (\ref{eq:pminusBounda}) follows from this fundamental bound.

The splitting probability $p_-$ can be expressed as follows, 
\begin{eqnarray} 
p_- = \mathbb{P}\left[J(T_J)\leq -\ell_-\right] = \mathbb{P}\left[J(t)\leq -\ell_-\right]  - \mathbb{P}\left[J(t)\leq -\ell_- \wedge  J(T_J)\geq \ell_+\right]
\nonumber\\
 +  \mathbb{P}\left[J(t)\geq -\ell_-\wedge J(T_J)\leq -\ell_-\right] ,
\end{eqnarray} 
where $\wedge$ is a short hand notation for the logical conjunction. 
Since probabilities are positive, we obtain the bound
\begin{eqnarray} 
p_- \geq  \mathbb{P}\left[J(t)\leq -\ell_-\right] - \mathbb{P}\left[ J(t)\leq -\ell_- \wedge J(T_J)\geq \ell_+\right] .  
\end{eqnarray} 
Moreover, using that for large enough thresholds  the probability that $J(t)$  goes below  the threshold $\ell_-$ after it went above the threshold $\ell_+$ is vanishingly small, we obtain the inequality
\begin{equation} 
p_- \geq  \mathbb{P}\left[J(t)\leq -\ell_-\right](1+o_{\ell_{\rm min}}(1)) \label{eq:pMPxx}
\end{equation} 
that holds for all $t\geq0$.

We can express the right-hand side of Eq.~(\ref{eq:pMPxx}) in terms of $p_{J/t}(z)$, i.e., 
\begin{equation}
p_- \geq  \int^{-\ell_-/t}_{-\infty} {\rm d}z \:  p_{J/t}(z) =  \int^{-\ell_-/t}_{-\infty} {\rm d}z \:  e^{- t\mathcal{J}(z) (1+o_t(1))}. \label{eq:pMM}
\end{equation}

Using the bound Eq.~(\ref{eq:JBoundx}) in Eq.~(\ref{eq:pMM}) and setting $\tau = t/\ell_-$, we obtain
\begin{equation}
p_- \geq   \int^{-1/\tau}_{-\infty} {\rm d}z \:  \exp\left(- \frac{1}{4}\ell_- \dot{s}\tau \left[\frac{z}{\overline{j}}-1\right]^2 (1+o_{\ell_{\rm min}}(1))\right),\label{eq:qM}
\end{equation}
where we have also interchanged $o_{t}(1) $ with $o_{\ell_{\rm min}}(1)$.   This is possible since the results of this paper hold for  $\ell_{\rm min}\rightarrow \infty$ while keeping the ratio $\ell_-/\ell_+$ fixed.   In Eq.~(\ref{eq:qM}) this limit corresponds with $\ell_-\rightarrow \infty$ while keeping the ratio $t/\ell_-  =\tau$ fixed.       In this limit, it holds that  $o_{t}(1) = o_{\ell_{\rm min}}(1)$, and therefore we can interchange these two symbols.

  For large values of $\ell_-$,  the expression Eq.~(\ref{eq:qM}) is a saddle point integral, and hence it is determined by   the maximum of the exponent, i.e.,
   \begin{equation}
p_-\geq  \exp\left(- \frac{1}{4}\ell_- \dot{s}\tau\left[\frac{1}{\tau \overline{j}}+1\right]^2 (1+o_{\ell_{\rm min}}(1))\right). 
   \end{equation} 
Since the above inequality holds for arbitrary $\tau$, we can  take the maximum of the right-hand side, 
      \begin{equation}
       p_- \geq {\rm max}_{\tau\geq 0} \exp\left(- \frac{1}{4}\ell_- \dot{s}\tau \left[\frac{1}{\tau \overline{j}}+1\right]^2 (1+o_{\ell_{\rm min}}(1))\right).
          \end{equation}  
       For $\tau \geq0$,       the minimum value of the function 
          $\tau\left(\frac{1}{\tau \overline{j}}+1\right)^2$ is   reached when $\tau=1/\overline{j}$, leading to  the bound Eq.~(\ref{eq:pminusBounda}) that we were meant to derive.

\subsection{An equality for $p_-$ that follows from the martingality of $e^{-S}$}\label{sec:equal} 
To derive Eq.~(\ref{eq:eq}), we show that when $J=S$, then the equality in Eq.~(\ref{eq:pminusBounda}) satisfied, i.e., 
\begin{equation}
p_- =  e^{-\ell_-(1+o_{\ell_{\rm min}}(1))}. \label{eq:pminusEnt}
\end{equation}  
Eq.~(\ref{eq:pminusEnt}) together with  Eq.~(\ref{eq:Tna}), readily implies Eq.~(\ref{eq:eq}).

The fact that $p_-$ is universal and only depends on the threshold $\ell_-$ is a remarkable fact  that is a direct consequence of the martingale property of $e^{-S(t)}$ \cite{chetrite2011two, neri2017statistics, neri2019integral}.  Indeed, since the process  $e^{-S(t)}$ is a martingale and since $T_S$ is a first-passage time with two thresholds, the integral fluctuation relation at stopping times  \cite{neri2019integral}
\begin{equation}
\langle e^{-S(T_S)}\rangle = 1\label{eq:Int}
\end{equation} 
applies, see Corollary 2 of the Appendix of Ref.~\cite{neri2019integral}; a related, albeit not identical, relation was reported in \cite{saito2016waiting}.
Using that $\mathbb{P}[T_S<\infty]=1$, the Eq.~(\ref{eq:Int}) also reads
\begin{equation}
p_- \langle e^{-S(T_S)} \rangle_- + p_+  \langle e^{-S(T_S)}\rangle_+  = 1,
\end{equation}   
where $\langle \cdot \rangle_-$ and  $\langle \cdot \rangle_+$ denote averages over those trajectories that terminate at the negative and positive threshold values, respectively.
Using that for $\ell_-,\ell_+\gg1$ it holds that $S(T_S) = \ell_\pm (1+o_{\ell_{\rm min}}(1))$,  we  obtain 
\begin{equation}
p_- e^{\ell_-(1+o_{\ell_{\rm min}}(1))}  + p_+  e^{-\ell_+(1+o_{\ell_{\rm min}}(1))}  = 1, 
\end{equation}   
and  for $\ell_+\gg 1$ this simplifies into Eq.~(\ref{eq:pminusEnt}).

\section{First-passage time bounds  from the asymptotic optimality of sequential probability ratio tests}\label{sec:seq}  
In the previous section, we have derived the main results  Eqs.~(\ref{eq:2})  and (\ref{eq:eq})  within the setup of stationary Markov jump processes.   In the present section, we derive the main results  within the framework of   sequential hypothesis testing.  With sequential hypothesis testing theory,  we can derive   partial results in an extremely general setting.   These partial results are interesting in their own right, and they also pave the way to derive Eqs.~(\ref{eq:2})  and (\ref{eq:eq}) in a setup more general than  Markov jump processes.  
  \subsection{Review of sequential hypothesis testing}
  
As pointed out in Ref.~\cite{roldan2015decision}, first-passage problems for stochastic currents with two thresholds are sequential hypothesis tests that decide on the arrow of time, and     first-passage problems for entropy production are sequential probability ratio tests.    Therefore, we can use the theory of sequential hypothesis testing to derive bounds on the moments of first-passage times of stochastic currents.   We  provide a brief review  of the theory of sequential hypothesis testing, focusing on the asymptotic optimality of sequential probability ratio tests.

     Sequential hypothesis tests are statistical hypothesis tests that take a  decision   $D$ about the true hypothesis $H$ at a  random stopping time $T$.    The general setup goes as follows \cite{Melsa, tartakovsky2014sequential}.  There is an observation process $X(t)$ whose statistics are determined by one of two possible probability measures $\mathbb{P}_+$  or $\mathbb{P}_-$ corresponding to two hypotheses $H=+$ and $H=-$, respectively.    A sequential hypothesis test is a pair $(T,D)$, where $T$ is a stopping time relative to the process $X$, and $D\in \left\{-,+\right\}$ is a decision variable defined on the set of trajectories $X^T_0$ up to the decision time $T$.   The error reliabilities  of the test are 
\begin{equation}
p_-= \mathbb{P}_+[D=-]\quad {\rm and} \quad p^\dagger_+ = \mathbb{P}_-[D=+],
\end{equation}
where $\mathbb{P}_+[D=-] =\mathbb{P}[D=-|H=+]$  and $\mathbb{P}_-[D=+] = \mathbb{P}[D=+|H=-]$.

Given certain maximally allowed error probabilities $\alpha_-$ and $\alpha_+$, we define the set 
\begin{equation}
\mathcal{C}_{\alpha_-,\alpha_+} =  \left\{(T,D): p_-\leq \alpha_+,  \  p^\dagger_+\leq \alpha_-,  \quad \langle T|H=+\rangle <\infty, \  \langle T|H=-\rangle <\infty\right\}
\end{equation} 
of all sequential hypothesis tests that meet the required constraints on the error reliabilities and with finite expected decision times under both hypotheses.       We say that a sequential hypothesis test is optimal if it is an element of $\mathcal{C}_{\alpha_-,\alpha_+} $  and it minimises the mean decision times $\langle T| H=+\rangle$ and $\langle T| H=- \rangle$.

For general observation processes $X(t)$, the optimal sequential hypothesis test is not known.   However, in the asymptotic limit of small maximally allowed error probabilities $\alpha_-$ and $\alpha_+$  the optimal test is known and given  by the sequential probability ratio test    \cite{tartakovsky2014sequential}.     The sequential probability ratio test was first introduced by Wald for observation processes that are a sequence of  independent and identically distributed random variables \cite{wald1945sequential}, and      subsequently, Wald and Wolfowitz proved the optimality of the sequential probability ratio in the latter  setup   \cite{wald1948optimum}.   In a later work~\cite{lai1981asymptotic},  Lai proved the  asymptotic optimality of sequential probability ratio tests for general observation processes.     

Let
\begin{eqnarray}
\Lambda(t) = \log \frac{p^+(X^t_0)}{p^-(X^t_0) }, \label{eq:probRatioxa}
\end{eqnarray}  
be the log-likehood ratio process, 
which should be understood as  the logarithm of the Radon-Nikodym derivative of  the probability measure $\mathbb{P}_+$   with respect to the probability measure $\mathbb{P}_-$, both constrained on the sub-$\sigma$-algebra generated by the trajectories $X^t_0$.    Loosely said,  $\Lambda(t) $ is the logarithm of the ratio of the probability densities $p^+(X^t_0)$ and $p^-(X^t_0) $ associated to the trajectories $X^t_0$, which clarifies the notation in Eq.~(\ref{eq:probRatioxa}).   The sequential probability ratio test is then the first-passage problem $T_{\Lambda}$ (see Eq.~(\ref{eq:def}) for the definition of first-passage times) with thresholds $-\ell_-$ and $\ell_+$   that determine the  error probabilities $p_-$ and $p^\dagger_+$.      When $\Lambda$ is a  continuous process, then 
\begin{equation}
\ell_- =  \log [(1-p^\dagger_+)/p_-] ,\quad \ell_+ = \log[(1-p_-)/p^\dagger_+]. 
\end{equation}

We formulate a lemma and a theorem about the  asymptotic properties of sequential hypothesis tests and the asymptotic optimality of  sequential probability ratio tests.    We first consider  Lemma 3.4.1 in \cite{tartakovsky2014sequential} that derives an asymptotic lower bound for   the moments of the decision times of sequential hypothesis tests.    

\begin{lemma}[Asymptotic lower bounds for the moments of decision times in sequential hypothesis tests] \label{Lemma1} 
Let  $\delta = (T,D)$  be a sequential hypothesis test in the set $\mathcal{C}_{\alpha_-,\alpha_+}$.  We assume that   $\Lambda(t)\in\mathbb{R}$ and $1/\Lambda(t)\in\mathbb{R}$ for all $t\geq 0$.  We assume that there exists a nonnegative increasing function $\psi(t)$ with $\psi(\infty) = \infty$  such that 
\begin{eqnarray}
\lim_{t\rightarrow \infty}\frac{\Lambda(t)}{\psi(t)} = \overline{\lambda}_+,   \quad (\mathbb{P}_+\textrm{-almost surely});\quad  \lim_{t\rightarrow \infty}\frac{\Lambda(t)}{\psi(t)} = -\overline{\lambda}_-,   \quad (\mathbb{P}_-\textrm{-almost surely}) \label{eq:40}
\end{eqnarray} 
with $\overline{\lambda}_-,\overline{\lambda}_+ \in(0,\infty)$.  Moreover, we assume that 
for all finite $\tau$ 
\begin{eqnarray}
\mathbb{P}_+\left[{\rm sup}_{t\in [0,\tau]}\Lambda(t)<\infty\right]=1,\quad \mathbb{P}_-\left[-{\rm inf}_{t\in [0,\tau]}\Lambda(t)<\infty\right]=1.
\end{eqnarray} 
Under these assumptions, it holds that for all $\epsilon>0$
 \begin{eqnarray}
 \lim_{\alpha_{\rm max}\rightarrow 0}{\rm inf}_{\delta\in\mathcal{C}\left(\alpha_-, \alpha_+\right)}\mathbb{P}_+\left[ T> (1-\epsilon)\Psi\left(|\log \alpha_-|/\overline{\lambda}_+\right) \right] = 1 \label{eq:43} \\ 
 \lim_{\alpha_{\rm max}\rightarrow 0}{\rm inf}_{\delta\in\mathcal{C}\left(\alpha_-, \alpha_+\right)} \mathbb{P}_-\left[T> (1-\epsilon)\Psi\left(|\log \alpha_+|/\overline{\lambda_-}\right) \right] = 1
 \end{eqnarray}
 where $\Psi(t)$ is the inverse of $\psi(t)$, i.e., $\Psi(\psi(t)) = t$.   Moreover, for all $n>0$    
  \begin{eqnarray}
   \lim_{\alpha_{\rm max}\rightarrow 0} {\rm inf}_{\delta\in\mathcal{C}\left(\alpha_-, \alpha_+\right)}\langle T^ n|H=+\rangle \geq \left(\Psi\left(|\log \alpha_-|/\overline{\lambda}_+\right)\right)^n(1+o_{\alpha_{\rm max}}(1))  \label{eq:ineqLemma1}\\ 
      \lim_{\alpha_{\rm max}\rightarrow 0} {\rm inf}_{\delta\in\mathcal{C}\left(\alpha_-, \alpha_+\right)}\langle T^n|H=-\rangle\geq  \left(\Psi\left(|\log \alpha_+|/\overline{\lambda}_+\right)\right)^n(1+o_{\alpha_{\rm max}}(1)).  \label{eq:ineqLemma2}
 \end{eqnarray}

  \end{lemma}   
  
  Second, we consider   Theorem 3.4.2 in \cite{tartakovsky2014sequential}  for  the asymptotic optimality of  the sequential probability ratio test.   Contrarily to Lemma~\ref{Lemma1},  this theorem provides an equality for the moments of first-passage times and for this reason we will need to replace the almost sure convergence conditions Eqs.~(\ref{eq:40}) by the stronger  $r$-quick convergence condition.   Let 
  \begin{equation}
  L_{\epsilon}(Y(t)) = {\rm sup}\left\{t>0 : |Y(t)|>\epsilon\right\},
  \end{equation}
be the  last entry time of a real-valued  stochastic process $Y(t)\in\mathbb{R}$ into an interval $[-\epsilon,\epsilon]$, with ${\rm sup}\left\{\phi\right\}=0$.    We say that $Y(t)$  converges $r$-quickly to $0$ in  $\mathbb{P}_+$ if $ \langle L^r_{\epsilon} |H=+\rangle <\infty$ for every $\epsilon>0$.

\begin{theorem}[Asymptotic optimality of sequential probability ratio tests] \label{Theorem2}  
We assume that 
\begin{eqnarray}
\lim_{t\rightarrow \infty}\frac{\Lambda(t)}{\psi(t)} = \overline{\lambda}_+,   \quad (r{\rm -quickly}  \  {\rm in}  \  \mathbb{P}_+);\quad  \lim_{t\rightarrow \infty}\frac{\Lambda(t)}{\psi(t)} = -\overline{\lambda}_-,   \quad (r{\rm-quickly}  \  {\rm in}  \  \mathbb{P}_-)\label{eq:42}
\end{eqnarray} 
where $r$ is a natural number.
 It holds then that 
\begin{itemize}
  \item    for any finite threshold values $\ell_-$ and $\ell_+$,  
  \begin{equation}
  \langle T^r_{\Lambda }|H=\pm \rangle <\infty;
  \end{equation}
  \item  for all $m\in (0,r]$,  
  \begin{equation}
    \langle T^m_{\Lambda} | H=\pm \rangle = \left(\Psi\left(\ell_{\pm}/\overline{\lambda}_{\pm}\right)\right)^m \left(1+o_{\ell_{\rm min}}(1)\right);
  \end{equation}
  \item if $\ell_- = |\log p_-| (1+o_{\ell_{\rm min}}(1))$ and $\ell_+ = |\log p^\dagger_+|(1+o_{\ell_{\rm min}}(1))$, then  for all $m\in (0,r]$
  \begin{equation}
    \langle T^m_{\Lambda} | H=+ \rangle = \left(\Psi\left(|\log p^\dagger_{+}|/\overline{\lambda}_{+}\right)\right)^m \left(1+o_{\ell_{\rm min}}(1)\right) \label{eq:THPA}
  \end{equation} 
  and 
   \begin{equation}
    \langle T^m_{\Lambda} | H=- \rangle = \left(\Psi\left(|\log p_{-}|/\overline{\lambda}_{-}\right)\right)^m \left(1+o_{\ell_{\rm min}}(1)\right). 
  \end{equation} 
\end{itemize}
\end{theorem}

  \subsection{Derivation of the  first-passage bound Eq.~(\ref{eq:2})   based on  Lemma~\ref{Lemma1} } 
  
   We  use Lemma~\ref{Lemma1} to derive Eq.~(\ref{eq:2}).     However, as will become  soon evident, Lemma~\ref{Lemma1} is not equivalent to  Eq.~(\ref{eq:2}), as to derive  Eq.~(\ref{eq:2})  we  also need to relate $p^\dagger_+$  to $p_-$. 
   
  Let $\mathbb{P}$ denote the probability measure of events in the forward dynamics and  let $\mathbb{P}\circ\Theta$ be the probability measure of events in the time-reversed dynamics.   
   Setting $\mathbb{P}_+ = \mathbb{P}$,  $\mathbb{P}_- = \mathbb{P}\circ \Theta$, and $\psi(t)=t$, we obtain according to definition (\ref{eq:probRatio}) that  $\Lambda(t) = S(t)$ and $\overline{\lambda}_+ = \dot{s}$.     Since $J$ is a stochastic current it  changes sign under time-reversal and therefore the pair $(T_J,D_J)$, with $T_J$ as defined in Eq.~(\ref{eq:def}) and $D_J = {\rm sign}(J(T_J))$, is a sequential hypothesis test corresponding to the two probability measures $\mathbb{P}$ and $\mathbb{P}\circ\Theta$ \cite{roldan2015decision}.  Replacing in Eq.~(\ref{eq:ineqLemma1}) the $\alpha_-$ by $p^\dagger_+$ and the $o_{\alpha_{\rm max}}(1)$  by  $o_{\ell_{\rm min}}(1)$, we obtain \cite{roldan2015decision}
 \begin{eqnarray}
\langle T^n_J\rangle  \geq  \left(\frac{|    \log p^\dagger_+ |}{ \dot{s} }  \right)^n(1+o_{\ell_{\rm min}}(1)). \label{eq:2xxxx}
\end{eqnarray}  
In Appendix~\ref{derivP}, we derive using heuristic   mathematical arguments the equality  
\begin{equation}
\frac{\ell_-}{\ell_+}=\frac{|\log p_-|}{|\log p^\dagger_+ |}(1+o_{\ell_{\rm min}}(1)) \label{eq:pp}
\end{equation}  
 for currents  $J$ in stationary Markov jump processes $X$ taking values in a finite set $\mathcal{X}$.
Multiplying the right-hand side of Eq.~(\ref{eq:2xxxx})  with 
\begin{equation}
1 = \left(\frac{\ell_+}{\ell_-} \frac{|\log p_-|}{|\log p^\dagger_+|}\right)^n, \label{eq:conVer}
\end{equation} 
we obtain Eq.~(\ref{eq:2}), which concludes the derivation.    

The partial result Eq.~(\ref{eq:2xxxx}) is interesting in its own right as it is an extremely, general relation that has been derived with full mathematical rigour.  Indeed, Lemma~\ref{Lemma1} holds for  processes $X$ that are reversible,  in the sense  that the stochastic entropy production $S(t)$ is well defined, and obey a  weak stationary condition, in the sense that $S(t)/t$ converges almost surely to a deterministic limit.    Remarkably, we do not require a large deviation principle for $J$, and we do not even require a large deviation principle for  $S$.   
  
To obtain Eq.~(\ref{eq:2})  from Eq.~(\ref{eq:2xxxx}), we have used Eq.~(\ref{eq:pp}).   Note that (\ref{eq:pp}) has not been derived with the same mathematical rigour as (\ref{eq:2xxxx}), and  it is not clear whether  Eq.~(\ref{eq:pp}) is valid beyond the setup of stationary Markov jump  processes.  However,  Eq.~(\ref{eq:2}) can be interpreted as a tradeoff relation between dissipation, speed, and uncertainty, whereas  the interpretation of  Eq.~(\ref{eq:2xxxx}) as a trade-off relation is less clear,    as  $p^\dagger_+$ is the splitting probability in  the time-reversed process.    
  \subsection{Derivation of the  asymptotic equality Eq.~(\ref{eq:eq})  based on  Theorem~\ref{Theorem2} }
   We set again $\mathbb{P}_+ = \mathbb{P}$,  $\mathbb{P}_- = \mathbb{P}\circ \Theta$, and $\psi(t)=t$,  obtaining
 $\overline{\lambda}_+ = \dot{s}$ and $\Psi(t) = t$.  Therefore,  Eq.~(\ref{eq:THPA}) reads 
 \begin{eqnarray}
\langle T^n_S\rangle  =  \left(\frac{|    \log p^\dagger_+ |}{ \dot{s} }  \right)^n(1+o_{\ell_{\rm min}}(1)). \label{eq:eqxxxx}
\end{eqnarray}   
In Sec.~\ref{sec:equal} we have shown that    
\begin{equation}
\ell_- = |\log p_- |(1+o_{\ell_{\rm min}}(1)), \label{eq:Spp}
\end{equation}
which follows readily from the martingale property of $e^{-S(t)}$.    Analogously, one can show that  \cite{neri2017statistics}
\begin{equation}
\ell_+=|\log p^\dagger_+ |(1+o_{\ell_{\rm min}}(1)).  \label{eq:Spp2}
\end{equation}
Multiplying the right-hand side of Eq.~(\ref{eq:eqxxxx})  with 
\begin{equation}
1 = \left(\frac{\ell_+}{\ell_-} \frac{|\log p_-|}{|\log p^\dagger_+|}\right)^n, \label{eq:ratioxx}
\end{equation} 
we obtain Eq.~(\ref{eq:eq}), which completes the derivation.     Note that because of the martingale property of $e^{-S}$ the Eq.~(\ref{eq:eq}) can be derived with full mathematical rigour in a very general setup.

          \section{Connections between Eq.~(\ref{eq:2}) and  other thermodynamic trade-off relations }
          \label{sec:prev}
          We point out connections between Eq.~(\ref{eq:2}) and  thermodynamic trade-off inequalities that appeared before in the literature.

\subsection{Decision making in the arrow of time} 
Equation (9) in Ref.~\cite{roldan2015decision} implies  Eq.~(\ref{eq:2xxxx}).   Indeed, Equation (9) in Ref.~\cite{roldan2015decision} implies that in the limit $\ell_{\rm min}\rightarrow \infty$,
\begin{equation}
\dot{s} \geq   \frac{ D\left[\rightarrow || \leftarrow \right]}{\langle T_J\rangle} = p_+ \log \frac{p_+}{p^\dagger_+} + p_- \log \frac{p_-}{p^\dagger_-} = |\log p^\dagger_+| (1+o_{\ell_{\rm min}}(1)),
\end{equation}
which  is equivalent to Eq.~(\ref{eq:2xxxx}).

 The main distinction between the Eq.~(9) in  Ref.~\cite{roldan2015decision} and  Eq.~(\ref{eq:2}) in the present paper is that Eq.~(\ref{eq:2})   involves  $p_-$, while  Eq.~(9) of  \cite{roldan2015decision} involves  $p^\dagger_+$.   This distinction is relevant as $p^\dagger_+$ involves fluctuations of the process in a time-reversed dynamics that is not always accessible.
\subsection{Dissipation-time uncertainty relation}
Eq.~(\ref{eq:2}) is  related to the  so-called dissipation-time uncertainty relation that states 
\begin{equation}
\langle T_J \rangle  \geq \frac{1}{\dot{s}}  \label{eq:timeDis}
\end{equation} 
 in the limit $|\log p^\dagger_+| \gg 1$ \cite{PhysRevLett.125.120604, PhysRevLett.128.050603}.      
 
The dissipation-time uncertainty relation  is a loose bound when compared to the bounds Eqs.~(\ref{eq:2}) and Eq.~(\ref{eq:2xxxx}).     Indeed, comparing Eq.~(\ref{eq:timeDis}) with (\ref{eq:2}), or better  Eq.~(\ref{eq:timeDis}) with (\ref{eq:2xxxx}), we conclude that 
 \begin{equation}
 \langle T_J \rangle  \geq \frac{c}{\dot{s}}(1+o_{\ell_{\rm min}}(1))
 \end{equation}
 holds for any prefactor $c\geq0$.   This is because the prefactor in Eq.~(\ref{eq:2xxxx})  is $c=|\log p^\dagger_+|$ and thus  diverges when $p^\dagger_+$ is small.      
\subsection{Thermodynamic uncertainty relations}
The  bound (\ref{eq:2}) follows from the   bound Eq.~(\ref{eq:JBoundx}) on the large deviation function of a stochastic current.      Since also the   thermodynamic uncertainty relations have been derived using the bound (\ref{eq:JBoundx}), see Refs.~\cite{barato2015thermodynamic,gingrich2016dissipation, pietzonka2016universalx}, we discuss here how the  bound Eq.~(\ref{eq:2})  is related to thermodynamic uncertainty relations.   

The thermodynamic uncertainty relation  bounds from below the   Fano factor of  stochastic currents, i.e., \cite{barato2015thermodynamic,gingrich2016dissipation}
\begin{equation}
 \frac{\sigma^2_{J}}{2 \overline{j}^2}   \geq   \frac{1}{\dot{s}}, \label{eq:1x}
\end{equation} 
where $\overline{j}$ is the current rate and 
\begin{equation}
\sigma^2_{J} = \lim_{t\rightarrow \infty} \frac{1}{t} \left(\langle J^2(t) \rangle-  \langle J(t)\rangle^2\right).
\end{equation} 
A first-passage time thermodynamic uncertainty relation was derived in
Ref.~\cite{gringich2017bis}, viz., 
\begin{equation}
\frac{\langle T^2_J\rangle - \langle T_J\rangle^2}{2\langle T_J\rangle}   \geq   \frac{1}{ \dot{s}} (1+o_{\ell_{\rm min}}(1)).\label{eq:T2}
\end{equation}    

The bounds Eqs.~(\ref{eq:2}), (\ref{eq:1x}) and (\ref{eq:T2})  all express  a nonequilibrium tradeoff between dissipation, speed, and uncertainty.  The differences between these  bounds is in how they quantify speed and uncertainty.  The thermodynamic uncertainty relation Eq.~(\ref{eq:1x})  quantifies speed with $\overline{j}$ and uncertainty with $\sigma^2_{J}$, the first-passage time uncertainty relation  Eq.~(\ref{eq:T2}) quantifies speed with $\langle T_J\rangle$ and uncertainty with $\langle T^2_J\rangle - \langle T_J\rangle^2$, and the bound Eq.~(\ref{eq:2}) quantifies speed with $\langle T_J\rangle$ and uncertainty with $p_-$.

An important distinction between the thermodynamic uncertainty relations, Eqs.~(\ref{eq:1x}) and Eq.~(\ref{eq:T2}), and the bound Eq.~(\ref{eq:2})  on the moments of first-passage times, is that the latter is tight  when $J=S$ while the former is loose.   Indeed, if $J(t)= S(t)$, then Eq.~(\ref{eq:2}) becomes the equality  Eq.~(\ref{eq:eq}), whereas the Eqs.~(\ref{eq:1x}) and Eq.~(\ref{eq:T2}) are in general not equalities, even not when  $J(t)= S(t)$ \cite{pigolotti2017generic, busiello2019hyperaccurate}.    How is this possible, given that the relations (\ref{eq:2}), (\ref{eq:1x}), and 
(\ref{eq:T2})  are all derived from the same bound, viz.,  Eq.~(\ref{eq:JBoundx}) on the large deviation function?    We can understand this as follows.   Eq.~(\ref{eq:2}) is obtained from evaluating the bound    (\ref{eq:JBoundx}) at the value $z = -\overline{j}$, while  Eqs.~(\ref{eq:1x}) and (\ref{eq:T2})  rely on the properties of the large deviation function  in the vicinity of the point $z= \overline{j}$, in particular, the derivatives of the large deviation function at this point.   As   observed in Ref.~\cite{pietzonka2016universalx},  the large deviation function bound  Eq.~(\ref{eq:JBoundx}) is tight when $J=S$ and $z=-\dot{s}$, while this is not the case for the  slope of the bound at $z=\dot{s}$, as the large deviation function of $S$ is in general not a parabola.   

The tightness of the bound   (\ref{eq:JBoundx})  for $J=S$ at $z=-\dot{s}$ can also be understood from the 
Gallavotti-Cohen fluctuation relation \cite{lebowitz1999gallavotti}
\begin{equation}
 \mathcal{J}(z) - \mathcal{J}(-z) = -z \label{eq:Jz}.  
\end{equation}  
For $z = -\dot{s}$,  the Gallavotti-Cohen relation implies that $\mathcal{J}(-\dot{s})=\dot{s}$ as $\mathcal{J}(\dot{s}) = 0$.    One verifies readily that the right hand side of Eq.~(\ref{eq:JBoundx}) is equal to $\dot{s}$ when $z = -\dot{s}$ and $\overline{j} = \dot{s}$, and hence the bound  Eq.~(\ref{eq:JBoundx}) is tight when $J=S$ and $z=\dot{s}$.       
The Gallavotti-Cohen fluctuation  relation  Eq.~(\ref{eq:Jz}) also applies for currents $J$ that are proportional to the entropy production \cite{barato2012gallavotti, barato2012symmetry}, and hence the bound Eq.~(\ref{eq:2}) is also tight for those currents.    Importantly, the fluctuation relation  does not apply generically  for  currents in multicyclic networks that are not proportional to $S$ \cite{gaspard2013multivariate, barato2012gallavotti, barato2012symmetry, polettini2019effective}, and hence the inequality (\ref{eq:2}) is not tight for generic currents.

\section{Recovering the Van't Hoff-Arrhenius law in the near equilibrium limit}  \label{sec:Arrh}  
We show that near equilibrium Eq.~(\ref{eq:2})  implies that $1/\langle T_J\rangle$ is smaller or equal than  the Van't Hoff-Arrhenius law Eq.~(\ref{eq:Arrh}).   To this aim, we consider a nonequilibrium version of Kramers' model \cite{kramers1940brownian, hanggi1990reaction}.   Details of the calculations can be found in the Appendices~\ref{eq:Period} and \ref{eq:Period2}.

\begin{figure}[t!]
\centering
\includegraphics[width=0.5\textwidth]{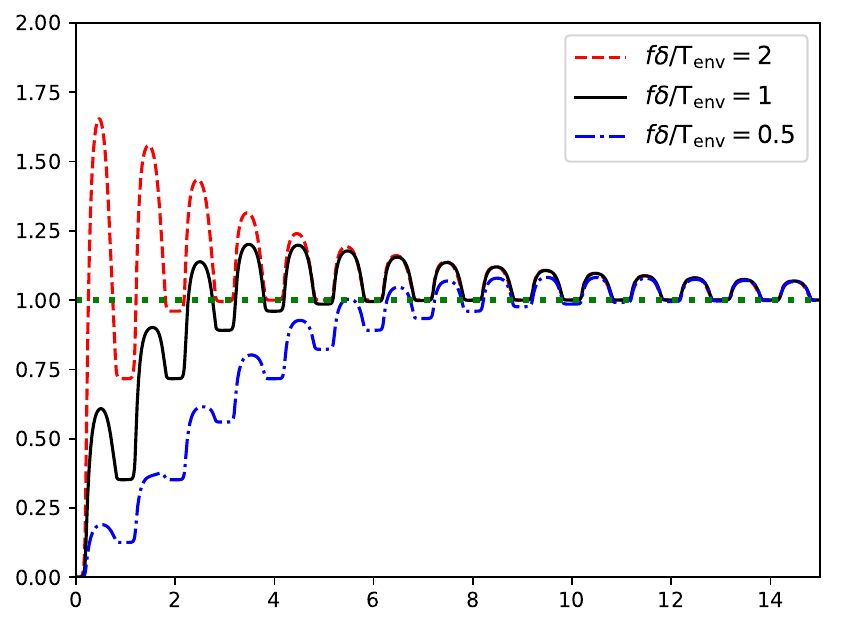}
 \put(-260,80){\Large$\frac{\langle T_X\rangle \dot{s}}{|\log p_-|}$}
  \put(-115,-10){\Large$\ell/\delta$} 
\caption{ {\it Asymptotic lower bound on the mean first-passage time.}  The ratio  $\langle T_X\rangle \dot{s}/|\log p_-|$ is plotted as a function of $\ell/\delta$, where $T_X$ is the first-passsage time Eq.~(\ref{eq:def}) of the nonequilibrium   Kramer process $X$ described by Eq.~(\ref{eq:nonequilb}) with triangular potential  $u$ given by Eq.~(\ref{eq:modelU}).  Curves shown are for the parameters  $\delta=5$, $x^\ast=1$, $u_0 = 10$, $\mathsf{T}_{\rm env}=1$, and $\gamma=1$, and the values of $f$ are given in the figure legend.   } \label{fig:Meantime}
\end{figure}

\begin{figure}[t!]
\centering
\includegraphics[width=0.335\textwidth]{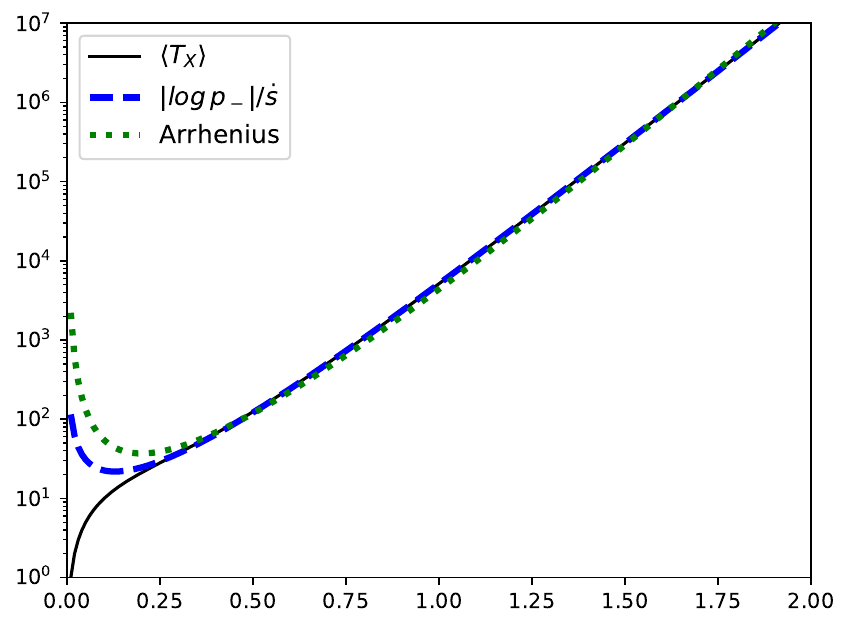}
\hspace{-3mm}
\includegraphics[width=0.335\textwidth]{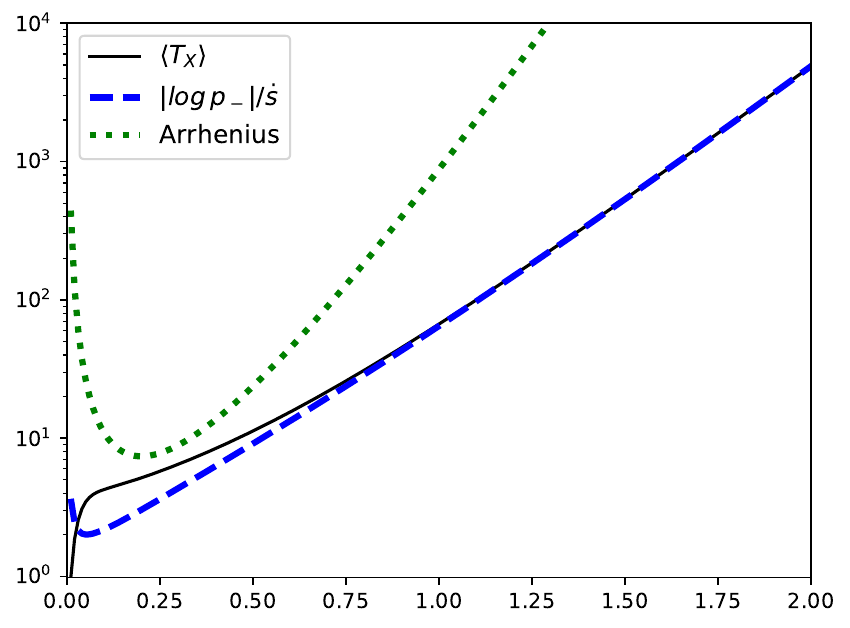}
\hspace{-3mm}
\includegraphics[width=0.335\textwidth]{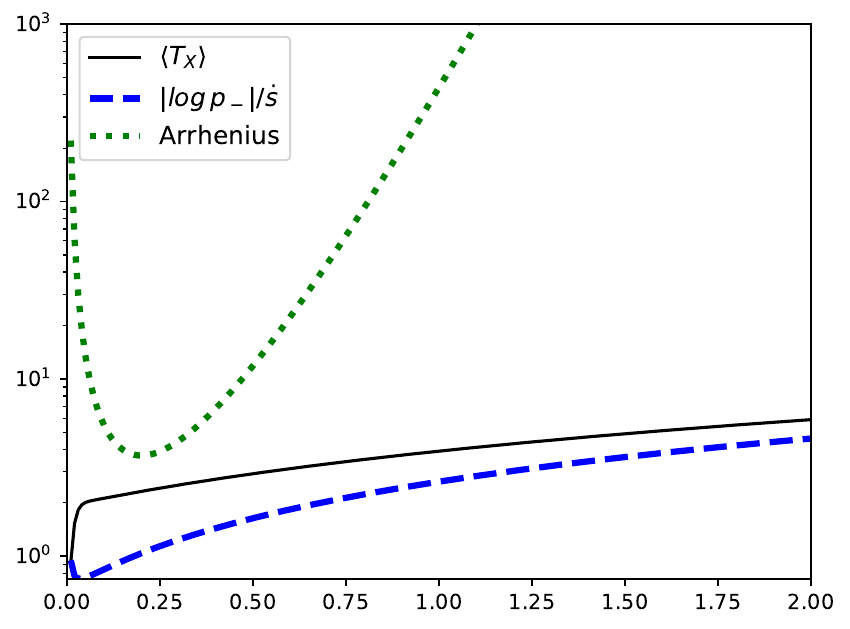}
 \put(-85,-10){$1/\mathsf{T}_{\rm env}$} 
 \put(-230,-10){$1/\mathsf{T}_{\rm env}$} 
\put(-370,-10){$1/\mathsf{T}_{\rm env}$} 
\caption{ {\it Extension of  the Van't Hoff-Arrhenius law to nonequilibrium stationary states.}  The mean-first passage time $\langle T_X\rangle$ (solid black line) of the  reaction coordinate $X$, described by Eq.~(\ref{eq:nonequilb}) with triangular potential  $u$ given by Eq.~(\ref{eq:modelU}), is plotted as a function of the inverse temperature $1/\mathsf{T}_{\rm env}$, and $\langle T_X\rangle$ is also compared with its asymptotic value $|\log p_-|/\dot{s}$ for large thresholds $\ell$ (blue dashed line) and with the Van't Hoff-Arrhenius law Eq.~(\ref{eq:Arrhx}) (green dotted line).    The model parameters are    $\delta=5$, $x^\ast=1$, $u_0 = 10$, $\mathsf{T}_{\rm env}=1$ and $\gamma=2$ and the values of $f$ are  $f=1$, $f=5$ and $f=10$ from left to right, respectively.   The threshold for the first-passage time $T_X$, which is defined in Eq.~(\ref{eq:def2}), is $\ell=10$. } \label{fig:Meantime2}
\end{figure}

We consider  a reaction coordinate $X\in\mathbb{R}$ that is described by the overdamped Langevin equation 
\begin{eqnarray}
{\rm d}X(t) = \frac{f-\partial_x u(X(t))}{\gamma} \:{\rm d}t +  \sqrt{2  \mathsf{T}_{\rm env}/\gamma}\:{\rm d}W(t), \label{eq:nonequilb}
 \end{eqnarray} 
 where $u(x)$ is a periodic potential with period $\delta$, i.e., $u(x+\delta) = u(x) = u(x-\delta)$,  $f$ is a nonconservative force,  $\gamma$ is a friction coefficient,  $W(t)$ is a  standard Wiener process that models the thermal noise, and $\mathsf{T}_{\rm env}$ is the temperature of the environment.  We assume that  at time $t=0$, $X(0) = 0$ and  $W(0) = 0$.       Note that this example goes beyond the pardigm of a Markov jump process, but the theory will still apply.

 The  variable $X$  models, e.g.,  a reaction coordinate that tracks the progress of a chemical reaction.  In this scenario, $E_{\rm b} = {\rm max}_{x}u(x)-\min_{x}u(x)$ is  the Gibbs free energy barrier that separates two  chemical states and the ratio $[X/\delta]$ is  the number of cycles of the reaction that have  been completed;  $[a]$ denotes the largest integer smaller than $a$. 
 
 Figure~\ref{fig1Mx} presents two trajectories generated by Eq.~(\ref{eq:nonequilb}) for the special case where $u(x)$ is the triangular potential
 \begin{equation}
u(x) = \left\{\begin{array}{ccc} u_0 \frac{x}{x^\ast}  &{\rm if}&   x\in [0,x^\ast),\\  u_0 \frac{\delta-x}{\delta-x^\ast}  &{\rm if}& x\in  [x^\ast,\delta).\end{array}\right. \label{eq:modelU}
\end{equation}   
    From Fig.~\ref{fig1Mx} we observe that the dynamics consists of a sequence of jumps between metastable states that are centred at the positions $nx^\ast$ with $n\in\mathbb{Z}$.     In  the equilibrium case with $f=0$ the jumps are activated by thermal fluctuations and the Van't Hoff-Arrhenius law Eq.~(\ref{eq:Arrh}) applies.    On the other hand, when $f>0$, then jumps in one direction over the energy barrier $E_{\rm b}$ are facilitated by the external driving $f$, while in the reverse direction jumps are less likely.  In this case,   although the Van't Hoff-Arrhenius law Eq.~(\ref{eq:Arrh})  does not apply, the Eqs.~(\ref{eq:2}) and (\ref{eq:eq}) apply and can thus be considered nonequilibrium versions of the Van't Hoff-Arrhenius law.

For values  $f\delta/E_{\rm b}> 0$  the chemical reaction settles into a  nonequilibrium stationary state with an entropy production rate (see Appendix~\ref{appC2}) 
\begin{equation}
\dot{s} = \frac{f\delta}{\mathsf{T}_{\rm env}}j_{\rm ss}, \label{eq:sdot}
\end{equation}
where $j_{\rm ss}$ is the stationary current (see Appendix~\ref{appC1})
 \begin{equation}
j_{\rm ss} = \frac{\mathsf{T}_{\rm env}}{\gamma}\frac{1- e^{\frac{-f\delta}{\mathsf{T}_{\rm env}}}}{\int^\delta_0 {\rm d}y\: w(y) \left(\int^{y+\delta}_{y}{\rm d}x' \frac{1}{w(x')}\right)}, \label{eq:jssx}
 \end{equation} 
  and where $w(x)   = \exp(-(u(x)-fx)/\mathsf{T}_{\rm env})$.

Consider the first time
\begin{equation}
T_X = {\rm inf}\left\{t>0: X(t)\notin (-\ell, \ell )\right\} \label{eq:def2}
\end{equation} 
when  the reaction has  completed a net number  $[\ell/\delta]$ of cycles in either the forward or backward direction.   Since, (see Appendix~\ref{appC2})
\begin{equation}
S(t) =  \frac{f X(t)}{\mathsf{T}_{\rm env}} + o(t) \label{eq:SBrown}
\end{equation}
the equality (\ref{eq:eq}) applies to  $T_X$.    In Appendices~\ref{App:C3} and \ref{App:C4}, we derive  explicit analytical expressions for  the splitting probability $p_-$ and the mean first-passage time $\langle T_X\rangle$, respectively, which we omit here as the expressions  are involved.    However, as shown in Appendix~\ref{App:C5}, in   the limit of   large $\ell$ we obtain the formula   
\begin{equation}
 \frac{|\log p_-|}{  \langle T_X\rangle } = \dot{s} + O\left(\frac{1}{\ell}\right),   \label{eq:asymptT}
\end{equation}
in correspondence with Eq.~(\ref{eq:eq}), where $O$ denotes the big-O notation. 
The big-O notation $O(f(\ell))$ denotes an arbitrary  function $g(\ell)$ for which it holds that   there exists a constant $c$  such that $g(\ell)<cf(\ell)$ for $\ell$ large enough.   Hence, in this case,    the correction term in  Eq.~(\ref{eq:eq}) is of order $1/\ell$.    

In Fig.~\ref{fig:Meantime} we  plot $|\log p_-| \dot{s} /\langle T_X\rangle$  as a function of $\ell/\delta$.   The figure demonstrates the convergence of   $|\log p_-| \dot{s} / \langle T_X\rangle $ to its universal limit   for different values of the nonequilibrium driving $f\delta/\mathsf{T}_{\rm env}$.    Observe the oscillations of $|\log p_-| \dot{s} /\langle T_X\rangle$.   These oscillations appear  because for the selected parameters  it holds that $E_{\rm b} \gg  \mathsf{T}_{\rm env}$, and therefore the process consists of discrete-like hops over the energy barrier $E_{\rm b}$ that  represent the subsequent completion cycles of the chemical reaction.

In the limits $\mathsf{T}_{\rm env}\rightarrow 0$  and $f\delta/\mathsf{T}_{\rm env}\rightarrow 0$, the Eq.~(\ref{eq:eq})  leads to a Van't Hoff-Arrhenius law for $1/\langle T_X\rangle$.   Indeed, as shown in Appendix~\ref{app:C6}, taking the limits $\mathsf{T}_{\rm env}\rightarrow 0$  and $f\delta/\mathsf{T}_{\rm env}\rightarrow 0$ in the expression of the stationary current   Eq.~(\ref{eq:jssx}),  we obtain
\begin{equation}
 j_{\rm ss} =\kappa \frac{f\delta }{\gamma }  e^{\frac{-E_{\rm b}}{\mathsf{T}_{\rm env}}}, \label{eq:jsso}
  \end{equation} 
  where the prefactor 
  \begin{equation}
  \kappa = \frac{\sqrt{-u''_{\rm min} u''_{\rm max}}}{2\pi\mathsf{T}_{\rm env}}
  \end{equation}
   if the second derivatives $u''_{\rm min}$ and $u''_{\rm max}$ evaluated at the minimum and maximum of $u(x)$, respectively, exist.   In the special case of the triangular potential, given by Eq.~(\ref{eq:modelU}), the second derivatives $u''_{\rm min}$ and $u''_{\rm max}$ do not exist, and therefore
     \begin{equation}
\frac{1}{\kappa} = \left(\frac{1}{u^+_{\rm max}}-\frac{1}{u^-_{\rm max}}\right) \left(\frac{1}{u^+_{\rm min}}-\frac{1}{u^-_{\rm min}}\right)  \mathsf{T}^2_{\rm env}
  \end{equation}
  where  $u^{+}_{\rm max}$ and $u^{-}_{\rm max}$ denote the left and right derivatives evaluated at the maximum of $u(x)$. 
In addition, as shown in Appendix~\ref{app:C6},  in the limit of $\mathsf{T}_{\rm env}\rightarrow 0$  and $f\delta/\mathsf{T}_{\rm env}\rightarrow 0$ the logarithm of the splitting probability is inversely proportional to the temperature, viz.,
\begin{equation}
\log p_-= -\frac{f\ell}{\mathsf{T}_{\rm env}} + O_\ell(1). \label{eq:pMf}
\end{equation} 
Combining Eqs.~(\ref{eq:eq}),  (\ref{eq:sdot}), (\ref{eq:jsso}), and (\ref{eq:pMf})  we obtain the Van't Hoff-Arrhenius  law 
\begin{equation}
 \langle T_X\rangle =  \frac{\ell}{\delta} \frac{\gamma }{f\delta}  \frac{1}{\kappa} e^{\frac{E_{\rm b}}{\mathsf{T}_{\rm env}}}. \label{eq:Arrhx}
\end{equation}

  In Fig.~\ref{fig:Meantime2} we compare $\langle T_X\rangle$   with its asymptotic value $|\log p_-|/\dot{s}$, given by Eq.~(\ref{eq:eq}), and with the Van't Hoff-Arrhenius law, given by Eq.~(\ref{eq:Arrhx}), for three values of the driving force $f$.      We make a few interesting observations: (i) the Van't Hoff-Arrhenius law approximates well $\langle T_X\rangle$ up to moderately large values of  $f\delta/\mathsf{T}_{\rm env}<5$; (ii)  for $f\delta/\mathsf{T}_{\rm env}>25$, $\langle T_X\rangle$ is significantly smaller than what is predicted by the  Van't Hoff-Arrhenius law, implying that the nonequilibrium driving speeds up the process.   Nevertheless,  $\langle T_X\rangle$ is larger than $|\log p_-|/\dot{s}$, which is a consequence of the  trade-off between speed, uncertainty, and dissipation as expressed by Eq.~(\ref{eq:2}); (iii) the asymptotic expression  $|\log p_-|/\dot{s}$ given by Eq.~(\ref{eq:eq})   approximates  $\langle T_X\rangle$ already well for relatively small values of the threshold, viz., $\ell/\delta=2$.  
  
  Taken together, we conclude that the Eqs.~(\ref{eq:2}) and (\ref{eq:eq}) recover the Van't Hoff-Arrhenius law near equilibrium   because   $\dot{s}\sim \exp(-E_{\rm b}/\mathsf{T}_{\rm env})$  in the limit of small temperatures $\mathsf{T}_{\rm env}\approx 0$ and small driving force $f\delta/\mathsf{T}_{\rm env}\approx 0$.   On the other hand, one can can significantly increase the reaction rate $1/\langle T_X\rangle$ by driving a system out of equilibrium, even though the reaction rates are still bounded from above  by the inequality   Eq.~(\ref{eq:2}) that expresses a tradeoff between speed, uncertainty, and dissipation.

\section{Illustration of the tightness of the first-passage time bounds with a biased random walker}\label{sec:inf} 

As stated before, the bound Eq.~(\ref{eq:2}) is tight for $J=S$, whereas the  thermodynamic uncertainty relation Eq.~(\ref{eq:T2}) is loose when $J=S$.   In this section we compute the moments $\langle T^n_J\rangle$ on an example of a nonequilibrium process to better understand the origin of the tightness of the bound Eq.~(\ref{eq:2}).

We consider a hopping process   $X\in\mathbb{Z}$ described by 
\begin{equation}
{\rm d}X(t) =  {\rm d}N_+(t) -  {\rm d}N_-(t), \label{eq:randomwalkX}
\end{equation} 
where $N_+$ and $N_-$ are two counting process with rates $k_+$ and $k_-$, respectively.    The bias of the process is defined by  the ratio 
\begin{equation}
b := \frac{k_-}{k_+} = \exp\left(-\frac{a}{\mathsf{T}_{\rm env}}\right)
\end{equation}  
where  $a$ is the thermodynamic affinity and $\mathsf{T}_{\rm env}$ the temperature of the environment.    We  assume, without loss of generality, that $k_-<k_+$ so that $b<1$.

  The coordinate $X$  may represent the number of  times a chemical reaction has been completed or the position of a molecular motor on a biofilament.   In the former,  $a=\Delta \mu$ is the difference between the sum of the chemical potentials of the   reagents and the products of the chemical reaction, and in the latter      $a=f \delta$  is the work performed by the system on the motor when it moves forwards.     Hence,   the stochastic entropy production $S$ obeys 
 \begin{equation}
 {\rm d}S(t) = \frac{a}{\mathsf{T}_{\rm env}} {\rm d}X(t) 
\end{equation}
and 
\begin{equation}
\dot{s} =  \Big\langle  \frac{{\rm d}S}{{\rm d}t}\Big\rangle = \frac{a}{\mathsf{T}_{\rm env}}(k_+-k_-) \label{eq:sigmaxM}
\end{equation}
is the entropy production rate. 
 
We consider the first passage time 
\begin{equation}
T_X = {\rm inf}\left\{t>0: X(t)-X(0)\notin (-\ell_-,\ell_+)\right\} ,\label{eq:def3}
\end{equation} 
which is also the first-passage time    $T_S$ for the stochastic entropy production with thresholds $s_- = a\ell_-/\mathsf{T}_{\rm env}$ and $s_+ = a\ell_+/\mathsf{T}_{\rm env}$.    
 
 The splitting probabilities $p_-$ and $p_+$  are given by  (see Appendix~\ref{App:SplitE}) 
 \begin{eqnarray}
p_+ = \frac{1 -b^{[\ell_-]} }{ 1-  b^{[\ell_-]+[\ell_+]}} \  \ {\rm and }\  \  p_- = b^{[\ell_-]} \frac{1 -b^{[\ell_+]} }{ 1-  b^{[\ell_-]+[\ell_+]}}, \label{eq:pMP}
\end{eqnarray}  
where $[\ell_-]$ and $[\ell_+]$ denote the largest integers that are smaller than $\ell_-$ and $\ell_+$, respectively.
The  generating function 
 \begin{equation}
g(y) = \langle e^{-y T_X(k_-+k_+)}  \rangle \label{eq:gJump}
 \end{equation} 
 is  for all $y>0$ given by (see  Appendix~\ref{App:Gen})
 \begin{eqnarray}
g(y) =  \left(\frac{2}{\zeta _+(y)}\right)^{[\ell_+]}  \frac{1-\left(\frac{\zeta _-(y)}{ \zeta _+(y) }\right)^{[\ell_-]}}{1-\left(\frac{\zeta_-(y)}{ \zeta _+(y)  } \right)^{^{[\ell_-]+[\ell_+]}}}   
\nonumber\\
  +  \left(\frac{\zeta _-(y)}{2}\right)^{[\ell_-]}   \frac{1- \left( \frac{\zeta_-(y)}{\zeta _+(y)}\right)^{[\ell_+]}}{1-  \left(\frac{\zeta _-(y)}{\zeta _+(y)} \right)^{[\ell_-]+[\ell_+]} } ,\label{eq:genFinalJump}
\end{eqnarray} 
 where 
\begin{equation}
\zeta_{\pm}(y) = \beta(y) \pm \sqrt{-4b +\beta^2(y)}\label{eq:zetapm}
\end{equation}  
and 
\begin{equation}
\beta(y) =(1+y) (1+b). \label{eq:betay}
\end{equation}

The moments of $T_X$ follow from 
\begin{equation} 
\langle T^n_X\rangle =  \left(\frac{-1}{k_-+k_+}\right)^n\left.\frac{{\rm d}^n}{({\rm d} y)^n} g(y)\right|_{y=0}, \label{eq:TMom}
\end{equation}    
where $n\in\mathbb{N}$.

\begin{figure}[t!]
\centering
\includegraphics[width=0.5\textwidth]{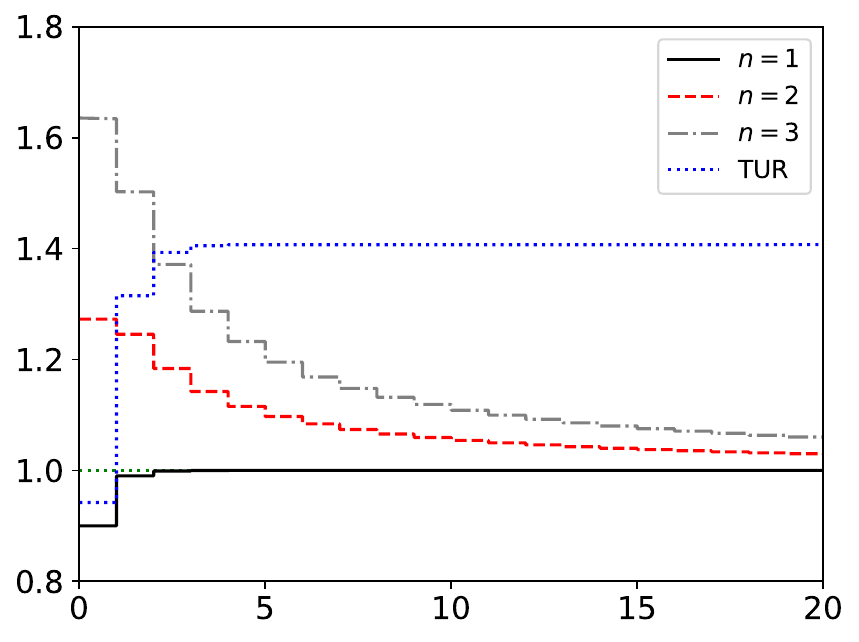}
 \put(-270,80){\Large$\dot{s}\frac{\langle T^n_X \rangle^{1/n}}{|\log p_-|}$}
  \put(-100,-10){\Large$\ell$} 
\caption{{\it Comparing the tightness of the  first-passage time bounds Eq.~(\ref{eq:2})  with the thermodynamic uncertainty   relation Eq.~(\ref{eq:T2}).} The ratio $\dot{s} \langle T^n_X \rangle^{1/n}/|\log p_-|$ for $n=1,2,3$ and the thermodynamic uncertainty (TUR) ratio   $\dot{s}(\langle T^2_X\rangle-\langle T_X\rangle^2)/(2\langle T_X\rangle)$ are plotted as a function of $\ell = \ell_-=\ell_+$  for a biased random walk process  $X$ described by Eq.~(\ref{eq:randomwalkX}) with $k_+ = 1$ and $b= 0.1$.     Note that the inequalities Eq.~(\ref{eq:2})   are tight for $\ell\rightarrow \infty$, while the uncertainty relation Eq.~(\ref{eq:T2}) is loose.   } \label{fig2M}
\end{figure}

Figure~\ref{fig2M} compares the  first-passage time bounds Eqs.~(\ref{eq:2}) with the thermodynamic uncertainty relation Eq.~(\ref{eq:T2}).  The plotted curves are obtained from the explicit analytical  expressions for $\dot{s}$ and $p_-$, given by Eqs.~(\ref{eq:sigmaxM}) and (\ref{eq:pMP}), respectively, and from explicit analytical expressions for $\langle T^n \rangle$ that we have obtained from the  Eqs.~(\ref{eq:gJump}-\ref{eq:TMom}) and can be found in the Appendix~\ref{App:E6}.  The figure shows that for large values of the  first-passage thresholds the bounds Eqs.~(\ref{eq:2})  are tight, as predicted by Eq.~(\ref{eq:eq}), while the thermodynamic uncertainty relation is loose.     
 
 In Fig.~\ref{fig2M}  we also observe that the first moment $\langle T\rangle$ converges fast to its asymptotic value, while higher order moments $\langle T^2\rangle$ and $\langle T^3\rangle$ converge slowly to their asymptotic values.      Using Eqs.~(\ref{eq:sigmaxM}), (\ref{eq:pMP}), and (\ref{eq:gJump}-\ref{eq:TMom}), we obtain the asymptotics  (see  Appendices~\ref{App:E8} and \ref{App:E9})   
  \begin{eqnarray}
\frac{[\ell_+]}{[\ell_-]}\frac{|\log p_-|  }{\langle T_X \rangle} &=&  \dot{s}  + O\left(b^{[\ell_-]}\right), \label{eq:T1Asympt}
  \end{eqnarray} 
  and  for $n>1$
\begin{equation}
\frac{[\ell_+]}{[\ell_-]}\frac{|\log p_-|  }{\left(\langle T^n_X \rangle\right)^{1/n}} = \dot{s} + O\left(\frac{1}{[\ell_+]}\right). \label{eq:T2Asympt}
\end{equation}  
Hence, the first moment converges exponentially fast to the entropy production rate $\dot{s}$, while the higher order moments converge as $1/[\ell_+]$ to their asymptotic value.  Consequently, in this example the first moment is more effective for the inference of the entropy production rate $\dot{s}$.    However, from Eq.~(\ref{eq:asymptT}) we can conclude that the exponential fast convergence for the first moment is a model specific property.  
   
The asymptotic expression for the  thermodynamic uncertainty relation depends on the subleading $O\left(1/[\ell_+]\right)$ term in Eq.~(\ref{eq:T2Asympt}), and is given by 
\begin{equation}
\frac{ 2\langle T_X\rangle}{\langle T^2_X\rangle-\langle T_X\rangle^2} = \frac{2(k_+-k_-)}{\tanh\left(\frac{a}{2\mathsf{T}_{\rm env}}\right)}  + O\left(b^{[\ell_-]}\right).   \label{eq:T2Unc}
\end{equation}  
Since $\tanh(x)\leq x$,  the thermodynamic uncertainty relation Eq.~(\ref{eq:T2}) holds.    However,  contrary to Eqs.~(\ref{eq:T1Asympt}) and (\ref{eq:T2Asympt}), the thermodynamic uncertainty relation is not tight in the limit of large thresholds and the ratio Eq.~(\ref{eq:T2Unc})  depends on the affinity $a/\mathsf{T}_{\rm env}$ of the process.

Taken together, we can conclude that the   equality Eq.~(\ref{eq:eq}), and thus the tightness of the bound Eq.~(\ref{eq:2}) for $J=S$, follows from the universality of the leading order term in the  Eqs.~(\ref{eq:T1Asympt}) and (\ref{eq:T2Asympt}) for $\langle T^n_X\rangle$.  On the other hand, the looseness of the thermodynamic uncertainty relation  Eq.~(\ref{eq:T2}) for $S=J$ is a consequence of the nonuniversality of the subleading term of $\langle T^2_X \rangle$ in the Eqs.~(\ref{eq:T1Asympt}) and (\ref{eq:T2Asympt}) and therefore the right-hand side of Eq.~(\ref{eq:T2Unc}) depends on the affinity $a$ of the process.

 \section{Discussion}  \label{sec:discu}
Driving a system out of equilibrium  can speed up the rate  of a  chemical reaction.    However,   there exists a fundamental thermodynamic tradeoff between speed,  the fluctuations in the process, and the rate of dissipation.   The main contribution of this paper is the derivation of a  universal inequality, Eq.~(\ref{eq:2}), that expresses in   nonequilibrium stationary states a thermodynamic tradeoff  between speed, uncertainty,   and dissipation,   which are quantified in terms of   the mean first passage time $\langle T_J \rangle$,  the splitting probability $p_-$, and the dissipation rate $\dot{s}$, respectively.      The main advantage of the inequality (\ref{eq:2}) with respect to previously published trade-off relations, such as the thermodynamic uncertainty relations \cite{barato2015thermodynamic, pietzonka2016universalx, gingrich2016dissipation, pietzonka2017finite, horowitz2017proof, proesmans2017discrete,  hasegawa2019fluctuation, shreshtha2019thermodynamic, dechant2020fluctuation, falasco2020unifying}, is that Eq.~(\ref{eq:2}) is an equality  when $J(t)=cS(t)$ with $c$ a time and trajectory independent constant, see  Eq.~(\ref{eq:eq}), and hence the bound is optimal in this case.

From a  physical and mathematical point of view, the Eqs.~(\ref{eq:2}) and  (\ref{eq:eq}) are interesting as they are  related to  thermodynamic uncertainty relations, the   Van't Hoff-Arrhenius law,  martingale theory,   and the theory of sequential hypothesis testing.    Indeed, both Eq.~(\ref{eq:2}) and the thermodynamic uncertainty relations   are a consequence of  the    large deviation function bound Eq.~(\ref{eq:JBoundx}).  On the other hand, the  equality Eq.~(\ref{eq:eq})  follows  from  martingale theory \cite{chetrite2011two, neri2017statistics}, in particular the integral fluctuation relation at stopping times \cite{neri2019integral}.    We have also recovered the  Van't Hoff-Arrhenius law Eq.~(\ref{eq:Arrh})  in the near equilibrium limit $\dot{s}\rightarrow 0$.    In addition,  we have also  derived Eqs.~(\ref{eq:2}) and  (\ref{eq:eq})  from the  theory of sequential hypothesis testing~\cite{lai1981asymptotic, tartakovsky2014sequential}, more specifically, the  asymptotic optimality of sequential probability ratio tests.  It is fascinating that all these different research areas are related to each other and certainly more fundamental insights about stochastic thermodynamics can be gained by exploring the links between these areas.  

The present paper derives the main result Eq.~(\ref{eq:2})  in the setup of currents $J$ in stationary Markov jump processes $X$; in addition, to identify  $\dot{s}$, as defined in Eq.~(\ref{eq:probRatio})-(\ref{eq:sdotDef}), with the mean rate of dissipation we require local detailed balance.    Nevertheless, we expect that (\ref{eq:2})  can be generalised.         In Sec.~\ref{sec:fp}, we have derived  the bound  (\ref{eq:2})  using large deviation theory, in particular, we have used the bound (\ref{eq:JBoundx}) on the large deviation function of the current.  Since  the bound    (\ref{eq:JBoundx}) has been derived for stationary Markov jump processes, see Ref.~\cite{gingrich2016dissipation}, also (\ref{eq:2})  applies to this setup.    Consequently,  (\ref{eq:2})   extends to  processes $X$ for which a bound on the large deviation function of the form (\ref{eq:JBoundx})  holds.   Notable examples worthwhile exploring are overdamped Langevin processes \cite{polettini2016tightening} and asymptotically stationary processes with time-dependent driving \cite{koyuk2020thermodynamic}.    Another possible  avenue of approach for generalising  (\ref{eq:2})   is based  on the  theory of sequential hypothesis testing,  as presented  in Sec.~\ref{sec:seq}.      In this approach, we  have derived the partial result (\ref{eq:2xxxx})  in a very general setup and with full mathematical rigour.  However, to get (\ref{eq:2}) we relied on  the additional result  (\ref{eq:pp}),  which has not been derived with the same level of mathematical rigour as (\ref{eq:2xxxx}).      It will be interesting to establish the conditions under which (\ref{eq:pp}) holds with full mathematical rigour,  as this will pave the way for extensions beyond the setup of stationary Markov jump processes.

 The equality (\ref{eq:eq})  has been derived in  more general  setup than the bound (\ref{eq:2}).  In Sec.~\ref{sec:seq},   we have presented a rigorous derivation of   (\ref{eq:2})  based on  the   martingale property of $e^{-S}$ and the $r$-quick convergence of $S(t)/t$ to a deterministic limit.      The martingale property of  $e^{-S}$ holds as long as it can be written in the form \cite{chetrite2011two, neri2017statistics, neri2019integral, chetrite2019martingale, neri2020second, PhysRevLett.126.080603} 
 \begin{equation}
e^{-S(t)} =\frac{\tilde{p}(\vec{X}^t_0) }{p(\vec{X}^t_0)}, \label{eq:probRatiox}
\end{equation}  
 with $\tilde{p}$ a probability distribution characterising the statistics of trajectories in the time-reversed process, whereas the $r$-quick convergence is a mild condition  on the fluctuations of $S(t)/t$ in the limit of large $t$.   In  Langevin processes, including   nonstationary   processes,   $e^{-S}$ can be written in the form (\ref{eq:probRatiox}), see e.g.~Refs.~\cite{seifert2012stochastic, neri2020second}, and hence the equality  (\ref{eq:eq})  should also apply to continuous stochastic processes.

We end the paper with a brief discussion of potential applications for the Eqs.~(\ref{eq:2}) and (\ref{eq:eq}).  The inequality Eq.~(\ref{eq:2}) could be used to  infer dissipation rates from   the measurements of first-passage times of stochastic currents.  It is  difficult to measure the entropy production rate directly as it is related to the heat exchanged with the environment \cite{ciliberto2017experiments}.    However, since the mean first-passage time $\langle T_J\rangle$ and the splitting probability $p_-$  are directly measurable quantities,  Eq.~(\ref{eq:2}) can be used to  bound  the entropy production rate from below.      When compared with other methods that infer entropy production rates  from the measurements of stochastic currents, see e.g.\cite{pietzonka2016universal, gingrich2017inferring, seifert2019stochastic, van2020entropy,  manikandan2020inferring},   the present inequalities may turn out to perform better as they are optimal  when $J=S$, although this requires further study as the inequality Eq.~(\ref{eq:2}) has also some drawbacks.   In particular, the probability $p_-$ decreases exponentially with $\ell_-$, which raises the question how  $p_-$ can be estimated at large values of $\ell_-$.    A second interesting application is in the study of first-passage  problems of nonequilibrium processes, such as, those of  self-propelled particles \cite{angelani2014first, malakar2018steady, dhar2019run, biswas2020first, walter2021first}.    The inequality (\ref{eq:2})  and the equality (\ref{eq:eq}) are generic results with a clear physical meaning, and hence, when used to  bound the statistics of first-passage problems in nonequilibrium processes,  can provide  further physical understanding of mathematical results.   A third interesting application is  in the use of the bound Eq.~(\ref{eq:2}) to determine how far molecular systems operate from what is physically nonpermissible.   Notable examples are molecular motors  that are responsible for copying genetic information in biological cells, such as, ribosomes or polymerases.  These motors are known to attain a reliability  that is larger than what is possible in equilibrium  through kinetic proof reading  \cite{hopfield1974kinetic, murugan2012speed, mallory2020we}, but it is not known how close  to the physically nonpermissible limits these motors operate.   Another example are  transistors that are small enough so that they are prone to noise \cite{gao2021principles}.    Bounds of the form  Eq.~(\ref{eq:2}) could be used to  understand  thermodynamic limitations  on computing  that are based on the tradeoff between dissipation, speed, and uncertainty in nonequilibrium processes.

\section*{Acknowledgements}

The author thanks Andre Barato, Patrick Pietzonka, and Benjamin Walter for insightful discussions.

\appendix

  \section{Martingales}\label{app:A}
  In this appendix, we  state the definition of a martingale and one of its key properties that we use repeatedly in this paper, namely,  Doob's optional stopping theorem.  
  
  \subsection{Definition of a martingale} \label{App:A1} 
  Let $\Omega$ be the set of all realisations of a physical process $X$, which is endowed with a $\sigma$-algebra $\mathscr{F}$.   Let $\mathbb{P}$ be a probability measure that determines the probabilities $\mathbb{P}[\Phi]$  of  events $\Phi\in\mathscr{F}$.      We denote averages with respect to $\mathbb{P}$ by $\langle \cdot\rangle$.  Let  $\left\{\mathscr{F}(t)\right\}_{t\geq 0}$ be the filtration generated by $X$, i.e.,  a sequence of sub-$\sigma$-algebras $\mathscr{F}(t)$ that is generated by the trajectories $X^t_0$ of the process $X$.      
  
  A martingale $M(t)$ with respect to a filtration $\left\{\mathscr{F}(t)\right\}_{t\geq 0}$ is a stochastic process for which (i) the process $M(t)$ is $\mathscr{F}(t)$-measurable (ii) $\langle |M(t)|\rangle<\infty$ (iii) $\langle M(t)|\mathscr{F}(s)\rangle = M(s)$~\cite{doob1953stochastic, liptser2013statistics}.   The latter condition implies that the martingale $M$ is a driftless process.
 
  \subsection{Doob's optional stopping theorem}\label{App:A2}
   A stopping time $T$ is a random time $T:\Omega\rightarrow \mathbb{R}^+\cup \left\{\infty\right\}$ such that $\left\{T\leq t\right\}\in \mathscr{F}(t)$ for all values of $t\in\mathbb{R}^+$.      This means that $T$ stops the process $X$ based on a stopping rule  that does not anticipate the future or use side information.

  One of the key properties of martingales that we use in this paper is described by    Doob's optional stopping theorem \cite{liptser2013statistics}.     
    \begin{theorem}[Doob's optional stopping theorem] \label{TheDoob} Let $(\Omega,\mathscr{F},\mathbb{P})$  be a probability space with sample space $\Omega$, $\sigma$-algebra $\mathscr{F}$, and probability measure $\mathbb{P}$.   Let $X(t)$ with $t\geq 0$ be a $\mathscr{F}$-measurable stochastic process  and let $\left\{\mathscr{F}(t)\right\}_{t\geq 0}$ be the filtration generated by $X$.   
    Let $M$ be a martingale process with respect to the filtration $\left\{\mathscr{F}(t)\right\}_{t\geq 0}$ and let $T$ be a stopping time relative to the filtration  $\left\{\mathscr{F}(t)\right\}_{t\geq 0}$.    It holds then that 
    \begin{equation}
    \langle M(T\wedge t) \rangle = \langle M(0) \rangle
    \end{equation} 
    where $T\wedge t = {\rm min}\left\{T,t\right\}$.
\end{theorem}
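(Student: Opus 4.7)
The plan is to prove the theorem by establishing that the stopped process $M^T(s) := M(T \wedge s)$ is itself a martingale with respect to the filtration $\{\mathscr{F}(s)\}_{s \geq 0}$. Once this is shown, the constancy of the mean of a martingale combined with $T \wedge 0 = 0$ immediately gives $\langle M(T \wedge t) \rangle = \langle M^T(t)\rangle = \langle M^T(0)\rangle = \langle M(0)\rangle$, which is the claim.

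First I would carry out the argument in discrete time, where the mechanics are cleanest. The key decomposition is the telescoping identity
\begin{equation}
M^T(s+1) - M^T(s) = \bigl(M(s+1) - M(s)\bigr)\, \mathbf{1}_{\{T > s\}},
\end{equation}
which holds because on $\{T \leq s\}$ both sides vanish and on $\{T > s\}$ the stopped process advances with $M$. The stopping-time property $\{T \leq s\} \in \mathscr{F}(s)$ implies $\{T > s\} \in \mathscr{F}(s)$, so the indicator $\mathbf{1}_{\{T>s\}}$ is $\mathscr{F}(s)$-measurable and can be pulled outside the conditional expectation. Combined with the martingale property $\langle M(s+1) - M(s) \mid \mathscr{F}(s)\rangle = 0$ from Sec.~\ref{App:A1}, this yields $\langle M^T(s+1) \mid \mathscr{F}(s)\rangle = M^T(s)$. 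Integrability of $M^T(s)$ follows from $|M^T(s)| \leq \max_{0 \leq k \leq s} |M(k)|$. Iterating the identity in $s$ and taking expectations gives the result for every $t \in \mathbb{N}$.

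To lift the result to the continuous-time statement in the theorem, I would approximate $T$ by the dyadic stopping times $T_n := 2^{-n} \lceil 2^n (T \wedge t)\rceil$, which are themselves stopping times relative to $\{\mathscr{F}(s)\}_{s \geq 0}$ and decrease to $T \wedge t$. Applying the discrete-time version to the sampled martingale $s \mapsto M(2^{-n} s)$ at time $\lceil 2^n t\rceil$ gives $\langle M(T_n)\rangle = \langle M(0)\rangle$ for each $n$. Passing to the limit $n \to \infty$ uses the right-continuity of the martingale trajectories, which guarantees $M(T_n) \to M(T \wedge t)$ almost surely.

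The main obstacle will be exchanging the limit with the expectation. This step is not automatic from $\langle |M(t)|\rangle < \infty$ alone, since the family $\{M(T_n)\}_n$ need not be uniformly integrable a priori. The standard remedy, which I would invoke here, is Doob's $L^1$ maximal inequality: the stopped values are all dominated by $\sup_{0 \leq s \leq t} |M(s)|$, and for a martingale this supremum is integrable (indeed $L^p$-integrable whenever $M(t) \in L^p$ for $p>1$, and handled by a truncation argument in $L^1$). This provides the dominating function needed to apply dominated convergence and conclude $\langle M(T \wedge t)\rangle = \langle M(0)\rangle$.
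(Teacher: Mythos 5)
The paper does not prove this theorem itself --- Appendix~\ref{App:A2} states it and cites \cite{liptser2013statistics}, treating it as a standard result from the probability literature --- so there is no internal proof to compare against; I evaluate your argument on its own merits. Your route is the canonical one (show the stopped process is a martingale, establish the discrete-time version via the telescoping decomposition $M^T(s+1)-M^T(s)=(M(s+1)-M(s))\mathbf{1}_{\{T>s\}}$, then lift to continuous time through dyadic stopping times $T_n\downarrow T\wedge t$), and both the discrete-time step and the dyadic reduction are correct.

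The genuine gap is the final limit exchange. You propose to dominate the family $\{M(T_n)\}_n$ by $\sup_{0\leq s\leq t'}|M(s)|$ and claim this supremum is integrable for an $L^1$ martingale, citing Doob's maximal inequality and a ``truncation argument.'' This is false in general: the weak-type $(1,1)$ maximal inequality gives only $P\bigl(\sup_{s\leq t'}|M(s)|>\lambda\bigr)\leq \langle|M(t')|\rangle/\lambda$, which does not yield an $L^1$ bound, and by Doob's $L\log L$ inequality the running maximum of a nonnegative martingale is integrable essentially only when $M(t')\in L\log L$, a strictly stronger condition than $M(t')\in L^1$. Since the theorem hypotheses supply only $\langle|M(t)|\rangle<\infty$, no truncation of $M$ alone produces an $L^1$ dominating function, so dominated convergence is unavailable. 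The standard way to close the gap avoids domination altogether: fix a dyadic time $t''>t$; the discrete optional sampling theorem for bounded stopping times on the dyadic grid gives $M(T_n)=\langle M(t'')\mid\mathscr{F}(T_n)\rangle$, and since $T_n$ decreases the $\sigma$-algebras $\mathscr{F}(T_n)$ also decrease, making $(M(T_n))_n$ a reverse martingale and hence automatically uniformly integrable. Uniform integrability together with the almost-sure convergence $M(T_n)\to M(T\wedge t)$ supplied by right-continuity upgrades to $L^1$ convergence, and $\langle M(T\wedge t)\rangle=\lim_n\langle M(T_n)\rangle=\langle M(0)\rangle$ follows.
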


\section{Mean first-passage time for an overdamped Brownian particle   in a generic periodic potential and in  a uniform force field} \label{eq:Period} 
In this appendix, we analyse the first-passage problem  for  a Brownian motion   in a generic periodic potential $u$ and a uniform force field $f$, as described by Eq.~(\ref{eq:nonequilb}).    In particular, we derive analytical expressions for the mean first-passage time $\langle T_X\rangle$, the splitting probability $p_-$, and the mean entropy production rate $\dot{s}$, where $T_X$ is defined as in Eq.~(\ref{eq:def2}).   In the limit of large thresholds $\ell_-=\ell_+=\ell\gg 1$, we show that the main result Eq.~(\ref{eq:eq}) holds.   In addition, in the near-equilibrium limit and at low temperatures, we show that Eq.~(\ref{eq:eq}) is a Van't Hoff-Arrhenius law.

 \subsection{Stationary distribution and current}\label{appC1}
 We derive Eq.~(\ref{eq:jssx}) in the main text for the stationary current $j_{\rm ss}$.
 
The stationary distribution of $X\in \mathbb{R}$ does not exist.  However, we can define the process on a ring with  periodic boundary conditions such that $X(t) = X(t)+\delta$.   The stationary state $p_{\rm ss}$ of the equivalent process defined on a ring exists, and we can  use this  process on a ring to determine the stationary current $j_{\rm ss}$.     
 
The stationary distribution $p_{\rm ss}$ solves the equation  \cite{gardiner1985handbook, seifert2012stochastic}
\begin{equation}
\partial_x j_{\rm ss}(x) = 0 \label{eq:jDiff}
\end{equation}
with periodic boundary conditions $p_{\rm ss}(x) = p_{\rm ss}(x+\delta)$,
where 
\begin{equation}
j_{\rm ss}(x) = \mu (f-\partial_x u(x)) p_{\rm ss}(x) - \frac{\mathsf{T}_{\rm env}}{\gamma}\partial_x p_{\rm ss}(x). \label{eq:jsp}
\end{equation}
The solution to Eq.~(\ref{eq:jDiff}) is given by  \cite{risken1996fokker, neri2019integral}
\begin{equation}
p_{\rm ss}(x) =  \frac{w(x) \left(\int^{x+\delta}_{x}{\rm d}x' \frac{1}{w(x')}\right)}{\int^\delta_0 {\rm d}y\: w(y) \left(\int^{y+\delta}_{y}{\rm d}x' \frac{1}{w(x')}\right)} \label{eq:pss}
\end{equation}
with $x\in[0,\delta]$, and 
where 
\begin{eqnarray}
w(x)   = e^{-\frac{u(x)-fx}{\mathsf{T}_{\rm env}}} \label{eq:w}.
\end{eqnarray}
The expression Eq.~(\ref{eq:jssx})  for the  stationary current $j_{\rm ss}$  follows readily from the Eqs.~(\ref{eq:jsp}) and (\ref{eq:pss}).

 \subsection{Entropy production}\label{appC2}
 We derive Eqs.~(\ref{eq:sdot}) and (\ref{eq:SBrown}) in the main text for the entropy production rate $\dot{s}$ and the stochastic entropy production $S$, respectively.  We will again use the equivalent process defined on a ring with periodic boundary conditions.

The stochastic entropy production $S$ of $X$, as defined in Eq.~(\ref{eq:probRatio}), is determined by the stochastic differential equation \cite{seifert2005entropy, pigolotti2017generic}
\begin{equation}
{\rm d}S  = v_S(X)  \: {\rm d}t+ \sqrt{2v_S(X)} \: {\rm d} W(t),
\end{equation}
where  
\begin{equation}
v_S(x)=  \frac{\gamma}{\mathsf{T}_{\rm env}}\frac{j^{2}_{\rm ss}}{p^2_{\rm ss}(x)}=   \frac{\mathsf{T}_{\rm env}}{\gamma}   \frac{\left(1 - e^{\frac{-f\delta}{\mathsf{T}_{\rm env}}}\right)^2}{ w^2(x) \left(\int^{x+\delta}_{x}{\rm d}x' \frac{1}{w(x')}\right)^2} . 
\end{equation}      
Alternatively, we can write 
\begin{equation}
S(t)  = \frac{fX(t)- u(X(t)) + u(X(0))}{\mathsf{T}_{\rm env}} +  \log \frac{ p_{\rm ss}(X(0))}{p_{\rm ss}(X(t))} .
\end{equation}
The latter formula  implies  for large $t\gg 1$  that 
\begin{equation}
S(t) =  \frac{f X(t)}{\mathsf{T}_{\rm env}} + o(t), 
\end{equation}
which is Eq.~(\ref{eq:SBrown}) in the main text.

The average stationary entropy production  rate is  given by 
\begin{eqnarray}
\dot{s} &=& \frac{ \langle S(t)\rangle}{t} =  \langle v_S\rangle  =  \frac{\gamma j^2_{\rm ss} }{\mathsf{T}_{\rm env}}\int^\delta_0 \frac{{\rm d}x}{p_{\rm ss}(x)}  .
\end{eqnarray}  
Since the stationary distribution $p_{\rm ss}$ is given by Eq.~(\ref{eq:pss}) and $u(x)$ is a periodic function, we can express this also as 
\begin{eqnarray}
\dot{s} &=&  j_{\rm ss}   \left(1- e^{\frac{-f\delta}{\mathsf{T}_{\rm env}}} \right)   \int^\delta_0  {\rm d}x    \frac{1}{w(x) \left(\int^{\delta}_{0}{\rm d}x' \frac{1}{w(x')} - (1-e^{-\frac{f\delta}{\mathsf{T}_{\rm env}}})\int^{x}_{0}{\rm d}x'\frac{1}{w(x')}\right) }. 
\end{eqnarray}  
Introducing the function 
\begin{equation}
\int^{x}_{0}{\rm d}x'\frac{1}{w(x')} = W(x),
\end{equation}
we find that 
\begin{eqnarray}
\dot{s} 
&=& j_{\rm ss}   \left(1- e^{\frac{-f\delta}{\mathsf{T}_{\rm env}}} \right)   \int^{W(\delta)}_0  {\rm d} u   \frac{1}{W(\delta) - (1-e^{-\frac{f\delta}{\mathsf{T}_{\rm env}}})u }  .
\end{eqnarray}  
Integrating yields the expression for $\dot{s}$ given by Eq.~(\ref{eq:sdot}) in the main text.

\subsection{Splitting probabilities}\label{App:C3}
We use the martingale property of $e^{-S(t)}$, see Refs.~\cite{neri2017statistics, neri2019integral} or Appendix~\ref{app:A},  to  determine the splitting probabilities $p_-$ and $p_+$.    Doob's optional stopping theorem for martingales  implies the following integral fluctuation relation at stopping times
\begin{equation}
\langle e^{-S(T_X)} |X(0)=0 \rangle = e^{-S(0)} = 1,
\end{equation}
and  since $S(t)$ is  continuous as a function of $t$ this implies that, see Refs.~\cite{neri2017statistics, neri2019integral},  
\begin{equation}
p_-= e^{-s_-}\frac{1-e^{-s_+}}{1-e^{-s_--s_+}}, \quad  {\rm and} \quad p_+  =  \frac{1-e^{-s_-}}{1-e^{-s_--s_+}}, \label{eq:pMpP}
\end{equation}
where 
\begin{equation}
s_- = -\frac{-f\ell - u(-\ell) + u(0)}{\mathsf{T}_{\rm env}} - \log \frac{ p_{\rm ss}(0)}{p_{\rm ss}(-\ell)}  ,\quad {\rm and} \quad  s_+ =  \frac{f\ell- u(\ell) + u(0)}{\mathsf{T}_{\rm env}} +  \log \frac{ p_{\rm ss}(0)}{p_{\rm ss}(\ell)}.   \label{eq:SM}
\end{equation} 
Notice that we have used a slight abuse of notation in the sense that  $u(x)$ and $p_{\rm ss}(x)$ are here defined on $x\in\mathbb{R}$ using $u(x)=u(x\pm \delta)$ and $p_{\rm ss}(x) = p_{\rm ss}(x\pm \delta)$.

   \subsection{Mean first-passage time}   \label{App:C4}
   Consider the backward Fokker-Planck equation
   \begin{equation}
   \mu \left(f - \partial_x u(x)\right) \partial_x t(x) +\frac{\mathsf{T}_{\rm env}}{\gamma}\partial^2_x t(x) = -1 \label{eq:TBack}
   \end{equation} 
    with boundary conditions $t(-\ell) = t(\ell) = 0$.  It then holds that, see Ref.~\cite{grebenkov2014first},
        \begin{equation}
   \langle T_X |X(0) = x\rangle = t(0).
   \end{equation}
    
  The solution of $t(x)$ to Eq.~(\ref{eq:TBack}) with boundary conditions $t(-\ell) = t(\ell) = 0$ is  given by
  \begin{equation}
  t(x) =  \frac{\gamma}{\mathsf{T}_{\rm env}}\left( \int^\ell_{-\ell}{\rm d}y\frac{1}{w(y)} \int^{y}_0{\rm d}x' w(x')\right)    \left(\frac{\int^{x}_{-\ell}{\rm d}y\frac{1}{w(y)}}{\int^\ell_{-\ell}{\rm d}y\frac{1}{w(y)}}  -\frac{\int^{x}_{-\ell}{\rm d}y\frac{1}{w(y)}  \int^{y}_{0}{\rm d}x'w(x')}{\int^\ell_{-\ell}{\rm d}y\frac{1}{w(y)} \int^{y}_0{\rm d}x' w(x')} \right),
  \end{equation}
   and therefore 
  \begin{equation}
  \langle T_X\rangle = \frac{\gamma}{\mathsf{T}_{\rm env}} \left(\int^\ell_{-\ell}{\rm d}y\frac{1}{w(y)} \int^{y}_0{\rm d}x' w(x') \right)   \left(\frac{\int^{0}_{-\ell}{\rm d}y\frac{1}{w(y)}}{\int^\ell_{-\ell}{\rm d}y\frac{1}{w(y)}}  -\frac{\int^{0}_{-\ell}{\rm d}y\frac{1}{w(y)}  \int^{y}_{0}{\rm d}x'w(x')}{\int^\ell_{-\ell}{\rm d}y\frac{1}{w(y)} \int^{y}_0{\rm d}x' w(x')} \right). \label{eq:Tmean}
    \end{equation}

    In order to better understand the structure of the expression  Eq.~(\ref{eq:Tmean})  for the mean-first passage time, it is useful to  express the integrals in Eq.~(\ref{eq:Tmean}) that run over the intervals $[-\ell,\ell]$ and $[-\ell,0]$  in terms of integrals that run over  the interval $[0,\delta]$.    Let $n = [\ell/\delta]$ be the largest integer smaller than $\ell/\delta$, then we can write
\begin{equation}
 \ell =  n  \delta  + z,
\end{equation}
with $z\in[0,\delta]$.   Using this decomposition for $\ell$, we obtain that 

\begin{eqnarray}
\int^{0}_{-n\delta - z}{\rm d}y\frac{1}{w(y)} &=& e^{n\frac{f \delta }{\mathsf{T}_{\rm env}}}\left\{\left(  \frac{1-e^{-n\frac{f\delta }{\mathsf{T}_{\rm env}}}}{1-e^{-\frac{f\delta }{\mathsf{T}_{\rm env}}}} \right)  \int^{\delta}_0\frac{{\rm d}x}{w(x)}  +   e^{\frac{f \delta   }{\mathsf{T}_{\rm env}}}   \int^{\delta}_{\delta-z}\frac{{\rm d}x}{w(x)}\right\} \label{eq:Integral1}
\end{eqnarray}
and 
\begin{eqnarray}
\lefteqn{\int^{n\delta+z}_{-n\delta - z}{\rm d}y\frac{1}{w(y)} } && \nonumber\\
&=& e^{n\frac{f \delta }{\mathsf{T}_{\rm env}}} \left\{ \left(\frac{1-e^{-2n\frac{f\delta }{\mathsf{T}_{\rm env}}}}{1-e^{-\frac{f\delta }{\mathsf{T}_{\rm env}}}}  \right) \int^{\delta}_0\frac{{\rm d}x}{w(x)}   +  e^{\frac{f \delta   }{\mathsf{T}_{\rm env}}}   \int^{\delta}_{\delta-z}\frac{{\rm d}x}{w(x)}  +  e^{-2n\frac{f \delta   }{\mathsf{T}_{\rm env}}}   \int^{z}_{0}\frac{{\rm d}x}{w(x)} \right\}. \nonumber\\\label{eq:Integral2}
\end{eqnarray}

In addition, 
\begin{eqnarray}
\lefteqn{\int^{n\delta +z}_{0}{\rm d}y\frac{1}{w(y)}  \int^{y}_{0}{\rm d}x'w(x') } && \nonumber\\ 
&=& n\left\{\frac{e^{-\frac{f\delta}{\mathsf{T}_{\rm env}}}}{1-e^{-\frac{f\delta}{\mathsf{T}_{\rm env}}}}\int^{\delta}_0 {\rm d}x \frac{1}{w(x)}  \int^{\delta}_0 {\rm d}x \, w(x) +   \int^{\delta}_{0} {\rm d}y \frac{1}{w(y)} \int^{y}_{0} w(x) {\rm d}x \right\}  \nonumber\\
&&
- \frac{e^{-\frac{f\delta}{\mathsf{T}_{\rm env}}}\left(1-e^{-n\frac{f\delta }{\mathsf{T}_{\rm env}}}\right)}{\left(1-e^{-\frac{f\delta}{\mathsf{T}_{\rm env}}}\right)^2} \int^{\delta}_0 {\rm d}x \, w(x)\int^{\delta}_0 {\rm d}x \frac{1}{w(x)}  \nonumber\\
&& +  e^{-\frac{f\delta}{\mathsf{T}_{\rm env}}}   \frac{1- e^{-n\frac{f \delta }{\mathsf{T}_{\rm env}}}}{1-e^{-\frac{f \delta }{\mathsf{T}_{\rm env}}}} \int^{z}_{0} {\rm d}y \frac{1}{w(y)}  \int^{\delta}_0 {\rm d}x \,  w(x) + \int^{ z}_{0} {\rm d}y \frac{1}{w(y)}  \int^{y}_{0} {\rm d}x \, w(x) ,\label{eq:Integral3}
\end{eqnarray}
and 
\begin{eqnarray}
\lefteqn{-\int^{0}_{-n\delta - z}{\rm d}y\frac{1}{w(y)}  \int^{y}_{0}{\rm d}x'\,w(x') }&&  \nonumber\\
&=&\frac{1-e^{n \frac{f \delta }{\mathsf{T}_{\rm env}}}}{(1-e^{-\frac{f \delta }{\mathsf{T}_{\rm env}}})(1-e^{\frac{f \delta }{\mathsf{T}_{\rm env}}})}  \left(\int^{\delta}_0 {\rm d}x \, w(x)\right) \left(\int^{\delta}_0 {\rm d}x \frac{1}{w(x)}\right)  
\nonumber\\ 
&& 
+ n  \left\{ \int^{\delta}_0 {\rm d}y\frac{1}{w(y)} \int^{\delta}_{y} {\rm d}x\,w(x)   - \frac{1}{1-e^{-\frac{f \delta }{\mathsf{T}_{\rm env}}}} \left(\int^{\delta}_0 {\rm d}x \, w(x)\right) \left(\int^{\delta}_0 {\rm d}x \frac{1}{w(x)}\right)    \right\}   
\nonumber\\ 
&& +    \frac{e^{n \frac{f \delta }{\mathsf{T}_{\rm env}}}- 1}{1-e^{-\frac{f \delta }{\mathsf{T}_{\rm env}}}}\int^{\delta}_{\delta-z} {\rm d}x\frac{1}{w(x)}  \int^{\delta}_{0} {\rm d}x \, w(x)  +   \int^{\delta}_{\delta-z} {\rm d}y\frac{1}{w(y)}  \int^{\delta}_{y} {\rm d}x\,w(x).\label{eq:Integral4}
\end{eqnarray}  
Using the Eqs.~(\ref{eq:Integral1}), (\ref{eq:Integral2}), (\ref{eq:Integral3}), and (\ref{eq:Integral4})  in Eq.~(\ref{eq:Tmean}), we obtain an expression for   $\langle T_X\rangle$ that depends only on integrals over the  interval  $[0,\delta]$.

   \subsection{Limit of large thresholds}  \label{App:C5}
   We derive the Eq.~(\ref{eq:asymptT}) that holds in the limit of large $\ell$.   The derivation goes in three steps.   First, in Sec.~\ref{subsec:v1} we derive an asymptotic expression for $p_-$, second in Sec.~\ref{subsec:v2} we derive an asymptotic expression for $\langle T_X\rangle$, lastly in Sec.~\ref{subsec:v3} we combine these two results to determine the ratio $p_-/\langle T_X\rangle$.
      
  \subsubsection{Splitting probabilities}     \label{subsec:v1}
  In the limit of large thresholds,  the linear term in $\ell$ dominates the Eqs.~(\ref{eq:SM})  and therefore
   \begin{equation}
s_- = \frac{f\ell}{\mathsf{T}_{\rm env}} + O_\ell(1),\quad {\rm and} \quad s_+ = \frac{f\ell}{\mathsf{T}_{\rm env}} + O_\ell(1).  \label{eq:SMAsympt}
\end{equation}
Using Eq.~(\ref{eq:SMAsympt}) in the  Eqs.~(\ref{eq:pMpP}) for $p_-$ and $p_+$, we obtain that 
   \begin{equation}
\log p_-= -\frac{f\ell}{\mathsf{T}_{\rm env}} + O_\ell(1), \quad {\rm and}  \quad    \log p_+=  O_\ell(1). \label{eq:pMAsympto}
\end{equation} 

  \subsubsection{Mean first-passage time}    \label{subsec:v2}
We use that
\begin{equation} 
n = \left[\frac{\ell}{\delta}\right] + O_{\ell}(1),
\end{equation}
where as before $\left[\frac{\ell}{\delta}\right]$ denotes the largest integer that is smaller than $\frac{\ell}{\delta}$.  

Taking the asymptotic limit of large $\ell$ in Eqs.~(\ref{eq:Integral1}) and (\ref{eq:Integral2}), we obtain that 
   \begin{eqnarray}
   \frac{\int^{0}_{-\ell}{\rm d}y\frac{1}{w(y)}}{\int^{\ell}_{-\ell}{\rm d}y\frac{1}{w(y)} } 
   &=& 1  -  e^{-\left[\frac{\ell}{\delta}\right]\frac{f \delta }{\mathsf{T}_{\rm env}}} \frac{\int^{\delta}_0\frac{{\rm d}x}{w(x)}  }{\int^{\delta}_0\frac{{\rm d}x}{w(x)} +  (e^{\frac{f \delta   }{\mathsf{T}_{\rm env}}} -1)  \int^{\delta}_{\delta-z}\frac{{\rm d}x}{w(x)}  }+ O\left(e^{-2\left[\frac{\ell}{\delta}\right]\frac{f \delta }{\mathsf{T}_{\rm env}}}\right) . \label{eq:asympt1}
   \end{eqnarray}
 The asymptotic limit of Eq.~(\ref{eq:Integral3}) is
      \begin{eqnarray}
 \lefteqn{  \int^{\ell}_{0}{\rm d}y\frac{1}{w(y)}  \int^{y}_{0}{\rm d}x'w(x') } && \nonumber\\ 
 &=& \left[\frac{\ell}{\delta}\right]\left\{\frac{e^{-\frac{f\delta}{\mathsf{T}_{\rm env}}}}{1-e^{-\frac{f\delta}{\mathsf{T}_{\rm env}}}}\int^{\delta}_0 {\rm d}x \frac{1}{w(x)}  \int^{\delta}_0 {\rm d}x w(x) +   \int^{\delta}_{0} {\rm d}y \frac{1}{w(y)} \int^{y}_{0} w(x) {\rm d}x \right\}  + O_{\ell}(1),  \nonumber\\\label{eq:asympt2}
      \end{eqnarray}
      and from Eqs.~(\ref{eq:Integral3}) and (\ref{eq:Integral4}) it follows that 
      \begin{eqnarray}
\lefteqn{-\int^{\ell}_{-\ell}{\rm d}y\frac{1}{w(y)}  \int^{y}_{0}{\rm d}x'w(x') } && \nonumber\\
 &=&e^{\left[\frac{\ell}{\delta}\right] \frac{f \delta }{\mathsf{T}_{\rm env}}}  \left\{ \frac{ \int^{\delta}_0 {\rm d}x \, w(x)\int^{\delta}_0 {\rm d}x \frac{1}{w(x)}  }{(1-e^{-\frac{f \delta }{\mathsf{T}_{\rm env}}})(e^{\frac{f \delta }{\mathsf{T}_{\rm env}}}-1)}   +  \frac{\int^{\delta}_{\delta-z} {\rm d}x\frac{1}{w(x)}  \int^{\delta}_{0} {\rm d}x w(x)  }{1-e^{-\frac{f \delta }{\mathsf{T}_{\rm env}}}} \right\} \nonumber\\
&& +  \left[\frac{\ell}{\delta}\right] \left\{ \int^{\delta}_0 {\rm d}y\frac{1}{w(y)} \int^{\delta}_{y} {\rm d}x\,w(x)  -  \frac{1}{\tanh\left(\frac{f\delta}{2\mathsf{T}_{\rm env}} \right)}
\int^{\delta}_0 {\rm d}x \frac{1}{w(x)}  \int^{\delta}_0 {\rm d}x \, w(x) \right.
\nonumber\\ 
&&  \left. -   \int^{\delta}_{0} {\rm d}y \frac{1}{w(y)} \int^{y}_{0}  {\rm d}x  \, w(x)
 \right\}  + O_{\ell}(1) .\label{eq:asympt3}
\end{eqnarray}
 The Eqs.~(\ref{eq:asympt2}) and  (\ref{eq:asympt3}) imply that the ratio 
\begin{eqnarray} 
\lefteqn{\frac{\int^{0}_{-\ell}{\rm d}y\frac{1}{w(y)}  \int^{y}_{0}{\rm d}x'w(x') }{\int^{\ell}_{-\ell}{\rm d}y\frac{1}{w(y)}  \int^{y}_{0}{\rm d}x'w(x') }}\nonumber\\ 
 &=& 1 + \left[\frac{\ell}{\delta}\right]e^{-\left[\frac{\ell}{\delta}\right] \frac{f \delta }{\mathsf{T}_{\rm env}}}  \left\{\frac{  e^{-\frac{f \delta }{\mathsf{T}_{\rm env}}} \left(\int^{\delta}_0 {\rm d}x \, w(x)\right) \left(\int^{\delta}_0 {\rm d}x \frac{1}{w(x)}\right) +  \left(1-e^{-\frac{f \delta }{\mathsf{T}_{\rm env}}}\right)  \int^{\delta}_{0} {\rm d}y \frac{1}{w(y)} \int^{y}_{0} w(x) {\rm d}x }{\frac{ \int^{\delta}_0 {\rm d}x w(x)\int^{\delta}_0 {\rm d}x \frac{1}{w(x)}  }{e^{\frac{f \delta }{\mathsf{T}_{\rm env}}}-1}   +  \int^{\delta}_{\delta-z} {\rm d}x\frac{1}{w(x)}  \int^{\delta}_{0} {\rm d}x \,  w(x)   } \right\} 
\nonumber \\ 
 &&+ O\left(e^{-\left[\frac{\ell}{\delta}\right] \frac{f \delta }{\mathsf{T}_{\rm env}}}\right). \label{eq:asympt4}
\end{eqnarray} 
Using Eqs.~(\ref{eq:asympt1})-(\ref{eq:asympt4}) in  Eq.~(\ref{eq:Tmean}) yields  for the mean first-passage time the asymptotic expression
   \begin{eqnarray}
  \langle T_X\rangle = \frac{\gamma}{\mathsf{T}_{\rm env}} \left[\frac{\ell}{\delta}\right] \left[ \frac{e^{-\frac{f \delta }{\mathsf{T}_{\rm env}}}}{1-e^{-\frac{f \delta }{\mathsf{T}_{\rm env}}}} \left(\int^{\delta}_0 {\rm d}x w(x)\right) \left(\int^{\delta}_0 {\rm d}x \frac{1}{w(x)}\right)  +   \int^{\delta}_{0} {\rm d}y \frac{1}{w(y)} \int^{y}_{0} w(x) {\rm d}x \right] + O_{\ell}(1). \nonumber\\\label{eq:TMeanAsympto}
   \end{eqnarray}  
   
     \subsubsection{The ratio $|\log p_-|/\langle T_X\rangle$}        \label{subsec:v3}
It follows from the asymptotic relations for  $\langle T_X\rangle$ and $|\log p_-|$, given by Eqs.~(\ref{eq:TMeanAsympto})  and (\ref{eq:pMAsympto}), respectively, that the ratio 
   \begin{eqnarray}
   \frac{|\log p_-|}{  \langle T_X\rangle }
    &=&  \frac{f\delta }{\gamma}\frac{1- e^{\frac{-f\delta}{\mathsf{T}_{\rm env}}}}{\int^\delta_0 {\rm d}y\: w(y) \left(\int^{y+\delta}_{y}{\rm d}x' \frac{1}{w(x')}\right)} + O(1/\ell).
   \end{eqnarray}
   Using Eqs.~(\ref{eq:sdot}) and  (\ref{eq:jssx}) for $\dot{s}$ and $j_{\rm ss}$, respectively, together with the identities
   \begin{equation}
 \int^{\delta}_{0} {\rm d}y \frac{1}{w(y)} \int^{y}_{0} {\rm d}x \, w(x)  =  \int^{\delta}_{0} {\rm d}y\,w(y)  \int^{\delta}_{y}  \frac{1}{w(x)} {\rm d}x 
\end{equation}
and 
\begin{equation}
e^{-\frac{f \delta }{\mathsf{T}_{\rm env}}} \int^{\delta}_0 {\rm d}x \, w(x)\int^{y}_0 {\rm d}x \frac{1}{w(x)}   =\int^{\delta}_0 {\rm d}x \, w(x)\int^{y+\delta}_
{\delta} {\rm d}x \frac{1}{w(x)}  ,
\end{equation}
we readily obtain Eq.~(\ref{eq:asymptT}),
       which is what we were meant to show.

\subsection{Van't Hoff-Arrhenius law near equilibrium} \label{app:C6}
We show that Eq.~(\ref{eq:asymptT})  yields the Van't Hoff-Arrhenius law Eq.~(\ref{eq:Arrhx}).

  Indeed, if $\ell$ is large enough, then  Eq.~(\ref{eq:asymptT}) together with Eq.~(\ref{eq:pMAsympto}) yields
   \begin{equation}
 \langle T_X\rangle = \frac{f\ell}{\mathsf{T}_{\rm env}} \frac{1}{\dot{s}} + O_\ell(1) \label{eq:TMeanSdot}
 \end{equation}  
 where the mean entropy production rate $\dot{s}$ is given by Eq.~(\ref{eq:sdot}).    Since the mean entropy production rate is proportional to the stationary current, given by Eq.~(\ref{eq:jssx}), we can use  saddle point integrals to evaluate the mean current in the limit $\mathsf{T}_{\rm env}\rightarrow 0$ and  to obtain the  Van't Hoff-Arrhenius law.  
 
 Let us therefore first revisit the saddle point method in Sec.~\ref{secsaddle}, and then apply it to the mean current to obtain the  Van't Hoff-Arrhenius law in Sec.~\ref{secAsykpt}.

 \subsubsection{Saddle point integrals in the limit of $\mathsf{T}_{\rm env}\rightarrow 0$} \label{secsaddle}
We first  revisit briefly the saddle point method.  
 
Let $v(x)$  be a function defined on the interval $[0,\delta]$.   We consider integrals of the form 
  \begin{equation}
 \int^\delta_0 {\rm d}x \: e^{\frac{v(x)}{\mathsf{T}_{\rm env}}} f(x) 
  \end{equation} 
in the limiting case of small $\mathsf{T}_{\rm env}$.    In this limiting case, 
  \begin{equation}
 \int^\delta_0 {\rm d}x \: e^{\frac{v(x)}{\mathsf{T}_{\rm env}}} f(x) = \kappa f(x_{\rm max})e^{\frac{v_{\rm max}}{\mathsf{T}_{\rm env}}} + O\left(\frac{\mathsf{T}_{\rm env}}{v_{\rm max}}\right)
  \end{equation} 
  where the prefactor $\kappa$  depends on the properties of the function $v$ at the maximum.   Below, we consider four relevant cases for $\kappa$.    Note that we use the following notation: if $x_{\rm max} = {\rm argmax}\: v(x)$, then $v_{\rm max} = v(x_{\rm max})$, $v'_{\rm max} = v'(x_{\rm max})$, and  $v''_{\rm max} = v''(x_{\rm max})$.  The four cases are the following:
\begin{itemize}
\item $v'_{\rm max} = 0$ and $x_{\rm max}\in(0,\delta)$:
 \begin{equation}
\kappa = \sqrt{\frac{2\pi \mathsf{T}_{\rm env}}{-v''_{\rm max}}};
 \end{equation}  
 \item $v'_{\rm max}$ does not exist (maximum is a cusp) and $x_{\rm max}\in(0,\delta)$:
  \begin{equation}
\kappa =  \mathsf{T}_{\rm env}  \left( \frac{1}{v^+_{\rm max}} - \frac{1}{v^-_{\rm max}}\right)
 \end{equation} 
 where 
   \begin{equation}
   v^+_{\rm max} =  \lim_{\epsilon\rightarrow 0} \frac{v(x_{\rm max})-v(x_{\rm max}-\epsilon)}{\epsilon}, \quad {\rm and} \quad     v^-_{\rm max} =  \lim_{\epsilon\rightarrow 0} \frac{v(x_{\rm max}+\epsilon)-v(x_{\rm max})}{\epsilon};
    \end{equation} 
\item $x_{\rm max}=0$:
  \begin{equation}
\kappa = -    \frac{\mathsf{T}_{\rm env}}{v^-_{\rm max}};
 \end{equation} 
\item $x_{\rm max}=\delta$:
  \begin{equation}
\kappa =   \frac{\mathsf{T}_{\rm env}  }{v^+_{\rm max}} .
 \end{equation} 
\end{itemize}

  \subsubsection{The mean first-passage time in the low temperature limit and the linear response limit}\label{secAsykpt}
  To derive the Arrhenius law, we take two limits, viz., the    near equilibrium limit  $f\delta/\mathsf{T}_{\rm env}\approx 0$ and the    low temperature limit  $\mathsf{T}_{\rm env}\approx 0$.    Note that we have already taken the large threshold limit in Eq.~(\ref{eq:TMeanSdot}).    Hence, the order of the limits is such that we  first take the large threshold limit, then  the near equilibrium limit, and lastly the low temperature limit. 
  
Taking the linear response limit with $f\delta/\mathsf{T}_{\rm env}\approx 0$, we obtain
\begin{equation}
w(x) = e^{-\frac{u(x)}{\mathsf{T}_{\rm env}}}\left(1 + \frac{f x}{\mathsf{T}_{\rm env}} + O\left(\left(\frac{f\delta}{\mathsf{T}_{\rm env}} \right)^2\right)\right),
\end{equation}
and 
\begin{equation}
\frac{1}{w(x)} =  e^{\frac{u(x)}{\mathsf{T}_{\rm env}}}\left(1 - \frac{f x}{\mathsf{T}_{\rm env}} + O\left(\left(\frac{f\delta}{\mathsf{T}_{\rm env}} \right)^2\right)\right),
\end{equation}
such that 
\begin{eqnarray}
j_{\rm ss} 
&=& \frac{f\delta }{\gamma}\frac{1}{\int^{\delta}_0 {\rm d}y e^{-\frac{u(y)}{\mathsf{T}_{\rm env}}} \int^{\delta}_0 {\rm d}x e^{\frac{u(x)}{\mathsf{T}_{\rm env}}}    }    + O\left(\left(\frac{f\delta}{\mathsf{T}_{\rm env}} \right)^2\right) . \label{jsss}
\end{eqnarray}   

Second, we take the low temperature limit with $\mathsf{T}_{\rm env}\approx0$.   
Using the saddle point method, we obtain that 
\begin{eqnarray}
j_{\rm ss} &=&  \frac{f\delta}{\gamma} \kappa_1 \kappa_2  e^{-\frac{E_{\rm b}}{\mathsf{T}_{\rm env}}} + O\left(\left(\frac{f\delta}{\mathsf{T}_{\rm env}} \right)^2\right)
\end{eqnarray}  
where $\kappa_1$ and $\kappa_2$ are two prefactors due to the two saddle point integrals in Eq.~(\ref{jsss}).    The entropy production rate follows from Eq.~(\ref{eq:sdot}) and is given by 
\begin{equation}
\dot{s} =   \frac{(f\delta)^2}{\gamma \mathsf{T}_{\rm env}}\kappa_1 \kappa_2 e^{-\frac{E_{\rm b}}{\mathsf{T}_{\rm env}}} + O\left(\left(\frac{f\delta}{\mathsf{T}_{\rm env}} \right)^3\right). 
\end{equation}
Lastly, using Eq.~(\ref{eq:TMeanSdot})  we obtain  the Van't Hoff-Arrhenius law for the 
mean-first passage time 
\begin{equation}
 \langle T_X\rangle =  \frac{\ell}{\delta} \frac{\gamma }{f\delta}  \frac{1}{\kappa_1\kappa_2} e^{\frac{E_{\rm b}}{\mathsf{T}_{\rm env}}} \left(1 + O\left(\frac{f\delta}{\mathsf{T}_{\rm env}}\right)\right) +  O_\ell(1). \label{eq:TMeanXX}
\end{equation}

We discuss two relevant cases: 
\begin{itemize}
\item $u'_{\rm max}= u'_{\rm min} = 0$ and $x_{\rm max},x_{\rm min}\in(0,\delta)$:
\begin{equation}
\kappa_1\kappa_2 = \frac{\sqrt{-u''_{\rm min}u''_{\rm max}}}{2\pi \mathsf{T}_{\rm env}};
\end{equation}
\item  $u'_{\rm max}\neq 0$ and  $u'_{\rm min} \neq 0$:
\begin{equation}
\kappa_1\kappa_2 =  \left(\frac{1}{u^+_{\rm max}}-\frac{1}{u^-_{\rm max}}\right)^{-1} \left(\frac{1}{u^+_{\rm min}}-\frac{1}{u^-_{\rm min}}\right)^{-1}   \frac{1}{ \mathsf{T}^2_{\rm env}}.
\end{equation}

 \end{itemize}

\section{Mean first-passage time for an overdamped Brownian particle   in a periodic potential that is triangular and in a uniform force field} \label{eq:Period2} 
We derive a number of explicit formulas that have been used to generate the  curves in the Figs.~\ref{fig1Mx}-\ref{fig:Meantime2}. Similar to the previous appendix, we consider  a  Brownian motion in a uniform force field $f$ and a periodic potential $u$, for which dynamics of the position variable $X$ is described by  the overdamped Langevin Eq.~(\ref{eq:nonequilb}).   However, in this appendix we specify the potential of the process, viz., we consider the triangular potential given by Eq.~(\ref{eq:modelU}), which allows us to derive explicit results.

      \subsection{Stationary distribution}
    For the triangular potential Eq.~(\ref{eq:modelU}),    the stationary probability distribution given by Eq.~(\ref{eq:pss}) reads
\cite{pigolotti2017generic}
\begin{equation}
p_{\rm ss}(x)  = \left\{ \begin{array}{ccc} a_1 + a_2 e^{\frac{x f_+  }{\mathsf{T}_{\rm env}}} &{\rm if}&  x\in [0,x^\ast],\\  a_3 + a_4 e^{ \frac{x f_-}{ \mathsf{T}_{\rm env}}}  &{\rm if}& x\in [x^\ast,\delta],\end{array}\right. \label{eq:pssT}
\end{equation} 
 where 
\begin{equation}
 f_+ = f  - \frac{u_0}{x^\ast} ,  \quad {\rm and} \quad  f_- = f  +  \frac{u_0}{\delta-x^\ast},
\end{equation}
and 
\begin{eqnarray}
a_1 &=& f_+f^2_-\frac{ e^{\frac{f_-x^\ast}{\mathsf{T}_{\rm env}}} -e^{\frac{f_-\delta + f_+x^\ast}{\mathsf{T}_{\rm env}}} }{\mathcal{N}}  , \label{eq:a1}\\ 
a_2 &=& f_+f_-(f_--f_+) \frac{e^{\frac{f_-\delta}{\mathsf{T}_{\rm env}}} - e^{\frac{f_-x^\ast}{\mathsf{T}_{\rm env}}}}{\mathcal{N}}, \\ 
a_3 &=&f^2_+f_-  \frac{ e^{\frac{f_-x^\ast}{\mathsf{T}_{\rm env}}} -e^{\frac{f_-\delta + f_+x^\ast}{\mathsf{T}_{\rm env}}} }{\mathcal{N}}  , \\ 
a_4 &=&f_+f_-(f_+-f_-)   \frac{e^{\frac{f_+x^\ast}{\mathsf{T}_{\rm env}}}-1}{\mathcal{N}},
\end{eqnarray} 
and where  the normalisation constant
\begin{eqnarray}
\lefteqn{\mathcal{N} = \mathsf{T}_{\rm env} (f_+-f_-)^2\left(e^{\frac{f_+x^\ast}{\mathsf{T}_{\rm env}}}-1\right)\left(e^{\frac{f_-\delta}{\mathsf{T}_{\rm env}}} - e^{\frac{f_-x^\ast}{\mathsf{T}_{\rm env}}}\right)} && \nonumber\\
 && + f_+f_- (f_+ \delta- f_+x^\ast + f_-x^\ast) \left(e^{\frac{f_-x^\ast}{\mathsf{T}_{\rm env}}} -e^{\frac{f_-\delta + f_+x^\ast}{\mathsf{T}_{\rm env}}} \right). \label{eq:N}
\end{eqnarray}

The stationary current is given by  the expression
\begin{equation}
 j_{\rm ss} = \frac{f_+a_1}{\gamma} = \frac{f_-a_3}{\gamma}. \label{eq:j}
\end{equation} 

In Fig.~\ref{fig2}, we plot the stationary distribution $p_{\rm ss}$ for various values of  the nonequilibrium driving $f\delta/\mathsf{T}_{\rm env}$.   Observe that the distribution concentrates around the values $x\approx0$ or $x\approx\delta$, and thus the process resembles a hopping process, as is also visible in  Fig.~\ref{fig1Mx}.

\subsection{Mean first-passage time} 
For the case of a triangular potential, we evaluate explicitly the integrals in Eqs.~(\ref{eq:Integral1}), (\ref{eq:Integral2}), (\ref{eq:Integral3}), and (\ref{eq:Integral4})  leading to  an explicit expression for the mean first-passage time $\langle T_X\rangle$ in Eq.~(\ref{eq:Tmean}).  In particular, we obtain  explicit expressions for the following integrals:   
\begin{equation}
 \int^{z}_{0} {\rm d}x\: w(x)= \left\{ \begin{array}{ccc}   \frac{\mathsf{T}_{\rm env}}{f_+} \left( e^{\frac{f_+z}{\mathsf{T}_{\rm env}}} -1\right)  &{\rm if}&  z<x^\ast,\\  \frac{\mathsf{T}_{\rm env}}{f_+} \left( e^{\frac{f_+x^\ast}{\mathsf{T}_{\rm env}}} -1\right)  + \frac{\mathsf{T}_{\rm env}}{f_-} e^{-\frac{u_0}{\mathsf{T}_{\rm env}}\frac{\delta}{\delta -x^\ast}}  \left( e^{\frac{f_-z}{\mathsf{T}_{\rm env}}} - e^{\frac{f_-x^\ast}{\mathsf{T}_{\rm env}}} \right)&{\rm if}& z>x^\ast,\end{array}\right. \label{eq:Int1}
\end{equation} 
 \begin{equation}
 \int^{z}_{0}\frac{{\rm d}x}{w(x)}= \left\{ \begin{array}{ccc}   \frac{\mathsf{T}_{\rm env}}{f_+} \left(1 - e^{\frac{-f_+z}{\mathsf{T}_{\rm env}}} \right)  &{\rm if}&  z<x^\ast,\\   \frac{\mathsf{T}_{\rm env}}{f_+} \left(1 - e^{\frac{-f_+x^\ast}{\mathsf{T}_{\rm env}}} \right) + \frac{\mathsf{T}_{\rm env}}{f_-} e^{\frac{u_0}{\mathsf{T}_{\rm env}}\frac{\delta}{\delta -x^\ast}}  \left( e^{\frac{-f_-x^\ast}{\mathsf{T}_{\rm env}}} - e^{\frac{-f_-z}{\mathsf{T}_{\rm env}}} \right)&{\rm if}& z>x^\ast, \end{array}\right.\label{eq:Int2}
\end{equation} 
and
 \begin{eqnarray}
\lefteqn{ \int^{\delta}_{\delta-z}\frac{{\rm d}x}{w(x)}} && \nonumber\\ 
&& = \left\{ \begin{array}{ccc}   \frac{\mathsf{T}_{\rm env}}{f_+} \left(e^{\frac{-f_+(\delta-z)}{\mathsf{T}_{\rm env}}} - e^{\frac{-f_+x^\ast}{\mathsf{T}_{\rm env}}} \right)  + \frac{\mathsf{T}_{\rm env}}{f_-} e^{\frac{u_0}{\mathsf{T}_{\rm env}}\frac{\delta}{\delta -x^\ast}}  \left( e^{\frac{-f_-x^\ast}{\mathsf{T}_{\rm env}}} - e^{\frac{-f_-\delta}{\mathsf{T}_{\rm env}}} \right)&{\rm if}&  \delta-z<x^\ast,\\   \frac{\mathsf{T}_{\rm env}}{f_-} e^{\frac{u_0}{\mathsf{T}_{\rm env}}\frac{\delta}{\delta -x^\ast}}  \left( e^{\frac{-f_- (\delta-z)}{\mathsf{T}_{\rm env}}} - e^{\frac{-f_-\delta}{\mathsf{T}_{\rm env}}} \right) &{\rm if}& \delta-z>x^\ast. \end{array}\right. \nonumber\\\label{eq:Int3}
\end{eqnarray} 
In addition, if  $z<x^\ast$, then 
\begin{equation}
\int^{z}_{0} {\rm d}y \frac{1}{w(y)} \int^{y}_{0} w(x) {\rm d}y = \frac{\mathsf{T}_{\rm env}}{f_+}   z -\left(\frac{\mathsf{T}_{\rm env}}{f_+}  \right)^2 \left(1-e^{-\frac{f_+z}{\mathsf{T}_{\rm env}}} \right) ,\label{eq:Int4}
\end{equation}
and if $z>x\ast$, then 
\begin{eqnarray}
\lefteqn{\int^{z}_{0} {\rm d}y \frac{1}{w(y)} \int^{y}_{0} w(x) {\rm d}y} && 
\nonumber\\ 
 &=&\frac{\mathsf{T}_{\rm env}}{f_+}   x^\ast  +  \frac{\mathsf{T}_{\rm env}}{f_-}   (z-x^\ast) -\left(\frac{\mathsf{T}_{\rm env}}{f_+}  \right)^2 \left(1-e^{-\frac{f_+x^\ast}{\mathsf{T}_{\rm env}}} \right)-\left(\frac{\mathsf{T}_{\rm env}}{f_-} \right)^2   \left(1- e^{\frac{f_-(x^\ast-z)}{\mathsf{T}_{\rm env}}} \right) 
\nonumber\\ 
&&+  \frac{\mathsf{T}_{\rm env}}{f_-} e^{\frac{u_0}{\mathsf{T}_{\rm env}}\frac{\delta}{\delta -x^\ast}}  \left( e^{-\frac{f_- x^\ast}{\mathsf{T}_{\rm env}}} - e^{-\frac{f_-z}{\mathsf{T}_{\rm env}}} \right)  \frac{\mathsf{T}_{\rm env}}{f_+} \left( e^{\frac{f_+x^\ast}{\mathsf{T}_{\rm env}}}-1 \right). \label{eq:Int5}
\end{eqnarray}
Lastly, it holds that 
\begin{equation}
 \int^{\delta}_0 {\rm d}y\frac{1}{w(y)} \int^{\delta}_{y} {\rm d}xw(x)  =   \int^{\delta}_0 {\rm d}y\frac{1}{w(y)} \int^{\delta}_{0} {\rm d}xw(x)  -   \int^{\delta}_0 {\rm d}y\frac{1}{w(y)} \int^{y}_{0} {\rm d}xw(x)\label{eq:Int6}
 \end{equation}
and 
\begin{eqnarray}
 \int^{\delta}_{\delta-z} {\rm d}y\frac{1}{w(y)}  \int^{\delta}_{y} {\rm d}xw(x)    
 &=&   \int^{\delta}_0 {\rm d}y\frac{1}{w(y)} \int^{\delta}_{0} {\rm d}xw(x) -   \int^{\delta}_0 {\rm d}y\frac{1}{w(y)} \int^{y}_{0} {\rm d}xw(x)   
 \nonumber\\ 
 && - \int^{\delta-z}_0{\rm d}y\frac{1}{w(y)}  \int^{\delta}_{0} {\rm d}xw(x) + \int^{\delta-z}_0{\rm d}y\frac{1}{w(y)} \int^y_0 {\rm d}xw(x). \nonumber \\\label{eq:Int7}
 \end{eqnarray} 
 
 Substituting the above integrals, given by Eqs.~(\ref{eq:Int1})-(\ref{eq:Int7}),  into Eqs.~(\ref{eq:Integral1}), (\ref{eq:Integral2}), (\ref{eq:Integral3}), and (\ref{eq:Integral4}), and consequently using these in  Eq.~(\ref{eq:Tmean}) for  $\langle T_X\rangle$, we obtain a closed form expression for  $\langle T_X\rangle$.

 In  the Figs.~\ref{fig:Meantime} and \ref{fig:Meantime2} of the main text we have used this closed form expression of $\langle T_X\rangle$   to plot   $\langle T\rangle \dot{s}/|\log p_-|$ as a function of $\ell$ or $\langle T_X\rangle$ as a function of $\mathsf{T}_{\rm env}$.  
 
\subsection{Recovering the Van't Hoff-Arrhenius law}
The Eq.~(\ref{eq:TMeanXX})  in the particular case of a triangular potential leads to
 \begin{equation}
 \langle T_X\rangle =  \frac{\ell \gamma}{f}\frac{\mathsf{T}^2_{\rm env}}{u^2_0} e^{\frac{u_0}{\mathsf{T}_{\rm env}}} \left(1 + O\left(\frac{f\delta}{\mathsf{T}_{\rm env}}\right)\right).
\end{equation}
The green dotted line in  Fig.~\ref{fig:Meantime2} of the main text plots this equation.

\begin{figure}[t!]
\centering
\includegraphics[width=0.5\textwidth]{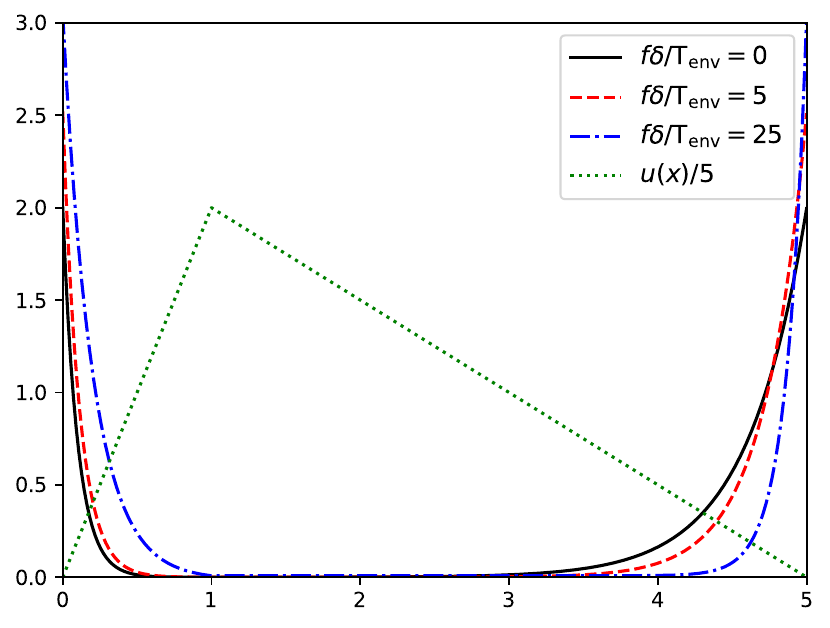}
 \put(-230,80){\Large$p_{\rm ss}$}
  \put(-110,-5){\Large$x$} 
\caption{Stationary distribution $p_{\rm ss}$ for the overdamped Langevin process (\ref{eq:nonequilb}) with triangular potential (\ref{eq:modelU})  as a function of $x$   for $\delta=5$, $x^\ast=1$,  $u_0=10$,  $\mathsf{T}_{\rm env}=1$ and for given values of $f$.  The value of $\gamma$ is immaterial.   Solid lines are plots of $p_{\rm ss}$ obtained from the Eqs.~(\ref{eq:pssT})-(\ref{eq:N}).   The green dotted line  plots the potential $u$ divided by $5$. } \label{fig2}
\end{figure}

  \section{Biased hopping process}   \label{App:hopp}

We determine the splitting probabilities and the moments of the first-passage time $T_X$, defined in Eq.~(\ref{eq:def3}), of the biased hopping process $X$, determined by Eq.~(\ref{eq:randomwalkX}).   We  make use of the decision variable
  \begin{equation}
  D =\left\{ \begin{array}{ccc}1 &{\rm if}&  X(T_X)\geq \ell_+,\\ -1&{\rm if}&  X(T_X)\leq -\ell_- . \end{array}\right.
  \end{equation}

\subsection{Martingales in the biased hopping processes}

The processes
 \begin{equation}
 Z(t) = e^{zX(t) + [ (1-e^{z})k_+ + (1-e^{-z})k_-
]t } 
 \end{equation}
are martingales for all values of $z\in\mathbb{R}$ (see Appendix~\ref{App:A1} for the definition of a martingale).    Indeed, using It\^{o}'s formula for jump processes \cite{protter2005stochastic}, we obtain
   \begin{equation}
  {\rm dZ}(t) =   (e^{z}-1)Z(t)  \left[ {\rm d}N_+(t)  - k_+ {\rm d}t\right] + (e^{-z}-1)Z(t)  \left[  {\rm d}N_-(t) - k_- {\rm d}t \right],
  \end{equation}   
  which is a martingale process as both ${\rm d}N_+(t)  - k_+ {\rm d}t$ and ${\rm d}N_-(t) - k_- {\rm d}t$ are martingales.    In the special case of $z = \ln \frac{k_-}{k_+}$, we obtain that $Z(t) = e^{-S(t)}$ is the exponentiated negative entropy production, which is an example of martingale process \cite{neri2019integral}.

\begin{proposition}[A martingale equality]\label{propMartEq}
If  $k_+>k_-$, then for all $z\in \mathbb{R}\setminus [\ln \frac{k_-}{k_+}, 0]$ it holds that  
   \begin{eqnarray}
1  = \Big \langle 1_{T_X<\infty}1_{D=1} e^{z[\ell_+] + T_X f(z)}   +  1_{T_X<\infty}1_{D=-1} e^{-z[\ell_-] +T_X f(z)}   \Big\rangle,  \label{eq:important2xJump}
\end{eqnarray}
where 
\begin{equation}
f(z) = (1-e^{z})k_+ + (1-e^{-z})k_-,
\end{equation}
and where $[\ell_+]$ and $[\ell_-]$ are the smallest natural numbers that are larger than $\ell_+$ and $\ell_-$, respectively.
\end{proposition} 
\begin{proof}
Since   $Z(t)$ is a martingale,  we can apply Theorem~\ref{TheDoob} to $Z(t\wedge T_X)$ yielding
\begin{eqnarray}
1 = \langle Z(t\wedge T_X)\rangle =   \Big \langle e^{zX(t) + (t\wedge T_X)f(z)  }  \Big\rangle .
\end{eqnarray}   
Since  for $z\in \mathbb{R}\setminus [\ln \frac{k_-}{k_+}, 0]$ it holds that  $f(z)<0$, the  process  $Z(t\wedge T_X)$ is bounded from above, viz.,
\begin{eqnarray}
e^{zX(t\wedge T_X) +(t\wedge T_X) f(z)  }  < e^{z [\ell_+]}
\end{eqnarray} 
 Hence, the bounded convergence theorem applies, see e.g.~Ref.~\cite{tao2011introduction}, and therefore the  limit $t\rightarrow \infty$ can be taken under the expectation value yielding
\begin{eqnarray}
1 &=&  \Big\langle \lim_{t\rightarrow \infty} e^{zX(t\wedge T_X) +  (t\wedge T_X) f(z) } \Big\rangle \\ 
&=&\Big \langle 1_{T_X<\infty}1_{D=1} e^{z[\ell_+] + T_X f(z) }   +  1_{T_X<\infty}1_{D=-1} e^{-z[\ell_-] +T_X f(z) }   \Big\rangle, \nonumber\\
\end{eqnarray}    
which completes the proof of the equality (\ref{eq:important2xJump}).
\end{proof}

In what follows, we repeatedly use the martingale equality of Proposition~\ref{propMartEq} to derive various  properties $T_X$.
\subsection{ The first-passage time $T_X$ is with probability one finite}

\begin{proposition}\label{propTJump} It holds that $T_X$ is almost surely finite, i.e., 
\begin{eqnarray}
p_-+p_+ = 1. \label{eq:finiteJump}
\end{eqnarray} 
\end{proposition}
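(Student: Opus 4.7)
The plan is to derive Eq.~(\ref{eq:finiteJump}) by taking a one-sided limit $z \to 0^+$ in the martingale equality Eq.~(\ref{eq:important2xJump}). Since $k_+ > k_-$, we have $\ln(k_-/k_+) < 0$, so the forbidden interval $[\ln(k_-/k_+), 0]$ does not contain any positive values, and therefore values $z > 0$ are admissible in the martingale identity. Moreover, for such $z$ the function $f(z) = (1-e^{z})k_+ + (1-e^{-z})k_-$ is strictly negative (since $f$ is concave with $f(\ln(k_-/k_+)) = f(0) = 0$ as its only two zeros, and $f(z) < 0$ outside the interval between them), while $f(z) \to 0$ as $z \to 0^+$.

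First I would analyse the pointwise limits of the two summands on $\{T_X < \infty\}$. On this event, $T_X$ is a finite random variable, so $f(z) T_X \to 0$ almost surely and hence $e^{z[\ell_+] + f(z) T_X} \to 1$ and $e^{-z[\ell_-] + f(z) T_X} \to 1$. Consequently the integrand converges pointwise to $1_{T_X<\infty} 1_{D=1} + 1_{T_X<\infty} 1_{D=-1} = 1_{T_X<\infty}$.

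Second, I would justify the interchange of limit and expectation via a dominated convergence argument. For $z \in (0, 1]$, and using that $f(z) \leq 0$ on this interval, the first summand is bounded above by $e^{[\ell_+]}$ and the second by $1$; both bounds are constant and hence integrable with respect to the probability measure. Dominated convergence therefore yields
\begin{equation}
1 \;=\; \lim_{z \to 0^+}\Big\langle 1_{T_X<\infty}1_{D=1} e^{z[\ell_+] + f(z)T_X}   +  1_{T_X<\infty}1_{D=-1} e^{-z[\ell_-] +f(z)T_X}\Big\rangle \;=\; \langle 1_{T_X<\infty}\rangle \;=\; p_+ + p_-,
\end{equation}
which is precisely Eq.~(\ref{eq:finiteJump}).

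The only non-routine step is verifying the dominating function; this requires confirming that $f(z) \leq 0$ on a one-sided neighbourhood of $0$ outside the forbidden interval, which follows from the concavity of $f$ together with $f(0) = 0$ and the fact that $k_+ > k_-$ places the other zero of $f$ strictly to the left of $0$. Everything else is direct manipulation of Doob's optional stopping consequence already stated as Eq.~(\ref{eq:important2xJump}).
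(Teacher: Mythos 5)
Your proof is correct and takes essentially the same route as the paper: take the one-sided limit $z\to 0^+$ in the martingale identity Eq.~(\ref{eq:important2xJump}), note that the integrand is bounded by the constant $e^{[\ell_+]}$ for $z\in(0,1]$ because $f(z)\le 0$ there, and pass to the limit by bounded/dominated convergence using $f(0)=0$. Your write-up is merely a little more explicit than the paper's in spelling out the sign of $f$ via concavity, but the underlying argument is identical.
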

\begin{proof}
We take the the limit $z\rightarrow 0$  in Eq.~(\ref{eq:important2xJump}).    Since for $z\in[0,1]$ the  argument in the expectation value is bounded by $e^{\ell_+}$, the     bounded convergence theorem applies, see e.g.~Ref.~\cite{tao2011introduction}, and 
  \begin{eqnarray}
1  &=&  \lim_{z\rightarrow 0}  \Big\langle  1_{T_X<\infty}1_{D=1} e^{z[\ell_+] + T_X f(z) }   +  1_{T_X<\infty}1_{D=-1} e^{-z[\ell_-] +T_X f(z) }  \Big\rangle  \nonumber\\  
 &=& \langle  1_{T_X<\infty}1_{D=1}+  1_{T_X<\infty}1_{D=-1} \rangle =  \langle 1_{T_X<\infty} \rangle = \mathbb{P}[T_X<\infty], \nonumber
\end{eqnarray}
where we have used that $f(0) = 0$.
\end{proof}

 \subsection{Splitting probabilities} \label{App:SplitE}
 
  \begin{proposition}  \label{prop:Aa}
  It holds  that 
  \begin{eqnarray}
p_+ = \frac{1 -e^{-[\ell_-] \ln\frac{k_+}{k_-}} }{ 1- e^{-([\ell_+] + [\ell_-]) \ln\frac{k_+}{k_-}}}   ,\quad  {\rm and}\quad 
 p_- =e^{-[\ell_-] \ln\frac{k_+}{k_-}}  \frac{1 -e^{-[\ell_+] \ln\frac{k_+}{k_-}} }{ 1- e^{-([\ell_+] + [\ell_-]) \ln\frac{k_+}{k_-}}} \label{eq:P-1App}, 
\end{eqnarray}   
  where  $[\ell_-]$ and $[\ell_+]$ are the smallest natural numbers that  are greater or equal than $\ell_-$ and $\ell_+$, respectively. 
\end{proposition} 
\begin{proof}
We apply Theorem~\ref{TheDoob} to the martingale
\begin{equation}
e^{-S(t)} =   e^{X(t)\ln\frac{k_-}{k_+} },
\end{equation}
yielding 
\begin{eqnarray}
\Big \langle e^{X(t\wedge T_X)\ln\frac{k_-}{k_+} } \Big\rangle = 1.
\end{eqnarray} 
We can split the quantity $\Big \langle e^{X(t\wedge T_X)\ln\frac{k_-}{k_+} } \Big\rangle$  into three terms, corresponding to the events $T_X<t$ and $D(T_X)=1$, $T_X<t$ and $D(T_X)=-1$, and $T_X>t$, yielding  in the limit of large $t$, 
\begin{eqnarray}
\lim_{t\rightarrow \infty}\Big \langle e^{X(t\wedge T_X)\ln\frac{k_-}{k_+} }  \Big\rangle \leq   p_- e^{-[\ell_-] \ln\frac{k_-}{k_+}} + p_+ e^{[\ell_+] \ln\frac{k_-}{k_+}} + (1-p_--p_+)  \langle e^{X(t) \ln\frac{k_-}{k_+}} |T_X>t \rangle .\nonumber\\  \label{eq:lastxxa}
\end{eqnarray}  
Since the last term is greater or equal than zero, it holds that 
\begin{eqnarray}
\lim_{t\rightarrow \infty}  \Big \langle e^{X(t\wedge T_X)\ln\frac{k_-}{k_+} } \Big\rangle \geq  p_- e^{-[\ell_-] \ln\frac{k_-}{k_+}} + p_+ e^{[\ell_+] \ln\frac{k_-}{k_+}} .
\end{eqnarray}
Alternatively, we can  bound from above the last term of Eq.~(\ref{eq:lastxxa}) with the condition $X(t)\geq -\ell_-$ when $T_X>t$, yielding 
\begin{eqnarray}
\lim_{t\rightarrow \infty} \Big \langle e^{X(t\wedge T_X)\ln\frac{k_-}{k_+} }  \Big\rangle \leq   p_- e^{-[\ell_-] \ln\frac{k_-}{k_+}} + p_+ e^{[\ell_+] \ln\frac{k_-}{k_+}} + (1-p_--p_+) e^{-[\ell_-] \ln\frac{k_-}{k_+}} .  
\end{eqnarray}  
According to Proposition~\ref{propTJump}, it holds that 
 $p_- + p_+ = 1$,   and thus
\begin{eqnarray}
 p_- e^{-[\ell_-] \ln\frac{k_-}{k_+}} + p_+ e^{[\ell_+] \ln\frac{k_-}{k_+}} = \lim_{t\rightarrow \infty} \Big \langle e^{X(t\wedge T_X)\ln\frac{k_-}{k_+} }  \Big\rangle = 1. \label{eq:xx}
\end{eqnarray}

The solutions to the Eqs.~(\ref{eq:finiteJump}) and (\ref{eq:xx}) are given by Eqs.~(\ref{eq:P-1App}), which completes the proof.
\end{proof} 

Using  $b = k_-/k_+$ in Eq.~(\ref{eq:P-1App}), we obtain the Eq.~(\ref{eq:pMP}) in the main text.

\subsection{Generating function}\label{App:Gen}
We derive the explicit formula given by  Eqs.~(\ref{eq:genFinalJump})-(\ref{eq:betay}) for the generating function $g(y)$,  as defined   in Eq.~(\ref{eq:gJump}), for $y>0$.

The generating function $g(y)$ can be written as 
\begin{equation}
g(y) = p_+ g_+(y) + p_- g_-(y)
\end{equation}
where $g_+$ and $g_-$ are the conditional generating functions 
\begin{equation}
g_+(y) =  \langle e^{-y T_X(k_-+k_+)} |D = 1 \rangle, \quad  {\rm and} \quad g_-(y) =  \langle e^{-y T_X(k_-+k_+)} |D = -1 \rangle.
\end{equation}

\begin{lemma}  
It holds that 
 \begin{eqnarray}
1  =  \left(\frac{1}{2} \left[(1+b)(1+ y)  + \sqrt{-4b + (1+b)^2 (1+y)^2}\right]\right)^{[\ell_+]}   p_+   g_+(y)   
\nonumber\\ 
+   \left(\frac{1}{2} \left[(1+b)(1+ y)  + \sqrt{-4b + (1+b)^2 (1+y)^2}\right]\right)^{-[\ell_-]}p_-  g_-(y),  \label{eq:important2xJumpTwoa}
\end{eqnarray}
and 
 \begin{eqnarray}
1  =  \left(\frac{1}{2} \left[(1+b)(1+ y)  - \sqrt{-4b + (1+b)^2 (1+y)^2}\right]\right)^{[\ell_+]}   p_+   g_+(y)   
\nonumber\\ 
+   \left(\frac{1}{2} \left[(1+b)(1+ y)  - \sqrt{-4b + (1+b)^2 (1+y)^2}\right]\right)^{-[\ell_-]}p_-  g_-(y).\label{eq:important2xJumpTwox}
\end{eqnarray}

\end{lemma} 
\begin{proof}
We rewrite the relation  (\ref{eq:important2xJump})     for $z\notin [\ln \frac{k_-}{k_+},0]$ as 
 \begin{eqnarray}
1  = e^{z[\ell_+]}   p_+  \langle e^{f(z)T (k_-+k_+)} |D = 1\rangle   +  e^{-z[\ell_-]} p_-  \langle  e^{f(z)T (k_-+k_+)}  |D=-1\rangle.  \label{eq:important2xJumpTwo}
\end{eqnarray}
Setting 
\begin{equation}
y = -f(z) \label{eq:yJump}
\end{equation} 
and solving towards $z$, we obtain  two solutions.  

First, let us consider the   solution branch for $z\geq 0$, which is given by 
\begin{equation}
 z = \ln \left(\frac{1}{2} \left[(1+b)(1+ y)  + \sqrt{-4b + (1+b)^2 (1+y)^2}\right]\right). \label{eq:zJumpJump}
\end{equation} 
Using Eqs.~(\ref{eq:yJump}) and (\ref{eq:zJumpJump}) in (\ref{eq:important2xJumpTwo}), we obtain Eq.~(\ref{eq:important2xJumpTwoa}).

Second, let us  consider the  solution branch for $z\leq \ln b$, namely, 
\begin{equation}
 z = \ln \left(\frac{1}{2} \left[(1+b)(1+ y)  - \sqrt{-4b + (1+b)^2 (1+y)^2}\right]\right).\label{eq:zJumpJumpx}
\end{equation} 
In this case, using Eqs.~(\ref{eq:yJump}) and (\ref{eq:zJumpJumpx}) in (\ref{eq:important2xJumpTwo}), we obtain the Eq.~(\ref{eq:important2xJumpTwox}).

\end{proof}

\begin{proposition} 
The generating function Eq.~(\ref{eq:gJump})  is given by Eqs.~(\ref{eq:genFinalJump})-(\ref{eq:betay}).
  \end{proposition}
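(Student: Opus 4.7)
The plan is to treat the two martingale identities from the preceding lemma, Eqs.~(\ref{eq:important2xJumpTwo}) and (\ref{eq:important2xJumpTwox}), as a $2\times 2$ linear system in the unknowns $p_+g_+(y)$ and $p_-g_-(y)$, solve it by Cramer's rule, and add the two solutions to recover $g(y)=p_+g_+(y)+p_-g_-(y)$. First I would observe that the coefficients appearing in the two martingale identities are exactly $(\zeta_+(y)/2)^{[\ell_+]}$, $(\zeta_+(y)/2)^{-[\ell_-]}$ and $(\zeta_-(y)/2)^{[\ell_+]}$, $(\zeta_-(y)/2)^{-[\ell_-]}$, with $\zeta_\pm(y)=\beta(y)\pm\sqrt{-4b+\beta^2(y)}$ and $\beta(y)=(1+y)(1+b)$, because the two solution branches of $-f(z)=y$ are precisely $z=\ln(\zeta_\pm(y)/2)$.

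Introducing the abbreviations $u=\zeta_+(y)/2$, $v=\zeta_-(y)/2$, $A=p_+g_+(y)$ and $B=p_-g_-(y)$, the system reduces to
\begin{equation}
u^{[\ell_+]}A+u^{-[\ell_-]}B=1,\qquad v^{[\ell_+]}A+v^{-[\ell_-]}B=1.
\end{equation}
Its determinant $u^{[\ell_+]}v^{-[\ell_-]}-v^{[\ell_+]}u^{-[\ell_-]}$ is nonzero whenever $\zeta_+\neq \zeta_-$, i.e.\ whenever $-4b+\beta^2(y)>0$, which holds for all $y>0$ since $b<1$. Cramer's rule then yields
\begin{equation}
A=\frac{v^{-[\ell_-]}-u^{-[\ell_-]}}{u^{[\ell_+]}v^{-[\ell_-]}-v^{[\ell_+]}u^{-[\ell_-]}},\qquad
B=\frac{u^{[\ell_+]}-v^{[\ell_+]}}{u^{[\ell_+]}v^{-[\ell_-]}-v^{[\ell_+]}u^{-[\ell_-]}}.
\end{equation}

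Next I would multiply numerator and denominator of each fraction by suitable powers of $u$ and $v$ to cast them in the form given in the proposition. Specifically, multiplying $A$'s numerator and denominator by $u^{[\ell_-]}v^{[\ell_-]}/v^{[\ell_-]}$ and factoring $u^{[\ell_+]+[\ell_-]}$ out of the denominator turns $A$ into $u^{-[\ell_+]}[1-(v/u)^{[\ell_-]}]/[1-(v/u)^{[\ell_-]+[\ell_+]}]$, which equals $(2/\zeta_+(y))^{[\ell_+]}$ times the first fraction in Eq.~(\ref{eq:genFinalJump}). Similarly, multiplying $B$'s numerator and denominator by $u^{[\ell_-]}v^{[\ell_-]}/u^{[\ell_-]}$ and factoring $u^{[\ell_+]+[\ell_-]}$ in the denominator gives $B=v^{[\ell_-]}[1-(v/u)^{[\ell_+]}]/[1-(v/u)^{[\ell_-]+[\ell_+]}]$, i.e.\ $(\zeta_-(y)/2)^{[\ell_-]}$ times the second fraction in Eq.~(\ref{eq:genFinalJump}). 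Adding $A$ and $B$ then produces Eq.~(\ref{eq:genFinalJump}).

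The only non-routine point is checking the invertibility of the linear system and the validity domain in $y$: for $y>0$ the discriminant $-4b+\beta^2(y)$ is strictly positive (since $b<1$ and $\beta(0)=1+b>2\sqrt{b}$), so $\zeta_+(y)\neq\zeta_-(y)$, the determinant does not vanish, and the solution is unique; $g(y)$ for $y>0$ is thereby uniquely determined, which is all that is needed since $T_X(k_-+k_+)$ is a nonnegative random variable and its Laplace transform is analytic on $\{y>0\}$. I do not expect a real obstacle beyond the bookkeeping in the Cramer's-rule manipulation.
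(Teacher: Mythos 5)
Your proof is correct and follows the same route as the paper, which simply states that Eq.~(\ref{eq:genFinalJump}) is found ``readily by solving the Eqs.~(\ref{eq:important2xJumpTwox})--(\ref{eq:important2xJumpTwo})''; you have merely made explicit the $2\times 2$ linear solve (via Cramer's rule) and the subsequent algebraic rearrangement. The only blemish is expository: to obtain the stated forms you should multiply both numerator and denominator of $A$ and of $B$ by $v^{[\ell_-]}$ and then factor $u^{[\ell_+]}$ (not $u^{[\ell_+]+[\ell_-]}$) out of the denominator, but the final expressions you write down are correct.
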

  \begin{proof}  
  We find Eq.~(\ref{eq:genFinalJump}) readily by solving the Eqs.~(\ref{eq:important2xJumpTwoa})-(\ref{eq:important2xJumpTwox}).
  \end{proof}
  
  \subsection{Moments of the first-passage times $T_X$} 
  The moments of first passage times follow  from taking explicitly the  derivatives in Eq.~(\ref{eq:TMom}).
  
 The first moment is given by 
    \begin{eqnarray}
    \langle  T_X \rangle =   \frac{[\ell_+] p_+ - [\ell_-] p_-}{k_+-k_-}. \label{eq:TX}
\end{eqnarray} 

The second moment is given by 
   \begin{eqnarray}
(k_+-k_-)^2  \langle  T^2_X \rangle &=&   \frac{p_+}{1-b^{[\ell_-]+[\ell_+]}}  \left([\ell_+]^2 + [\ell_+]  \tanh^{-1}\left(\frac{a}{2\mathsf{T}_{\rm env}} \right) \right)  - [\ell_-]^2 p_- \left(\frac{3 + b^{[\ell_-]+[\ell_+]}}{1-b^{[\ell_-]+[\ell_+]}}\right) \nonumber\\ 
    && +  \frac{p_+ b^{[\ell_-]+[\ell_+]}}{1-b^{[\ell_-]+[\ell_+]}}  \left(3 [\ell_+]^2 - [\ell_+]  \tanh^{-1}\left(\frac{a}{2\mathsf{T}_{\rm env}} \right)\right) \nonumber\\
    && 
    +[\ell_-] \tanh^{-1}\left(\frac{a}{2\mathsf{T}_{\rm env}} \right)\frac{b^{2[\ell_-]+[\ell_+]}(1-b^{[\ell_+]})}{(1-b^{[\ell_-]+[\ell_+]})^2}  - 4 [\ell_+][\ell_-]  \frac{b^{2[\ell_-]+[\ell_+]}}{(1-b^{[\ell_-]+[\ell_+]})^2}  \nonumber\\
      &&  + \left([\ell_-] \tanh^{-1}\left(\frac{a}{2\mathsf{T}_{\rm env}} \right)+8[\ell_-][\ell_+]\right) \frac{b^{[\ell_-]}b^{[\ell_+]}}{(1-b^{[\ell_-]}b^{[\ell_+]})^2} \nonumber\\ 
  &&     -  [\ell_-] \left(\tanh^{-1}\left(\frac{a}{2\mathsf{T}_{\rm env}} \right)+4 [\ell_+]\right)\frac{b^{[\ell_-]}}{(1-b^{[\ell_-]}b^{[\ell_+]})^2}, \label{eq:TX2}
\end{eqnarray} 
where we have used the notation $\tanh^{-1}\left(\frac{a}{2\mathsf{T}_{\rm env}} \right) = 1/\tanh\left(\frac{a}{2\mathsf{T}_{\rm env}} \right)$.

We avoid writing down the expression for  $\langle T^3_X\rangle$, as it is  even lengthier    than $\langle T^2_X\rangle$.

  \subsection{Moments of the first-passage times $T_X$ in the case of symmetric thresholds}\label{App:E6} 
  We derive the formulae used to     plot the lines in the Fig.~\ref{fig2M} of the main text.
  
In the specific case where $\ell_+ = \ell_- = \ell$, we obtain the  simpler expression 
    \begin{eqnarray}
g(y) = \frac{ 2^{[\ell]} +2^{-[\ell]} \left(\beta(y) - \sqrt{-4 \frac{k_-}{k_+} +\beta^2(y)}\right)^{[\ell]}  \left(\beta(y)+ \sqrt{-4 \frac{k_-}{k_+}  +\beta^2(y)}\right)^{[\ell]}  }{\left(\beta(y) - \sqrt{-4 \frac{k_-}{k_+}  +\beta^2(y)}\right)^{[\ell]} + \left(\beta(y)+ \sqrt{-4  \frac{k_-}{k_+} + \beta^2(y)}\right)^{[\ell]} }\label{eq:genFinalJump2}
\end{eqnarray} 
for the generating function, where $\beta(y)$ is defined in Eq.~(\ref{eq:betay}).

In this case, the mean first-passage time is given by
    \begin{eqnarray}
    \langle T_X \rangle =  \frac{ [\ell]}{k_+-k_-}\frac{1-b^{[\ell]}}{1+b^{[\ell]}}. \label{eq:Tx1ab}
\end{eqnarray} 
its  second moment
  \begin{eqnarray}
    \langle T^2_X \rangle =  [\ell] \frac{[\ell]   + \frac{ k_+ +  k_- }{ k_+-k_-}  - 6 [\ell] b^{[\ell]}  + b^{2[\ell]} \left( [\ell] -\frac{ k_+ +  k_- }{ k_+-k_-}  \right)  }{\left(k_+-k_-\right)^2 \left(1 +b^{[\ell]} \right)^2} \label{eq:Tx2ab}
     \end{eqnarray}  
     and its third moment 
       \begin{eqnarray}
    \langle T^3_X \rangle &=& \frac{[\ell]}{k^3_+(1-b)^5(1 + b^{[\ell]})^3} \Big\{ 2 + 8 b + 2 b^2 + 3[\ell] (1-b^2) + [\ell]^2(1-b)^2   \nonumber\\ 
  &&  \left.  + b^{[\ell]} (2 + 2 b (4 + b) + 15 (-1 + b^2) [\ell] - 23 (-1 + b)^2 [\ell]^2)  \right. \nonumber\\ 
  && +    b^{2 [\ell]} (-2 - 2 b (4 + b) + 15 (-1 + b^2) [\ell] + 23 (-1 + b)^2 [\ell]^2) \nonumber\\ 
  &&  +  b^{3 [\ell]}  (2 + 2 b (4 + b) + 3 (-1 + b^2)[\ell] + (-1 + b)^2 [\ell]^2) \Big\}.\label{eq:Tx3ab}
          \end{eqnarray} 
Formulae (\ref{eq:Tx1ab})-(\ref{eq:Tx3ab}) are plotted in Fig.~\ref{fig2M} of the main text.
     
One readily verifies the thermodynamic uncertainty relation
    \begin{eqnarray}
   \lim_{[\ell]\rightarrow\infty}\frac{   \langle T^2_X \rangle  -     \langle T_X \rangle^2}{  \langle T_X \rangle } =  \frac{k_++k_-}{ (k_+-k_-)^2} \geq \frac{2}{ (k_+-k_-)\log\frac{k_+}{k_-} }
    \end{eqnarray} 
    where we used the fact that $\log(x) \geq  \frac{x-1}{x}\geq \frac{x-1}{x+1}$ with $x = k_+/k_-$.

 \subsection{Asymptotic expressions for  large thresholds}\label{App:E8}
We determine the splitting probabilities and the first two  moments of $T_X$ in  the limit $\ell_{+},\ell_{-}\gg 1$ with the ratio $\ell_+/\ell_-$ fixed to a constant value.  In particular, we derive the Eq.~(\ref{eq:T1Asympt}) and the Eq.~(\ref{eq:T2Asympt}) for the specific cases of $n=1$ and $n=2$.

From Eqs.~(\ref{eq:pMP}), we obtain for the splitting probabilities that
\begin{equation}
p_-  =  b^{[\ell_-]} +O(b^{[\ell_+]+[\ell_-]}), \quad {\rm and} \quad p_+ =   1+O(b^{[\ell_-]}). \label{eq:pAsympt}
\end{equation}

Equation (\ref{eq:TX}) implies that the mean first-passage time  
\begin{equation}
\langle T_X\rangle = \frac{[\ell_+]}{k_+-k_-} \left(1 +O(b^{[\ell_{-}]}) \right),\label{eq:TAsympt}
\end{equation}  
and from Eq.~(\ref{eq:TX2}) it follows that the second moment 
\begin{equation}
\langle T^2_X\rangle = \frac{ [\ell_+]^2}{(k_+-k_-)^2}  \left(1 + \frac{1}{[\ell_+]\tanh\left(\frac{a}{2\mathsf{T}_{\rm env}} \right)} +O(b^{[\ell_{-}]})   \right). \label{eq:TAsympt2}
\end{equation}  

The Eqs.~(\ref{eq:pAsympt}) and (\ref{eq:TAsympt})  imply that  
\begin{equation}
\frac{[\ell_+]}{[\ell_-]}\frac{|\log p_-|  }{\langle T_X \rangle} = \frac{a}{\mathsf{T}_{\rm env}}\frac{1}{k_+-k_-} (1+O(b^{[\ell_-]})).
\end{equation}  
We recognise in the above formula the entropy production rate $\dot{s}$ given by Eq.~(\ref{eq:sigmaxM}),
and thus
\begin{equation}
\frac{[\ell_+]}{[\ell_-]}\frac{|\log p_-|  }{\langle T_X \rangle} = \dot{s}  +O(b^{[\ell_-]}).
\end{equation}   
Analogously, Eqs.~(\ref{eq:pAsympt}) and (\ref{eq:TAsympt2})  imply that  
\begin{equation}
\frac{[\ell_+]}{[\ell_-]}\frac{|\log p_-|  }{\sqrt{\langle T^2_X \rangle}} = \dot{s} + O\left(\frac{1}{[\ell_+]}\right). \label{eq:asymptT22}
\end{equation}  

The thermodynamic uncertainty relation is governed by the subleading $O\left(1/[\ell_+]\right)$ term in Eq.~(\ref{eq:asymptT22}).   Using  Eqs.~(\ref{eq:pAsympt}) and (\ref{eq:TAsympt2}), we obtain the Eq.~(\ref{eq:T2Unc}) in the main text.
Since, 
\begin{equation}
\frac{1}{\tanh(x/2)} \geq \frac{2}{x}
\end{equation}  
the thermodynamic uncertainty relation  \cite{gringich2017bis}
\begin{equation}
\frac{ 2\langle T_X\rangle}{\langle T^2_X\rangle-\langle T_X\rangle^2} \geq \dot{s}
\end{equation}  
holds.  

In order to find asymptotic expressions for the higher order moments, we analyze in the next subsection the probability distribution of $T_X$ in the limit of large thresholds $\ell_-$ and $\ell_+$.

\subsection{Probability distribution in the asymptotic limit $\ell_{\pm}\rightarrow \infty$} \label{App:E9}
In the present appendix, we derive  the asymptotic formula (\ref{eq:T2Asympt}) for the moments of $T_X$.
In order to derive asymptotic expressions for the moments $\langle T^n_X\rangle$ with $n>2$, we determine the probability distribution in this limit.

Using that $\zeta_-<\zeta_+$, where $\zeta_{\pm}$ are defined in Eq.~(\ref{eq:zetapm}), we obtain  in the limit  $\ell_{\rm min}\rightarrow\infty$  for the generating function given by Eq.~(\ref{eq:genFinalJump}) the formula
  \begin{eqnarray}
g(y) =  \left(\frac{2}{\zeta _+(y)}\right)^{[\ell_+]} \left(1 +O\left(\left(\frac{\zeta _-(y)}{ \zeta _+(y) }\right)^{[\ell_-]}\right)\right) 
  +  \left(\frac{\zeta _-(y)}{2}\right)^{[\ell_-]}   \left(1 +O\left(\left(\frac{\zeta _-(y)}{ \zeta _+(y) }\right)^{[\ell_-]}\right)\right),\nonumber\\  \label{eq:genFinalJumpxx}
\end{eqnarray}    
which can be further simplified into
  \begin{eqnarray}
g(y) =  \left(\frac{2}{\zeta _+(y)}\right)^{[\ell_+]}  + O(b^{[\ell_-]}).
\end{eqnarray}

Considering that $T$ will be large when  both $[\ell_+]$ and $[\ell_-]$ are large, we use that $y\sim \frac{1}{[\ell_{\rm min}]}$.   Therefore, 
\begin{equation}
\zeta_{+}(y) = 2 +  2 \frac{1+b}{1-b} y + O(y^2).
\end{equation} 
Taking the inverse Laplace transform, we obtain up to leading order
\begin{eqnarray}
p_{T_X}(t) =\frac{( (k_++k_-)  t)^{[\ell_+]-1}}{\Gamma([\ell_+])} \left( \frac{1-b}{1+b}\right)^{[\ell_+]}e^{-t (k_++k_-) \frac{1-b}{1+b} }  + O(b^{[\ell_-]}),
\end{eqnarray}
which is the Gamma distribution with shape parameter $[\ell_+]$ and rate $(1-b)/(1+b)$.

If we introduce a new variable,
 \begin{eqnarray}
\tau  =\frac{( k_++k_-)t}{[\ell_+]}, 
\end{eqnarray}
then we  get 
\begin{eqnarray}
p_{\frac{( k_-+k_+)T_X}{[\ell_+]}}(\tau) \sim    \exp\left(-[\ell_+] I(\tau)   + O_{[\ell_+]}(1)\right)  + O(b^{[\ell_-]})
\end{eqnarray}
with the  large deviation function  
\begin{equation}
I(\tau) = \frac{1-b}{1+b} \tau - \log \left(  \tau\right) - \log \frac{1-b}{1+b}-1.
\end{equation}
The minimum of $I$ is reached when
\begin{equation}
\tau^\ast =   \frac{1+b}{1-b} 
\end{equation}
in which case $I(\tau^\ast) = 0$.  Expanding $I(\tau)$ around $\tau^\ast$ we obtain 
\begin{equation}
I(\tau) = \frac{\left(  \tau- \frac{1+b}{1-b}\right)^2}{2 \left(\frac{1+b}{1-b}\right)^2 } + O(\tau^3). \label{eq:I}
\end{equation}

Hence, the distribution of $p_{T_X}$ is
\begin{eqnarray}
p_{\frac{( k_++k_-)T_X}{[\ell_+]}}(\tau) =\sqrt{\frac{[\ell_+]}{2\pi (\tau^\ast)^2}}      \exp\left(-[\ell_+]\frac{(\tau-\tau^\ast)^2}{2  (\tau^\ast)^2} + O(\tau^2)\right) + O(b^{[\ell_-]}). \label{eq:Txx}
\end{eqnarray}
For large $[\ell_+]$, the distribution  $p_{\frac{( k_++k_-)T_X}{[\ell_+]}}(\tau)$  is centered around $ \tau=\tau^\ast$, and therefore  $\frac{( k_++k_-)T_X}{[\ell_+]}$ is a deterministic variable in this limit.   The moments of $T$ are thus  given by
\begin{equation}
\langle T^n_X\rangle =[\ell_+]^n \frac{(\tau^\ast)^n}{(k_++k_-)^n} + O([\ell_+]^{n-1})=  \frac{[\ell_+]^n}{(k_+-k_-)^n} + O([\ell_+]^{n-1}). 
\end{equation} 
Using  the formula for $p_-$, given by Eq.~(\ref{eq:pAsympt}), and the expression for $\dot{s}$ in Eq.~(\ref{eq:sigmaxM}), we recover   
\begin{equation}
\frac{[\ell_+]}{[\ell_-]}\frac{|\log p_-|  }{\left(\langle T^n_X \rangle\right)^{1/n}} = \dot{s} + O\left(\frac{1}{[\ell_+]}\right),\label{eq:universal}
\end{equation}
which is also the formula (\ref{eq:T2Asympt}) that we were meant to derive.

Note that obtaining an explicit expression for the  $1/[\ell_+]$ correction terms in Eq.~(\ref{eq:T2Asympt})  is more complicated as we need to determine the subleading order terms in Eq.~(\ref{eq:I}), which depend on $b$ and are process dependent.     Hence, the moments $\langle T^n_X\rangle$ converge for large thresholds to the universal limit given by Eq.~(\ref{eq:universal}) since they are governed by the leading order term in the asymptotic behaviour of $T_X$.  On the other hand, the  Fano factor  
 \begin{equation}
\frac{ \langle T^2_X\rangle-\langle T_X\rangle^2}{\langle T_X\rangle}
\end{equation} 
characterising uncertainty 
depends on the subleading terms and will therefore not converge to a universal limit when the thresholds diverge.   This clarifies on an example why  the first-passage time relations in the present paper are equalities  for $J=S$, while  this is not the case for thermodynamic uncertainty relations.

  \section{Splitting probabilities of currents $J$ in Markov jump processes}   \label{derivP}   
  We derive the equality (\ref{eq:pp}) for the splitting probabilities of currents $J$ in Markov jump processes $X$ taking values in a  finite set $\mathcal{X}$ in the limit when the thresholds $-\ell_-$ and $\ell_+$ are large.  Equation~(\ref{eq:pp})  contains the splitting probability
  \begin{equation}
  p_- = \mathbb{P} \left[J(T_J(-\ell_-,\ell_+))<-\ell_-\right] 
  \end{equation}
  to hit the negative boundary first, and the splitting probability 
    \begin{equation}
  p^\dagger_+ = (\mathbb{P}\circ\Theta) \left[J(T_J(-\ell_-,\ell_+))>\ell_+\right] 
  \end{equation}
  to hit the positive boundary first in the time-reversed dynamics.   We have written explicitly the dependencies on the thresholds in $T_J(-\ell_-,\ell_+)$, as this will be relevant when considering time-reversal arguments.   In particular, since the   map $\Theta$ changes the sign of $J$, it holds that 
      \begin{equation}
  p^\dagger_+ = \mathbb{P} \left[J(T_J(-\ell_+,\ell_-))<-\ell_+\right]  .
  \end{equation}

\subsection{Splitting probabilities from a random walk process on the real line  }
To determine the splitting probabilities $p_-$ and $p^\dagger_+$, we define the discrete-time process
\begin{equation}
J_n = J(n\Delta t) \label{eq:discrete}
\end{equation}
with $n\in\mathbb{N}$.  When $\Delta t$ is large enough, then the increments $J_n-J_{n-1}$ are independent and identically distributed random variables drawn from a probability distribution $p_{\Delta_J}(\Delta_j)$, and for Markov jump processes the distribution  $p_{\Delta_J}$ is supported on a   discrete subset of the real line.     Hence, for large threshold values, the splitting probability $p_-$ of $J$ is equal to the splitting probability of the following random walk process defined on the real line, 
\begin{equation}
{\rm d}K = \sum_{j\in \mathbb{Z}} \Delta_j {\rm d}N_j,
\end{equation}
where ${\rm d}N_j$ is a Poisson process with   rate $k_j = p_{\Delta J}(j)$, where $\Delta_j = -\Delta_{-j}$, and $\Delta_0 = 0$.    We can thus write
 \begin{equation}
  p_- = \mathbb{P}\left[ K(T_K(-\ell_-,\ell_+))<-\ell_-\right]. 
  \end{equation} 
  Since, by definition, the current $J$ changes sign under time reversal, the time-reversed process is  
  \begin{equation}
{\rm d}K^\dagger = -\sum_{j\in \mathbb{Z}} \Delta_{j} {\rm d}N_{j}.
\end{equation}
  and 
     \begin{equation}
  p^\dagger_+ = \mathbb{P}\left[K(T_K(-\ell_+,\ell_-))<-\ell_+\right]  . \label{eq:PDaggerP}
  \end{equation}

In the remaining part of this appendix, we determine, using an approach similar to the one presented in Appendix~\ref{App:hopp},  the splitting probabilities $p_-$ and $p^\dagger_+$  of the  first-passage time $T_K$ in the random walk process $K$, which  for large threshold values $-\ell_-$ and $\ell_+$ are identical to those of $J$.     Consequently, we use the obtained expressions for the splitting probabilities to  demonstrate that the equality (\ref{eq:pp})  is valid in the limit of large thresholds.

  In the calculations we repeatedly use the decision variable
  \begin{equation}
  D_K = {\rm sign}\left(K(T_K)-K(0)\right).
  \end{equation}.

\subsection{Martingales related to $K$}

The processes
 \begin{equation}
 Z(t) = e^{zK(t) +  tf(z)
 }  \label{eq:martingalezK}
 \end{equation}
 with 
 \begin{equation}
 f(z)=\sum_{j\in \mathbb{Z}} (1-e^{z\Delta_j})k_j 
 \end{equation}
are martingales for all values of $z\in\mathbb{R}$.    Indeed, applying It\^{o}'s formula for jump processes \cite{protter2005stochastic} to the latter equation, we obtain
   \begin{equation}
  {\rm dZ}(t) =  \sum_{j\in \mathbb{Z}} (e^{z \Delta_j}-1)Z(t)  \left[ {\rm d}N_j(t)  - k_j {\rm d}t\right] ,
 \end{equation}   
  which is a martingale process as the ${\rm d}N_j(t)  - k_j{\rm d}t$  are martingales.   \\

  We also define the time-reversed processes 
    \begin{equation}
  {\rm d}Z^\dagger(t) =  \sum_{j\in\mathbb{Z}} (e^{-z \Delta_j}-1)Z(t)   \left[ {\rm d}N_j(t)  - k_j {\rm d}t\right]
  \end{equation}   
  that run backwards ($\Delta_j \rightarrow -\Delta_j$).   Note that the time-reversed process $Z^\dagger$ is related to $Z$  by $z\leftrightarrow -z$.

\begin{proposition}[A martingale equality] 
For all values $z\in \mathbb{R}$ for which $f(z) < 0$,
   \begin{eqnarray}
1  = \Big \langle 1_{T_K<\infty}1_{D_K=1} e^{z\ell_+ (1+o_{\ell_{\rm min}}(1)) + T_K f(z)}   +  1_{T_K<\infty}1_{D_K=-1} e^{-z\ell_-(1+o_{\ell_{\rm min}}(1))  +T_K f(z)}   \Big\rangle .  \label{eq:important2xJumpa}
\end{eqnarray}
\end{proposition} 
\begin{proof}
Since   $Z(t)$ is a martingale,  we can apply Theorem~\ref{TheDoob} to $Z(t\wedge T_K)$ yielding
\begin{eqnarray}
1 = \langle Z(t\wedge T_K)\rangle =   \Big \langle e^{zK(t\wedge T_K) + (t\wedge T_K) f(z)  }  \Big\rangle .
\end{eqnarray}   
Since  $f(z)<0$,
\begin{eqnarray}
e^{zK(t\wedge T_K) + (t\wedge T_K)  f(z) }  < e^{z \ell_+ (1+o_{\ell_{\rm min}}(1))}.
\end{eqnarray} 
 Hence, the bounded convergence theorem applies, see e.g.~Ref.~\cite{tao2011introduction}, and we can take the limit $t\rightarrow \infty$ under the expectation value to obtain
\begin{eqnarray}
1 &=&  \langle \lim_{t\rightarrow \infty} e^{zK(t\wedge T_K) + (t\wedge T_K)  f(z) } \rangle \\ 
&=&\Big \langle 1_{T_K<\infty}1_{D_K=1} e^{z K(T_K) + T_K f(z) }   +  1_{T_K<\infty}1_{D_K=-1} e^{zK(T_K)  +T_Kf(z) }   \Big\rangle.  \nonumber\\ \label{eq:DoobK}
\end{eqnarray}    
For large threshold values $-\ell_-$ and $\ell_+$, we have that 
\begin{eqnarray}
K(T_K)  = -\ell_- (1+ o_{\ell_{\rm min}}(1))&{\rm if}&  D_K=-1 \label{K:1}
\end{eqnarray}
and 
\begin{eqnarray}
K(T_K)  = \ell_+  (1+ o_{\ell_{\rm min}}(1))&{\rm if}&  D_K=1.  \label{K:2}
\end{eqnarray}
Substitution of Eqs.~(\ref{K:1}) and (\ref{K:2}) in Eq.~(\ref{eq:DoobK})  gives readily the equality (\ref{eq:important2xJumpa}), which completes the proof.
\end{proof}

In what follows, we use the martingale equality (\ref{eq:important2xJumpa}) to  determine the splitting probabilities $p_-$ and $p^\dagger_+$ of $T_J$.
\subsection{ The first-passage time $T_K$ is with probability one finite}

\begin{proposition}\label{propTJumpa} It holds that $T_K$ is almost surely finite, i.e., 
\begin{eqnarray}
p_-+p_+ = 1. \label{eq:finiteJumpa}
\end{eqnarray} 
\end{proposition}
\begin{proof}
We take the the limit $z\rightarrow 0$  in Eq.~(\ref{eq:important2xJumpa}).    Since for $z\in[0,1]$ the  argument in the expectation value is bounded from above by $e^{\ell_+(1+o_{\ell_{\rm min}}(1))}$, the     bounded convergence theorem applies, see e.g.~Ref.~\cite{tao2011introduction}, and 
  \begin{eqnarray}
1  &=&  \lim_{z\rightarrow 0}  \Big\langle  1_{T_K<\infty}1_{D_K=1} e^{z\ell_+(1+o_{\ell_{\rm min}}(1)) + f(z) T_K }   +  1_{T_K<\infty}1_{D_K=-1} e^{-z\ell_-(1+o_{\ell_{\rm min}}(1)) +f(z) T_K }  \Big\rangle \nonumber\\  
 &=& \langle  1_{T_K<\infty}1_{D_K=1}+  1_{T_K<\infty}1_{D_K=-1} \rangle  \nonumber\\ &=&  \langle 1_{T_K<\infty} \rangle  = \mathbb{P}(T_K<\infty), \nonumber
\end{eqnarray}
where we have used  $f(0) = 0$.
\end{proof}

 \subsection{Derivation of the Eq.~(\ref{eq:pp})  for the splitting probabilities} \label{App:SplitEa}

  \begin{proposition}   \label{anotherProp}
  
  If $z^\ast$ is a nonzero solution to the equation 
  \begin{equation}
  f(z^\ast)= \sum_{j\in\mathbb{Z}} (1-e^{z^\ast \Delta_j})k_j = 0, \label{eq:fzAsta}
  \end{equation}
  then 
  \begin{eqnarray}
p_+ = \frac{1 -e^{\ell_- z^\ast (1+o_{\ell_{\rm min}}(1))} }{ 1- e^{(\ell_+ + \ell_-) z^\ast (1+o_{\ell_{\rm min}}(1))}}   \label{eq:P-1Appax}
\end{eqnarray}   
and
  \begin{eqnarray}
 p_- =e^{\ell_- z^\ast(1+o_{\ell_{\rm min}}(1))}  \frac{1 -e^{\ell_+z^\ast (1+o_{\ell_{\rm min}}(1))} }{ 1- e^{(\ell_+ + \ell_-)z^\ast (1+o_{\ell_{\rm min}}(1))}}   . \label{eq:P-1Appa}
\end{eqnarray}   
\end{proposition} 
\begin{proof}  
The proof is similar to the one of Proposition~\ref{prop:Aa}.     The process $e^{K(t)z^\ast}$ is a martingale as it is of the form Eq.~(\ref{eq:martingalezK}).
Applying Theorem~\ref{TheDoob} to the martingale $e^{K(t)z^\ast}$
yields 
\begin{eqnarray}
\Big \langle e^{K(t\wedge T_K)z^\ast} \Big\rangle = 1.
\end{eqnarray}  
For large $t$, we obtain the upper bound
\begin{eqnarray}
\lefteqn{\lim_{t\rightarrow \infty} \Big \langle e^{K(t\wedge T_K)z^\ast}  \Big\rangle } && \nonumber\\ 
&&  \leq    p_- e^{-\ell_- z^\ast (1+o_{\ell_{\rm min}}(1))} + p_+ e^{\ell_+ z^\ast (1+o_{\ell_{\rm min}}(1))} + (1-p_--p_+) e^{-\ell_- z^\ast(1+o_{\ell_{\rm min}}(1))} ,  \nonumber\\
\end{eqnarray}
where we have  made use of the  Eqs.~(\ref{K:1}) and (\ref{K:2}) to replace $K$ at the stopping time with either $-\ell_-$ or $\ell_+$, and we have bounded $K(t)\geq -\ell_-$.    Similarily, we obtain the lower bound 
\begin{eqnarray}
\lim_{t\rightarrow \infty}  \Big \langle e^{X(t\wedge T_X)z^\ast } \Big\rangle \geq   p_- e^{-\ell_-z^\ast (1+o_{\ell_{\rm min}}(1))} + p_+ e^{\ell_+ z^\ast(1+o_{\ell_{\rm min}}(1))}.
\end{eqnarray}

According to Proposition~\ref{propTJumpa}, it holds that 
 $p_- + p_+ = 1$,   and thus
\begin{eqnarray}
p_- e^{-\ell_- z^\ast (1+o_{\ell_{\rm min}}(1))} + p_+ e^{\ell_+ z^\ast (1+o_{\ell_{\rm min}}(1))} = 1. \label{eq:xxa}
\end{eqnarray}

The solutions to the Eqs.~(\ref{eq:finiteJumpa}) and (\ref{eq:xxa}) are given by Eqs.~(\ref{eq:P-1Appax}) and (\ref{eq:P-1Appa}), which completes the proof.
\end{proof}

\begin{proposition}  
  If $z^\ast$ is a nonzero solution to the equation 
  \begin{equation}
  f(z^\ast)= \sum_{j\in\mathbb{Z}} (1-e^{z^\ast \Delta_j})k_j = 0, \label{eq:fzAsta}
  \end{equation}
  then 
  \begin{eqnarray}
p^\dagger_+ = e^{\ell_+ z^\ast (1+o_{\ell_{\rm min}}(1))  }  \frac{1 -e^{\ell_- z^\ast (1+o_{\ell_{\rm min}}(1))  } }{ 1- e^{(\ell_+ + \ell_-) z^\ast (1+o_{\ell_{\rm min}}(1))  }} \label{eq:P-2Appax}
\end{eqnarray}   
and 
  \begin{eqnarray}
 p^\dagger_- = \frac{1 -e^{\ell_+z^\ast (1+o_{\ell_{\rm min}}(1))  } }{ 1- e^{(\ell_+ + \ell_-)z^\ast (1+o_{\ell_{\rm min}}(1))  }}   .\label{eq:P-2Appa}
\end{eqnarray}   
\end{proposition}
\begin{proof}  
Applying the Proposition~\ref{anotherProp} to the  Eq.~(\ref{eq:PDaggerP}),  and using the fact that $z^\ast$ is independent of the threshold values $\ell_-$ and $\ell_+$, we readily obtain the equalities~(\ref{eq:P-2Appax}) and (\ref{eq:P-2Appa}).
\end{proof}

  \begin{proposition}   
The splitting probabilities $p_-$ and $p^\dagger_+$ obey the equality Eq.~(\ref{eq:pp}).
  \end{proposition}   
\begin{proof}  
Eq.~(\ref{eq:pp}) follows readily from Eqs.~(\ref{eq:P-1Appa}) and (\ref{eq:P-2Appax}). 

\end{proof}

\bibliography{biblio}

\end{document}